\definecolor{Gray}{gray}{0.85}
\newcommand{\setvars}[1]{\ensuremath{\bar{#1}}}
\newcommand{\dfa}[5]{\ensuremath{\langle#1{,}#2{,} #3{,}#4{,}#5\rangle}}
\newcommand{\savespace}{\vspace{-2mm}}
\newcommand{\saveone}{\vspace{-1mm}}
\newcommand{\RA}{\ensuremath{\mathcal{R}}}
\mathchardef\mhyphen="2D
\newcommand{\seppredF}[2]{\ensuremath{#1(#2)}}
\newcommand{\seppred}[2]{\ensuremath{#1(#2)}}
\newcommand{\base}{\ensuremath{{\phi^b}}}
\newcommand{\report}[1]{ }
\newcommand{\acm}[1]{ }
\newcommand{\hide}[1]{}
\newcommand{\hideie}[1]{}
\newcommand{\pure}{\ensuremath{\pi}}
\newcommand{\astart}{\ensuremath{\Upsilon}}
\newcommand{\subterm}{\ensuremath{\Lambda}}
\newcommand{\myit}[1]{\textit{#1}}
\def\FV{\myit{FV}}
\def\fresh{\myit{fresh}}
\def\int{\code{int}}
\def\true{\code{true}\,}
\def\false{\code{false}\,}
\def\a{a}
 \newcommand{\code}[1]{{\small {\ensuremath{\tt #1}}}}
\newcommand{\btt}[1]{{\ensuremath{\tt #1}}}
\newcommand{\locnay}[1]{}
\newcommand{\wnnay}[1]{}
\newcommand{\longnay}[1]{}
\newcommand{\nodo}[1]{}
\newtheorem{thm}{Theorem}[section]
\numberwithin{thm}{section} 
\newtheorem{cor}[thm]{Corollary}
\newtheorem{defn}{Definition}
\def\SStore{\myit{SStacks}}
\def\IStore{\myit{ZStacks}}
\def\Stack{\myit{Stacks}}
\def\iffs{\small \btt{ iff~}}
\def\fresh{\myit{fresh}}
\newcommand{\atom}{\alpha}
\newcommand{\yields}{\leadsto}
\newcommand{\imply}{\ensuremath{\Rightarrow}}
\newcommand{\defsym}{\ensuremath{\overset{\text{\scriptsize{def}}}{=}}}
\newcommand{\sstack}{\ensuremath{\eta}}
\newcommand{\istack}{\ensuremath{\beta_{\eta}}}
\newcommand{\force}{\ensuremath{\models}}
\newcommand{\form}[1]{\ensuremath{#1}}
\newcommand{\Horn}{\RA\hide{\mathcal{C}}}
\newcommand{\PName}{\mathcal{P}}
\newcommand{\regex}{\mathcal{R}}
\newcommand{\sub}{\ensuremath{\sigma}}
 \newcommand{\sat}{{\sc{\code{SAT}}}}
 \newcommand{\unsat}{{\sc{\code{UNSAT}}}}
 \newcommand{\unknown}{{\sc{\code{UNKNOWN}}}}
\newcommand{\extractedtl}{{\sc{\code{{extract\_edtl}}}}}
\def\qed{\hfill\ensuremath{\square}}
\newcommand{\subst}[2]{\ensuremath{[#1 {/} #2]}}
\newcommand{\utree}[1]{\ensuremath{{\mathcal T_{#1}}}}
\newcommand{\postpro}{\code{postpro}}
\newcommand{\transform}{\code{reduce}}
\newcommand{\complete}{\code{complete}}
\newcommand{\lb}{\code{link\_back}}
\newcommand{\alphabet}{\Sigma}
\newcommand{\semp}{\epsilon}
\def\SVar{\myit{U}}
\def\SGVar{\myit{G}}
\def\SLVar{\myit{E}}
\def\IVar{\myit{I}}
\newcommand{\lang}{\sc{{L}}}
\newcommand{\strProb}{{\tt SAT{-}STR}}
\newcommand{\strel}{{\tt STR}}
\newcommand{\stredtsem}{{\tt STR^{dec}_{\edtl}}}
\newcommand{\stredt}{{\tt STR_{\edtl}}}
\newcommand{\strcfsem}{{\tt STR^{dec}_{\cfl}}}
\newcommand{\strcf}{{\tt STR_{\cfl}}}
\newcommand{\strflatsem}{{\tt STR^{dec}_{\fl}}}
\newcommand{\strflat}{{\tt STR_{\fl}}}
\newcommand{\allSat}{{\ensuremath{\omega}{-}\tt SAT}}
\newcommand{\se}{\ensuremath{\code{e}}}
\newcommand{\arith}{\ensuremath{\atom}}
\newcommand{\ariths}{\ensuremath{\arith}}
\newcommand{\seqs}{\ensuremath{\ses}}
\newcommand{\ses}{\ensuremath{\mathcal{E}}}
\newcommand{\Par}[1]{\ensuremath{{Par}(#1)}}
\newcommand{\widentree}[1]{\ensuremath{\code{widentree}(#1)}}
\newcommand{\extractpres}[1]{\ensuremath{{\sc extract\_pres}(#1)}}
\newcommand{\langg}[1]{\ensuremath{\mathcal{L}(#1)}}
\def\lang{\sc{L}}
\newcommand{\deci}{{\ensuremath{\tt Kepler_{22}}}}
\newcommand{\sleng}[1]{\ensuremath{|#1|}}
\newcommand{\tnode}{\btt{v}}
\newcommand{\cfgrammar}[4]{\ensuremath{{\langle}#1{,}~#2{,}~#3{,}~#4{\rangle}}}
\newcommand{\netgrammar}[4]{\ensuremath{{\langle}#1{,}~#2{,}~#3{,}~#4{\rangle}}}
\newcommand{\lgrammar}[4]{\ensuremath{{\langle}#1{,}~#2{,}~#3{,}~#4{\rangle}}}
\newcommand{\var}{V}
\newcommand{\production}{P}
\newcommand{\productions}{\mathcal{P}}
\newcommand{\red}{{\code{reduce}}}
\newcommand{\comp}{{\code{complete}}}
\newcommand{\matchse}{{\code{match}}}
\newcommand{\ctree}[3]{\ensuremath{{\mathcal C}(#1{\rightarrow}#2, #3)}}
\newcommand{\vtree}[4]{\ensuremath{{\mathcal C}(#1{\rightarrow}#2, #3)^{#4}}}
\newcommand{\backfun}{\ensuremath{{\mathcal C}}}
\newcommand{\cfgs}{\ensuremath{{\mathcal G}}}
\newcommand{\classlang}[1]{\ensuremath{{\mathcal L}(#1)}}
\newcommand{\classlangfin}[1]{\ensuremath{{\mathcal L}(#1)_{FIN}}}
\newcommand{\cflfs}{{\ensuremath \classlangfin{CF}}}
\newcommand{\cfls}{{\ensuremath \classlang{CF}}}
\newcommand{\cfl}{{\ensuremath {CFL}}}
\newcommand{\edtl}{{\ensuremath {EDT0L}}}
\newcommand{\cfg}{{{CFG}}}
\newcommand{\fl}{{\ensuremath {flat}}}
\newcommand{\edtlfs}{{\ensuremath \classlangfin{EDT0L}}}
\newcommand{\edtls}{{\ensuremath \classlang{EDT0L}}}
\newcommand{\etlfs}{{\ensuremath \classlangfin{ET0L}}}
\newcommand{\etl}{{\ensuremath {ET0L}}}
\newcommand{\transition}{{\ensuremath {\delta}}}
\newcommand{\conferencepaper}{1} 
\newcommand{\rep}[1]{\ifthenelse{\conferencepaper = 0}{#1}{}}
\newcommand{\repconf}[2]{\ifthenelse{\conferencepaper = 0}{#1}{#2}}
\begin{document}

\title{Decidable Logics Combining Word Equations,
 Regular Expressions and Length Constraints} 

\author{
Quang Loc Le
 }
  \institute{
 Teesside University, United Kingdom
  }


\maketitle

\savespace
\begin{abstract}

In this work,
we consider the satisfiability problem in
a logic that combines
word equations over string variables
 denoting words of unbounded lengths,
regular languages to
which words belong and
 Presburger constraints on the length of words.
We present a novel decision procedure over
 two decidable fragments that include quadratic word equations
  (i.e., each string variable occurs at most twice).
The proposed procedure
reduces
the problem to solving the satisfiability in the
 Presburger arithmetic.
The procedure combines two main components:
(i) an algorithm to derive 
a complete set
 of
all solutions of conjunctions of word equations
and regular expressions;
and (ii) two methods to precisely compute relational constraints
over string lengths
implied by the set of all solutions.
We have implemented a prototype tool
and evaluated it over a set of satisfiability problems in the logic.
The experimental results show that the tool
is effective and efficient.

  \keywords {String Solver {$\cdot$} Word Equations {$\cdot$} Decidability  {$\cdot$} Cyclic Proofs.}
\end{abstract}

\section{Introduction}\label{sec.intro}
The problem of solving
word algebras has been studied since the early stage
in mathematics and computer science \cite{Diekert2015}.
Solving
word equation (which includes
concatenation operation, equalities and
 inequalities on string variables)
was an intriguing problem and initially investigated due to its ties to Hilbert’s 10th problem.
The major result was obtained in 1977 by Makanin \cite{Makanin:math:1977}
who showed that the satisfiability of
word equations with constants is,
 indeed, decidable.
In recent years, due to considerable
number of security threats over the Internet,
there has been much renewed interest in the
satisfiability problem
involving the development of
 formal reasoning systems to either verify safety properties
or to detect vulnerability for web and database applications.
These applications often require
a reasoning about string theories that combines
word equations,
regular languages 
 and
constraints on the length of words. 

\hide{To support those reasoning systems above, 
several (either theoretical
or practical) decision procedures have been developed
that check
 satisfiability 
 of constraint languages over the string theories
(a.k.a. string solvers).
There have been a few decidability results in string theories.
For the satisfiability problem of  word equations,
besides Makanin's algorithm, two decision procedures, based on compression techniques,
\cite{Plandowski:ICALP:1998,Plandowski:STOC:1999,Jez:JACM:2016} have been
proposed recently.
Decidability of word equations and regular memberships
was shown in
\cite{Schulz:1990:MAW,Hooimeijer:PLDI:2009,Liang:FroCos:2015}
and implemented
in a Satisfiability Modulo Theories (SMT) framework \cite{Liang:FroCos:2015}.}

Providing a decision procedure
for the satisfiability problem on a string logic
including word equations and length constraints
has been difficult to achieve.
One main challenge is how to support an inductive
 reasoning about the combination of
 unbounded strings and the {\em infinite} integer domain.
Indeed, the satisfiability of word equations combined with length
constraints of the form \form{\sleng{x}{=}\sleng{y}} is open
\cite{Buchi1990,Ganesh:HVF:2012} (where \form{\sleng{x}} denotes
the length of the string variable \form{x}).
So far, very few decidability results in this logic 
are known; the most expressive result is 
restricted
within the straight-line fragment (SL) which is based on {\em acyclic} word equations
\cite{Ganesh:HVF:2012,Abdulla:CAV:2014,Lin:POPL:2016,Tao:POPL:2018,Lukas:POPL:2018}.
This SL fragment excludes 
constraints combining
 {\em quadratic} word equations, the equations in which
 each string variable occurs at most twice.
For instance, the following constraint is beyond
the SL fragment:
$\se_c{\equiv}x{\cdot}a {\cdot} a {\cdot} y ~{=}~ y{\cdot}b{\cdot}a{\cdot}x$ where \form{x} and \form{y}
are string variables, 
 \form{a} and \form{b} are letters,
and \form{\cdot} is the string concatenation operation.
Hence, one research goal
 is to identify decidable logics combining quadratic word equations (and beyond), 
based on which we can develop an efficient decision procedure.

\hide{
While 
the state-of-the-art
solvers, e.g. Z3str2 \cite{Zheng:FSE:2013,Zheng:CAV:2015},
  Norn \cite{Abdulla:CAV:2014,Abdulla:CAV:2015},
CVC4 \cite{Liang:FMSD:2016}, S3P \cite{Trinh:CAV:2016} and TRAU \cite{Parosh:PLDI:2017}, support very well the verification of
 software in computer security communities, 
they may not terminate
over those satisfiability problems including {\em cyclic}
 word equations e.g., quadratic word equations in which
 each string variable occurs at most twice.
In particular, the algorithms presented in \cite{Zheng:FSE:2013,Liang:CAV:2014,Abdulla:CAV:2014,Abdulla:CAV:2015,Trinh:CAV:2016}
 are unable to decide the satisfiability for
 the following quadratic word equation which
 has two occurrences of each string variable \form{x} and \form{y}:
$\se_c{\equiv}x{\cdot}a {\cdot} a {\cdot} y ~{=}~ y{\cdot}b{\cdot}a{\cdot}x$
where \form{\cdot} is the string concatenation operation,
 \form{a} and \form{b} are letters.}

\hide{
For efficiency, heuristics were recently introduced in \cite{Zheng:CAV:2015}, 
\cite{Liang:FMSD:2016},
 \cite{Trinh:CAV:2016}
and \cite{Parosh:PLDI:2017} to avoid such non-termination
for some special scenarios.
However, these approaches are far away from providing complete string solvers
for a sufficiently generic string theory such as the one including
 quadratic word equations.}

\hide{One way to support a decision procedure
for string constraints combining word equations 
and length functions is to precisely compute
the length constraints implied by
 all solutions of each word equation. 
However, this task is nontrivial
as it requires the procedure to derive
the set of all solutions
as well as precisely extract the length constraints. 
While the set of all solutions could be derived by
 the existing works like \cite{Jaffar:JACM:1990,Plandowski:STOC:2006,Jez:JACM:2016,Ciobanu:ICALP:2015},
computing the length constraints for 
the solution sets derived by these works is not straightforward.
For instances, while the algorithm in \cite{Jaffar:JACM:1990}
does not terminate when the set is infinite,
 the length constraints
 derived by \cite{Plandowski:STOC:2006,Jez:JACM:2016}
may not in a finite form.
In particular, although Plandowski's work \cite{Plandowski:STOC:2006}
can derive a finite representation of all solutions
of a word equation, the length constraints implied by this representation
 are not always represented with finitely many
 equations in numeric solvable form.
}

There have been efforts to deal with the cyclic string constraints
 in Z3str2 \cite{Zheng:CAV:2015,Zheng:FMSD:2017}, CVC4 \cite{Liang:FMSD:2016} and S3P \cite{Trinh:CAV:2016}.
While Z3str2
presented a mechanism to detect overlapping variables to avoid
non-termination,
 CVC4 proposed {\em refutation complete} procedure
to generate a refutation for any unsatisfiable input problem
and
 S3P \cite{Trinh:CAV:2016} provided a method
to identify and prune non-progressing
scenarios.
However, 
none is both complete and terminating over
quadratic word equations.
For instance, Z3str2, CVC4
and S3P (and all the
state-of-the-art string solving techniques 
\cite{Abdulla:CAV:2014,Abdulla:CAV:2015,Parosh:PLDI:2017,Berzish:FMCAD:2017,Tao:POPL:2018,Lukas:POPL:2018}) is not able to decide the satisfiability
of the word equation
 \form{\se_c} above.

In this work,
we propose a novel cyclic proof system within
a satisfiability procedure
for the string theory combining word equations, regular memberships
and Presburger constraints over the length functions.
Moreover, we identify decidable fragments with quadratic word equations
 (e.g., the constraint \form{\se_c} above)
where the proposed procedure is complete and terminating.
To the best of our knowledge,
our proposal is the first
decision procedure for string constraints beyond
the straight-line word equations.
Our proposal has two main components.
First,  we present
a novel algorithm
to construct a cyclic {\em reduction tree} which finitely represents
all solutions of a conjunction of word equations and regular membership predicates.
Secondly, we describe two procedures
to infer
the length constraints implied by the set of all solutions.

\hide{We remark that although
the procedure in \cite{Ciobanu:ICALP:2015}
is able to derive for each word equation
a description, a tree, of ETOL language,
 it is unclear whether this description
is a language of
finite index.
Furthermore, node of the tree
  derived by \cite{Ciobanu:ICALP:2015}
is extended equation which is an element in
a free partially commutative monoid rather
than simply a word equation.
Hence, inferring the length constraints
implied by all solutions of a word equation using the procedure
 in
\cite{Ciobanu:ICALP:2015}
is not straightforward.
}

\savespace
\paragraph{Contributions.} We make the following
 technical contributions.
 \begin{itemize}[noitemsep,topsep=1pt]
 \item 
We develop a novel
 algorithm, called {\allSat},
 to derive 
 a finite representation for
all solutions
of a conjunction of word equations and regular expressions.
\item We
 present a 
 decision procedure, called {\deci}, with 
two decidable fragments
and provide a complexity analysis of our approach.
This is the first decidable result for the string theory
 combining {\em quadratic} word equations with length constraints.
\item We have implemented a prototype solver
and evaluated it over a set of hand-drafted benchmarks in
the decidable fragments.
The experimental results show that our proposal is
both effective
 and efficient
in solving string constraints with quadratic word equations and length constraints.
\end{itemize}
\savespace
\paragraph{Organization.}
The rest of the paper is organized as follows.
Sect \ref{sec.prelim} presents relevant definitions. Sect \ref{sec.motivate}
 shows an overview of our approach through an example. We show how to
compute a cyclic reduction tree
to finitely represent
all solutions of a conjunction of word equations and regular memberships
 in Sect \ref{sec.all.sol}.
Sect \ref{sec.deci} presents the proposed decision procedure.
Sect \ref{sec.quad} and Sect \ref{sec.flat} describe
 the two decidable fragments. 
Sect \ref{sec.impl} presents
an implementation and evaluation.
 Sect \ref{sec.related} reviews related work and 
 concludes. For the space reason,
all missing proofs are presented in Appendix.

\section{Preliminaries}\label{sec.prelim}



Concrete string models assume a finite alphabet \form{\alphabet}
whose elements are called {\em letters},
set of finite words over \form{\alphabet^*} including \form{\semp}
 - the empty word, and
a set of integer numbers \form{\mathbb{Z}}.
 We work with a set \form{\SVar}
of string variables denoting words in \form{\alphabet^*},
and a set \form{\IVar} of arithmetical variables.
We use
\form{\sleng{w}} to denote the length of \form{w{\in} \alphabet^*}
and
\form{\setvars{v}} a sequence of variables.
A language \form{\lang} over the alphabet \form{\alphabet}
is a set \form{\lang {\subseteq} \alphabet^*}.
A language \form{L} is a set of words generated
by a grammar system.
We use \form{\classlang{L}} to denote the class of all 
languages \form{L}.

 \begin{figure}[tb]
 \begin{center} 
\savespace\[\savespace
\begin{array}{c}
\begin{array}{lcll}
\text{disj formula} & \pure &::=& \phi  ~|~ \pure_1 ~{\vee}~ \pure_2  \qquad
\text{formula} \quad \phi ~~::= ~~ \se \mid \arith 
                \mid s{\in} \regex
                 ~|~ \neg\phi_1 \mid 
\phi_1 ~{\wedge}~ \phi_2  \\
\text{(dis)equality} & \se &::=&  s_1 {=} s_2  
\qquad \qquad \text{term} \quad s ~~::=~~  \semp ~|~ c \mid x 
 \mid s_1 \cdot s_2\\
\text{regex} & \regex &::=&  \emptyset \mid \semp ~|~ c \mid w
 \mid \regex_1 \cdot \regex_2 \mid \regex_1 + \regex_2 \mid 
   \regex_1 \cap \regex_2 \mid \regex_1^C \mid \regex_1^*\\
 \text{Arithmetic}
 &\arith &::=& \a_1 = \a_2 \mid \a_1 > \a_2 \mid \arith_1 \wedge \arith_2
  \mid \arith_1 \vee \arith_2 \mid \exists v . \arith_1  \mid \seppredF{\code{P}}{\setvars{v}} \\
 & a &::=& \!\!
    \begin{array}[t]{l}
      0 \mid 1 \mid v \mid \form{\sleng{u}} \mid i \times \a_1  \mid {-}\a_1 \mid 
       \a_1 + \a_2  
    \end{array}\\
 \end{array} \\
\end{array}
\]
 \saveone \savespace
\caption{Syntax}\label{prm.spec.fig}
\end{center} \savespace \savespace \saveone \savespace \savespace \savespace
\end{figure}

\noindent{\em \bf Syntax} The syntax of quantifier-free string formulas, called {\strel},
 is presented in Fig. \ref{prm.spec.fig}.
 \form{\pure} is a disjunction formula where each disjunct 
\form{\phi}
is a conjunction of word equations \form{\se}, 
regular memberships \form{s{\in} \regex}
and  arithmetic constraints \form{\arith}.
Especially, \form{\arith} may contain predicates
\form{\seppredF{\code{p}}{\setvars{v}}} whose definitions are inductively defined.
We use either \form{s} or \form{tr} to denote a string term.
 We often
write \form{s_1s_2} to denote \form{s_1\cdot s_2}
if it is not ambiguous.
Regular expression \form{\regex} over \form{\alphabet}
is built over \form{c \in \alphabet}, \form{w \in \alphabet^*}, \form{\semp}, and closing
under union \form{{+}}, intersection \form{\cap},
complement \form{C}, concatenation \form{\cdot},
and the Kleene star operator \form{*}.
Regular expressions \form{\regex} does not contain any string variables.

 We use 
 \form{\seqs} to denote a conjunction (a.k.a system) of word equations.
 \form{\pure\subst{t1}{t2}} denotes a
substitution of  all occurrences
of \form{t_2} in \form{\pure} to \form{t_1}.
We use function \form{\FV(\pure)} to return all free variables
of \form{\pure}.
%
We inductively define {length function of a string term \form{s},
denoted as \form{\sleng{s}}, 
as:}
\form{
\sleng{\semp}=0}, \form{\sleng{c}= 1,}~
and
\form{\sleng{s_1\cdot s_2}=\sleng{s_1}+ \sleng{s_2}
}.
%
\hide{In the following, we define the function \code{unfold\_str}
used to derive an equivalent disjunction
of a formula through unfolding one its string predicate occurrence .
\[
\begin{array}{c}
\pure_{base} {\equiv} \pure \subst{\semp}{\seppred{\code{STR}}{u{,}l{,}r}} {\wedge}l{=}r{\wedge}u{=}\semp  \qquad \fresh~ u_1{,}l_1\\
\pure_{ind} {\equiv} \pure \subst{c{\cdot}\seppred{\code{STR}}{u_1{,}l_1{,}r}}{\seppred{\code{STR}}{u{,}l{,}r}} {\wedge}l_1{=}l{+}1{\wedge}l{<}r {\wedge}u{=}c{\cdot}u_1
\\
\hline
\code{unfold\_str}( \seppred{\code{STR}}{u{,}l{,}r} {\wedge} \pure) ~{\yields}~
 \pure_{base} ~{\vee}~\pure_{ind}
\end{array}
\]
where \form{\pure\subst{t1}{t2}} substitutes all occurrences
of \form{t_2} in \form{\pure} to \form{t_1},
and \form{\fresh ~ \setvars{v}} returns a set of fresh variables
\form{\setvars{v}}.
 The procedure \code{unfold\_str} outputs a set of two disjuncts which are combined from branches of the predicate
\code{STR} with the remainder \form{\pure}.
The constraint \form{u{=}\semp} in base branch
and \form{u{=}c{\cdot}u_1} in inductive branch
are denoted as {\em subterm} constraints. They are
used for constructing a model to witness satisfiability.}
%
Notational length 
 of the word equation \form{\se}, denoted by \form{\se(N)},
is the number of its symbols.

A word equation is called {\em acyclic} if
each variable occurs at most once.
A word equation is called {\em quadratic}
if each variable occurs at most twice.
Similarly, a system of word equations is called
quadratic if each variable occurs at most twice.


A word equation system is said to be straight-line  \cite{Ganesh:HVF:2012,Abdulla:CAV:2014,Lin:POPL:2016}
if it can be rewritten (by reordering the conjuncts)
as the form \form{\bigwedge_{i{=}1}^n x_i = s_1}
such that:
(i)
\form{x_1},...,\form{x_n} are different variables;
and (ii) \form{\FV(s_i) \subseteq \{x_1,x_2,..,x_{i{-}1} \}}.
 A formula \form{\pure \equiv \se_1 {\wedge}\se_2
{\wedge} ...{\wedge}\se_n \wedge \astart}
is called in straight-line fragment (SL)  if
 \form{\se_1 {\wedge}\se_2
{\wedge} ...{\wedge}\se_n} is straight-line and
the regular expression \form{\astart} is of the conjunction
of regular memberships
\form{x_j {\in} \regex_j} where \form{x_j {\in} \{x_1,...,x_n\} }.





\noindent{\em \bf Semantics}
Every regular expression \form{\regex}
is evaluated to the language \form{\classlang{\regex}}.
We define:
\savespace\[\savespace
\begin{array}{lcllcl}
{\SStore}  & {\defsym} &  {(\SVar {\cup} \alphabet)} {\rightarrow} \alphabet^*  & \qquad
{\IStore} & {\defsym} &  {\IVar} ~{\rightarrow}~ \mathbb{Z}
\end{array}.
\]
The semantics is given by a satisfaction relation:
\form{\sstack{,}\istack {\force} \pure}
that forces the interpretation on both string \form{\sstack} and
 arithmetic \form{\istack} to satisfy the constraint
 \form{\pure}
 where \form{\sstack \in {\SStore} }, \form{\istack {\in} {\IStore}},
and \form{\pure} is a formula.
We remark that \form{\forall \sstack \in {\SStore}}:
\form{\sstack(c){=}c}
for all \form{c \in \alphabet}
and \form{\sstack(t_1 t_2){=}\sstack(t_1) \sstack(t_2)}.
 The semantics of our language is formalized in App. \ref{app.sem}.
 Inductive predicate is interpreted as
 a least fixed-point of values \cite{Makoto:APLAS:2016}.
%
If \form{\sstack{,}\istack \force \pure},
we use the pair \form{\langle\sstack{,}\istack\rangle} to denote
a solution of the formula \form{\pure}.
Let
 \form{\se{\equiv}x_1{\cdot}...{\cdot}x_l{=}x_{l{+}1}{\cdot}...{\cdot}x_n}
be  a word equation.
If \form{\se} is satisfied with the solution
 \form{\langle\sstack{,}\istack\rangle}, we also refer
\form{\sstack(x_1){\cdot}...{\cdot}\sstack(x_l)}
as a solution word 
 of \form{\se}.
A solution word is minimal if the length of the
 solution word ($\sleng{\sstack(x_1)} + ... + \sleng{\sstack(x_l)}$)
  is minimal.
 \form{\se_1} is referred as a suffix
of  \form{\se_2} if
they are satisfied
and the solution word of \form{\se_1}
is a suffix
of the solution word of \form{\se_2}.


\noindent{\em \bf Formal Language} A deterministic finite automaton (DFA) A
is a tuple: \form{A{=}\dfa{Q}{\alphabet}{\transition}{q_o}{Q_F}},
where \form{Q} is a finite set of states,
\form{\transition \subseteq Q {\times} (\alphabet {\cup} \{\semp\}) {\times} Q} is a finite set of transitions, \form{q_0 {\in} Q} is the initial state
and \form{Q_F {\subseteq} Q} is a set of accepting states.
We use \form{\classlang{A}} to denote the (regular) language generated
by a DFA \form{A}.
It is known that
the languages generated by regular expressions are also in
the class of regular languages \cite{Hopcroft:2006:IAT}.

 A context-free grammar (CFG) G is defined by the quadruple:
\form{G{=}\cfgrammar{\var}{\alphabet}{\production}{S}} where
\form{\var} is a finite nonempty set of nonterminals,
\form{\alphabet} is a finite set of terminals and
 disjoint from \form{\var}, and
\form{\production {\subseteq} \var {\times} (\var {\cup} \alphabet)^*}
 is a finite relation.
For any strings \form{u,v {\in} (\var {\cup} \Sigma )^{*}},
\form{v} is a result of applying the rule \form{(\alpha ,\beta )} to \form{u}
\form{u\Rightarrow_G v\,}
  if \form{\exists (\alpha ,\beta )\in \production}
 \form{u_{1},u_{2}\in (V\cup \Sigma )^{*}}
 such that \form{ u\,=u_{1}\alpha u_{2}} and
 \form{v\,=u_{1}\beta u_{2}}.
\form{\langg{G} {=} \{w \in \alphabet^* \mid S \Rightarrow^*_G w \}} to denote a language produced
by the CFG \form{G}.
%
Given a CFG \form{G{=}\cfgrammar{\var}{\alphabet}{\production}{S}},
we use \form{G_X} (where \form{X \in \var}) to denote a sub-language of \form{\langg{G}}, defined
by \form{\langg{G_X} {=} \{w\in \alphabet^* ~|~ X \Rightarrow^*_G w \}}.

\noindent{\em \bf Normal Form}
\form{\pure{\equiv}\ses{\wedge} \astart{\wedge}\ariths}
 is called in the normal form if it is of the form:
\form{\ses} is a system of word equations,
\form{\astart} is a conjunction of regular memberships 
(e.g., \form{X {\in} \regex)} 
and \form{\ariths} is a Presburger formula.
The normalization procedure is left in
App. \ref{app.norm.form}.

\noindent{\em \bf Problem Definition} Throughout this work, we consider the following problem. 

\[
\begin{array}{|ll|}
\hline
 \text{PROBLEM:} & {\strProb}.\\
 \text{INPUT:} & \text{A string constraint }\form{\pure}  \text{ in normal form over } \form{\alphabet}. \\
 \text{QUESTION:} & \text{Is } \form{\pure} \text{ satisfiable? } \\
\hline
\end{array}
\]

Authors in  \cite{Ganesh:HVF:2012,Abdulla:CAV:2014,Lin:POPL:2016}
show that this problem in straight-line fragment is decidable.

\section{Overview and Illustration}\label{sec.motivate}
Overall of our idea is an algorithm to
reduce an input constraint to a 
 set of solvable constraints.
In this section, we first define the reduction tree
 (subsection \ref{mov.derv.tree}).
After that, we illustrate the proposed decision procedure through
an example (subsection \ref{sec.mov.example}).

\hide{Given a string formula (of our
decidable fragments) which is a conjunction
of a word equation system \form{W},
 regular memberships \form{Re} and Presburger arithmetic \form{A}
the proposed procedure works as follows.
\begin{itemize}
\item First, it constructs a reduction tree
as a finite representation of the {\em complete} set of all solutions of the input word equation system \form{W}.
\item Secondly,
our procedure extends the tree
to capture all solutions of the conjunction
 \form{W \wedge Re}.
\item
Thirdly, it equivalently transforms
the final tree into a {\em finite-index} EDT0L system
\cite{L:Book}. The acronym EDT0L refers to
{\bf E}xtended, {\bf D}eterministic,
 {\bf T}able, {\bf 0} interaction,
and {\bf L}indenmayer.
Intuitively, grammar systems that are
{\em k-index} are restricted so that, for
every word generated by the grammar, there is some
successful derivation where
at most {\em k} variables (or nonterminals)
appear in every sentential form
of the derivation \cite{ROZENBERG:1978}.
A system is finite-index if it is \form{k}-index
for some \form{k}.
\item Fourthly, it
computes the length constraints
of every string variables
which are precisely implied by
all words generated from the finite-index EDT0L system.
These length constrains, say \form{A_w}, are computed
as an existentially quantified Presburger formula.
This computation is based on
 Parikh's Theorem \cite{Parikh:JACM:1966}, one of the most
celebrated theorem in automata theory.
%
\item Lastly, it conjoins the length constraints \form{A_w} 
with the Presbuger constraints \form{A} to obtain conjunction
\form{A_w \wedge A}.
The satisfiablity problem of the input formula \form{W \wedge Re \wedge A}
is {\em equi-satisfiable} with the Presbuger constraint \form{A_w \wedge A}.
As Presbuger arithmetic is decidable,
 the satisfiability problem for the aforementioned 
fragment is decidable.
\end{itemize}}
\savespace
\subsection{Cyclic Reduction Tree}\label{mov.derv.tree}
 Formally, a cyclic reduction tree \form{\utree{i}} is 
 a tuple \form{(V, E, \backfun)} where
 $V$ is a finite set of nodes where each node represents a conjunction
of word equations \form{\ses}.
 $E$ is a set of labeled and directed edges  \form{(\ses, \sub, \ses') \in E} where \form{\ses'} is a child of \form{\ses}.
This edge means
 we can reduce \form{\ses} to \form{\ses'} via the label \form{\sub}, a substitution,
s.t.: \form{\ses' ~{\equiv}~ \ses\sub}.
And $\backfun$ is a back-link (partial) function which captures virtual cycles in the tree.
 A cycle, e.g. \form{\ctree{\ses_c}{\ses_b}{\sub}},
 in $\backfun$ means
the leaf \form{\ses_b} is linked back to its ancestor \form{\ses_c} and
 \form{\ses_c ~{\equiv}~ \ses_b\sub}.
In this back-link, \form{\ses_b} is referred as a {\em bud}
and \form{\ses_c} is referred as a {\em companion}.
%
A path \form{(\tnode_s,\tnode_e)}
is a sequence of nodes and edges
connecting node \form{\tnode_s} with node
\form{\tnode_e}.
A leaf node is either unsatisfiable,
or satisfiable or linked back to an interior node, or
not-yet-reduced.
If a leaf node is not-yet-reduced, it is marked as open.
Otherwise, it is marked as closed.
A trace of a tree
 is a sequence of edge labels of a path in the tree.
We refer 
 a trace as solution trace if it corresponds to a
path \form{(\tnode_s,\tnode_e)} where \form{\tnode_s}
is the root
and \form{\tnode_e} is a satisfiable leaf.
This trace represents a (infinite) family solutions of the equation at the root.
\savespace
\subsection{Illustrative Example}\label{sec.mov.example}
\begin{figure}[tb]
\begin{center}
\begin{scriptsize}
 \savespace
\begin{tikzpicture}[node distance=18mm,level 1/.style={sibling distance=22mm},
      level 2/.style={sibling distance=25mm},
                        level distance=22pt, draw]
  \tikzstyle{every state}=[draw,text=black]

\node (A)                    {\scriptsize \textcolor{blue}{$\ses_0{\equiv}abx {=} xba{\wedge}ay{=}ya^\bigstar$}};
  \node         (B) [below =4mm of A] {\scriptsize \underline{$\ses_{11}{\equiv}ab{=}ba{\wedge}ay{=}ya$}};
  \node         (C) [below right=3mm and 4mm of A] {\scriptsize{$\ses_{12}{\equiv}bax_1 {=} x_1ba{\wedge}ay{=}ya$}};
  \node         (D) [below left=4mm and 2mm of C] {\scriptsize \textcolor{blue}{${\ses_{21}{\equiv}
      ay{=}ya^\heartsuit}$}};
\node         (D1) [below left=3mm and 2mm of D] {\scriptsize${\ses_{31}{\equiv}
      \semp{=}\semp}$};
\node         (D2) [below right=3mm and 2mm of D] {\scriptsize \textcolor{blue}{${\ses_{32}{\equiv}
      ay_1{=}y_1a^\heartsuit}$}};
  \node         (E) [below right=3mm and 1mm of C] {\scriptsize\textcolor{blue}{$\ses_{22}{\equiv}abx_2 {=} x_2ba{\wedge}ay{=}ya^\bigstar$}};

  \path (A) edge[->]     node [pos=0.3,left] {\form{\small [\semp/x]}} (B)
            edge[->]     node [pos=0.3,right] { \form{\small [ax_1/x]}} (C)
        (C) edge [->]    node [pos=0.3,left] { \form{\small [\semp/x_1]}} (D)
            edge [->]    node [pos=0.3,right] { \form{\small[bx_2/x_1]}} (E)
        (D) edge [->]    node [pos=0.3,left] {\form{\small[\semp/y]}} (D1)
            edge [->]    node [pos=0.1,below] {\form{\small[ay_1/y]}} (D2)
        (D2) edge [->,bend right=30,dotted]  node [pos=0.3,above] {  \form{\small[y/y_1]}} (D)
        (E) edge [->,bend right=30,dotted]  node [pos=0.3,above] {\form{\small[x/x_2]}} (A);
\end{tikzpicture}
 \saveone  
\end{scriptsize}
\caption{Reduction Tree $\utree{3}$.}\label{fig.unfolding.tree}
\end{center}
  \savespace \savespace \savespace
 \savespace \savespace 
\end{figure}
We consider the following constraint:
\savespace\[\saveone
\begin{array}{l}
 \form{\pure~{\equiv}~abx{=}xba {\wedge}
      ay{=}ya  
~{\wedge}~
  (\exists k . \sleng{x} {=} 4k{+}3) {\wedge}
  \sleng{x} {=} 2\sleng{y}}
 \end{array}
\]
where \form{x}, \form{y}  are string variables and \form{a}, \form{b} are letters. This constraint is beyond the straight-line fragment
\cite{Ganesh:HVF:2012,Abdulla:CAV:2014,Lin:POPL:2016,Tao:POPL:2018,Lukas:POPL:2018}.
Moreover, as the length constraint \form{\sleng{x} {=} 2\sleng{y}}
is not regular-based, the automata-based translation proposed in \cite{Tao:POPL:2018} cannot be applied.

The proposed solver
{\deci} could solve the constraint \form{\pure} above
 through the following
three steps.
First, it invokes procedure {\allSat} to construct
a cyclic reduction tree  to
capture
  all solutions
 of the word equations
 \form{\ses_0{\equiv}abx {=} xba{\wedge}
      ay{=}ya}.
Next, it
 infers
  a precise constraint 
 \form{\arith_{xy}} implied by string lengths of all solutions.
Lastly, it solves the conjunction:
\form{{\arith}_{xy} {\wedge} \arith} 
where \form{\arith} is the arithmetic constraint in the input
\form{\pure}. 

\paragraph{The representation of all solutions}
%
{\allSat} derives the reduction tree $\utree{3}$ $(V, E, C)$,
shown in Figure \ref{fig.unfolding.tree},
as the finite presentation of all solutions
 for \form{\ses_0}.
In particular, the root of the tree is
 \form{\ses_0}.
\form{\ses_0} has two children \form{\ses_{11}}
and $\ses_{12}$, which are obtained
 by reducing  \form{x}
  into two {\em complete} cases:
\form{x{=}\semp} and \form{x{=}ax_1} where \form{x_1} is fresh.
 Note that $\ses_{12}$ is obtained 
by first 
applying the substitution:
$\ses'_{12}{\equiv}\ses_0[ax_1/x]{\equiv}\textcolor{blue}{a}bax_1 {=} \textcolor{blue}{a}x_1ba{\wedge}ay{=}ya$
prior to
subtracting
the letter \form{a} at the heads of the two sides of 
the first word equation.
Next, while \form{\ses_{11}} is classified
as unsatisfiable, (underlined) and marked closed,
 \form{\ses_{12}} is further reduced 
into
 two children,
\form{\ses_{21}} and \form{\ses_{22}}. They are obtained by
 reducing \form{x_1} at the head of
the right-hand side (RHS) of \form{\ses_{12}} into
 two complete cases: \form{x_1{=}\semp}
to generate \form{\ses_{21}'{\equiv}\ses_{12}'[\semp/x_1]{\equiv} \textcolor{blue}{ab}{=}\textcolor{blue}{ab}{\wedge}ay{=}ya}
 and \form{x_1{=}bx_2} (where \form{x_2} is a fresh variable)
to generate \form{\ses_{22}'{\equiv}\se_{12}'[bx_2/x_1]{\equiv}\textcolor{blue}{b}abx_2 {=} \textcolor{blue}{b}x_2ba}.
Next,
 \form{\ses_{21}'}
is further reduced into \form{\ses_{21}}
by matching \form{a}, \form{b} letters;
and \form{\ses_{22}'} is further reduced into \form{\ses_{22}}
by matching \form{b} letters at the heads of its two sides.
Lastly, 
\form{\ses_{22}} is linked back to \form{\ses_0}
to form the back-link \form{\ctree{\ses_0}{\ses_{22}}{[x/x_2]}}.
Similarly, \form{\ses_{21}}
is
 reduced until all leaf nodes are marked closed.

\hide{
Given a solution trace \form{\sub},
a solution of the root can be obtained through
the following
 constraint: \form{\bigwedge \{tr{=}tr' \mid [tr'/tr] \in \sub\}}.
 A cycle in the tree thus represents a set of infinite solutions,
 since we can construct infinitely many paths by iterating through the cycle an unbounded number of times.
 For instance, in Fig. \ref{fig.unfolding.tree}, we can obtain a solution for \form{\ses_0} following the 
path \form{(\ses_0,\ses_{31})}.
The trace of this path is: \form{\sub_{31}{=}[ax_1/x,\semp/x_1,\semp/y]}.
The solution \form{x{=}a} and \form{y{=}\semp} can be obtained
through the constraint:
 \form{x{=}ax_1 {\wedge} x_1{=}\semp{\wedge}y{=}\semp}.
Moreover,
we can obtain another solution following the cycle
 from \form{\ses_{0}} to \form{\ses_{22}} and back to \form{\ses_{0}}
 for a different number of times and
 then following the edge from \form{\ses_{0}} to \form{\ses_{31}}. 
As the trace and the constraint for each iteration of the cycle
is \form{\sub^{cyc}_{22}{=}\{ax_1/x,bx_2/x_1,x_2/x \}} 
and \form{x{=}ax_1{\wedge}x_1{=}bx_2{\wedge}x_2{=}x}, respectively, the set of
infinite solutions is of
the form: \form{x{=}a} and \form{y{=}\semp},
\form{x{=}{ab}\cdot a} and \form{y{=}\semp}, \form{x{=}{ab}\cdot ab \cdot a}
and \form{y{=}\semp},....}

A path \form{(\tnode_s,\tnode_e)} with trace \form{\sub}
represents for \form{\tnode_e{\equiv}\tnode_s\sub}.
If \form{\tnode_e} is satisfiable, then \form{\sub} represents for
 a family of solutions (or valid assignments).
For instance, in Fig. \ref{fig.unfolding.tree},
the 
path \form{(\ses_0,\ses_{31})}
has the trace \form{\sub_{31}{=}[ax_1/x,\semp/x_1,\semp/y]}.
As \form{\ses_{31}} is satisfiable, we can derive
a solution of \form{\ses_0} based on \form{\sub_{31}} as:
\form{x{=}a} and \form{y{=}\semp}.
Moreover, trace solution that is involved in
cycles represents a set of infinite solutions,
 since we can construct infinitely many solution traces by iterating through the cycles an unbounded number of times.
For example,
 all solution traces \form{\sub_{ij}}
 obtained from the path \form{(\ses_0,\ses_{31})} above
is as:
\savespace\[\savespace
\form{\sub_{ij}} {\equiv}~ \form{[ax_1/x] \circ [bx_2/x_1, x/x_2, ax_1/x ]^{i} \circ [ay_1/y, y_1/y]^{j}  \circ [\semp/x_1  \circ \semp/y]}
\]
where \form{\circ} is the substitution composition operation, \form{\sub^k} means
\form{\sub} is repeatedly composed zero, one or more times,
and \form{i{\geq}0}, \form{j{\geq}0}.

\hide{\noindent{\it {\edtl} Transformation}
A reduction tree derived by {\allSat} represents for
a finite-index {\edtl}
system.
For example, the tree $\utree{3}$ above represents
for the language:
 \form{G{=}\lgrammar{\{S, x, x_1,x_2\}}{\alphabet}{\{P_1,P_2\}}{S}}
where
\form{\production_1=\{(S, abx), (x,ax_1), (x1,\semp) \}}
and
\form{\production_2=\{(S, abx), (x,ax_1), (x1,bx_2), (x2,x) \}}.}

\hide{One fundamental task in this work is
to infer a precise arithmetical constraints
over length functions of string variables
corresponding to the infinite sets of solutions as above.
Inspired by Parikh's Theorem \cite{Parikh:JACM:1966}
which states that
 Parikh images of
context-free languages
precisely coincide with semilinear sets
(which in turn can be computed in Presburger arithmetic),
we first transform the cyclic reduction tree
into a (union) set of CFG and then compute the Parikh images for these CFG.
In particular, our algorithm derives the CFG through
solution paths which end with satisfiable leaf nodes.
For such a path with trace \form{\sub_i},
it generates a CFG G \form{\cfgrammar{\var_i}{\alphabet}{\production_i}{S_i}}
(where \form{\classlang{G}}
is the context-free language produced by \form{G})
as follows. First, it collects all traces,
corresponding to the cycles intersecting with the path.
Assume that there are \form{k} such cycles, \form{k{\geq}0},
and \form{\sub^{cyc}_{i}} is the composition of their traces
 (\form{\sub^{cyc}_{i}{=}\emptyset} if \form{k{=}0}).
Secondly, our algorithm introduces a set of new variables \form{\var_{c_i}}
and substitutions \form{\sub_{c_i}} to connect these cycles;
and  \form{\var_{c_i}} also contains
a new variable to connect the last cycle with the path \form{\sub_i}.
Finally,  \form{G} is generates as follows.
\form{\var_i} is the set of all variables
of the path, the \form{k} cycles, and the new variables  \form{\var_{c_i}}.
\form{\production_i{=} \bigcup \{tr \rightarrow tr' \mid  [tr'/tr] \in
 (\sub_i \cup \sub_{c_i} \cup \sub^{cyc}_{i})\}}.
For instance, our algorithm generates
for the path \form{(\se_0,\se_{21})} with the trace \form{\sub_{21}}
a CFG \form{G_1} \form{\cfgrammar{\var_1}{\alphabet}{\production_1}{S_1}} where \form{S_1} is a fresh variable to represent
for the start symbol),
\form{\var_1{=}\{S_1,x,x_1,x_2,x_3\}} and
the production function \form{P_1} is defined as:
\[
\begin{array}{c}
S_1 \quad \rightarrow \quad  abx \\
\begin{array}{ll}
\begin{array}{lcl}
x & \rightarrow & ax_1 \\
x_1 & \rightarrow & bx_2\\
x_2 & \rightarrow & x \\
\end{array}
&\qquad
\begin{array}{lcl}
x & \rightarrow &  x_3 \\
x_3 & \rightarrow & ax_1 \\
x_1 & \rightarrow & \semp \\
\end{array}
\end{array}
\end{array}
\]
The set of there production rules on the left-hand side represents
 the cycle
\form{\ctree{\se_0}{\se_{22}}{[x/x_2]}} with the trace \form{\sub^{cyc}_{22}}.
The set of three production rules on the right-hand side
represents the solution trace \form{(\se_0,\se_{21})}.
We notice that \form{x_3} (resp., \form{[x_3/x]})
is the new variable (resp., substitution)
 introduced to connect
the cycle with the path \form{(\se_0,\se_{21})}.
As can be seen, after connected, labels in the path \form{(\se_0,\se_{21})}
have also been renamed. We present this transformation in details in Section
\ref{algo.extractcfg}.}

\hide{We remark that Makanin's algorithm \cite{Makanin:math:1977}
 discards
these cycles in its reduction
 as it only focuses on the satisfiability
of the {\em minimal} solution.
For the general problem,
Jaffa \cite{Jaffar:JACM:1990} extends Manakin's algorithm to derive
a minimal and complete set of all solutions.
As the algorithm in \cite{Jaffar:JACM:1990} also ignores the cycles,
it does not terminate when the set is infinite.
In contrast, {\allSat} takes these cycles into account
for a finite
 representation of a (infinite) set of all solutions.
}

\paragraph{Computing \form{\arith_{xy}} constraint}
Based on the solution trace \form{\sub_{ij}} above, 
 {\deci} 
%
first generates a conjunctive set of constrained Horn clauses to define the
relational assumptions over lengths of \form{x}
and \form{y} in the set of all solutions.
After that it
infers the length constraint as:
 \form{\arith_{xy} {\equiv}\exists i . \sleng{x}{=} 2i{+}1 {\wedge} i{\geq}0 \wedge
\sleng{y}{\geq}0}.
%
Now, the satisfiability of \form{\pure} is
 equi-satisfiable to the following formula:
\form{\pure'{\equiv} (\exists i.~ \sleng{x}{=} 2i{+}1{\wedge} i{\geq}0 \wedge
\sleng{y}{\geq}0) ~{\wedge}~
 (\exists k.~ \sleng{x} {=} 4k{+}3) {\wedge} \sleng{x}{=}2\sleng{y}}.
As \form{\pure'} is unsatisfiable, 
 so is \form{\pure}.


\section{The Representation of All Solutions}\label{sec.all.sol}
In this section, we first present procedure {\allSat}
which constructs a cyclic  reduction tree for
 a conjunction of word equations \form{\ses} (subsection \ref{algo.allsat}).
We presents a fairly complicated
cyclic reduction tree
of \form{\se_c{\equiv}xaby {=} ybax} in subsection \ref{casestudy}.
After that, we describe how to combine the tree
with regular membership predicates \form{\astart} 
(subsection \ref{algo.reg}).
Finally, we discuss the correctness 
 in subsection \ref{algo.correct}. 

\subsection{Constructing Cyclic Reduction Tree}\label{algo.allsat}


 {\allSat} 
 transforms a conjunction
of word equations \form{\ses} into a cyclic reduction tree
\utree{n} which represents
 all its solutions.
This procedure starts with the tree \utree{0}
with only the input \form{\ses} at the root.  
After that, in each iteration it chooses one leaf node
to reduce (using function \code{reduce}) or to make a back-link
(using function  \code{\lb})
until every leaf node is either irreducible
or linked back.
A leaf node is irreducible if it either trivially true
(i.e., \form{w_1{=}w_1{\wedge}...{\wedge}w_i{=}w_i} where
\form{w_1,...,w_i{\in} \alphabet^*})
or trivially false
(i.e., either
  it is of the form: \form{c_1tr_1{=}c_2tr_2{\wedge}\ses} where
\form{c_1}, \form{c_2} are different letters
or
 its over-approximation over the length functions 
 is unsatisfiable).
Function \code{\red} takes a leaf node \form{\ses_i}
as input and produces a set \form{L_i} each element of which is a pair
of a node \form{\ses_{i_j}} and a corresponding substitution \form{\sub_j}
such that \form{\ses_{i_j}{=}\ses_{i}\sub_j}.
For each pair \form{(\ses_{i_j}, {\sub_j}) {\in} L_i},
it adds an new open node \form{\ses_{i_j}} and
 a new edge (\form{\ses_i}, \form{\sub_j}, \form{\ses_{i_j}}).
As a result, \code{\red} extends the current
tree with the new nodes and new edges.
In particular,
function \code{\red} is implemented
as: \form{L_i {=} \bigcup \{\code{\code{matchs}}(\ses_{i_j}) \mid \ses_{i_j} {\in} \comp(\ses_i) \}}
where function \code{matchs} exhaustively matches and subtracts
identical letters and string variables
at the heads of left-hand side (LHS)
and right-hand side (RHS) of each word equation
 using function {\matchse}.
%
In the following, we describe the details of the functions used
by {\allSat}.

\noindent{\bf Matching}
\form{\matchse(\se)} matches two terms
at the heads of LHS and RHS of \form{\se} as follows.
\savespace\[\savespace
\begin{array}{l}
 \form{\matchse(u_1{\cdot}tr_1{=}u_2{\cdot}tr_2)} =
\quad \begin{cases}
   \form{\matchse(tr_1{=}tr_2) } & \quad \text{if } \form{u_1, u_2} \text{ are identical}\\
   u_1{\cdot}tr_1{=}u_2{\cdot}tr_2 & \quad \text{otherwise}
  \end{cases}
\end{array}
\]
where \form{u_1, u_2} are either letters or string variables.




\noindent{\bf Procedure \code{\comp}}
The overall goal of our reduction is to transform
every word equation, say  \form{\se{\equiv}u_1tr_1{=}u_2tr_2}
where \form{\ses_i{=} \se {\wedge} \ses },
into a set of ``smaller" string equation \form{\se_i} such that
if \form{\se} is satisfied,
 \form{\se_i} is a suffix of
\form{\se}.
word equations in a node are reduced in a depth-first manner.
Intuitively, our reduction over the word equation
\form{\se} is based on
the possible arrangements of two
carrier terms, the terms at the 
heads of LHS and RHS of {\se}.
Suppose that \form{\se} is satisfied.
Let \form{l_1}, \form{r_1} are the starting
and ending positions of
 \form{u_1}
in the solution word of \form{\se}.
Similarly, let \form{l_2}, \form{r_2} are the starting
and ending positions of \form{u_1}
in the solution word of \form{\se}.
Obviously, \form{l_1{=}l_2}.
Our reduction, function \form{\comp},
considers all possible arrangements based on
these positions.
For arrangements in one-side (LHS or RHS), it considers the cases:
 \form{l_1{=}r_1}
 (i.e., \form{u_1{=}\semp}), \form{l_1{<}r_1} and
 \form{l_2{=}r_2}
 (i.e., \form{u_2{=}\semp}),
 \form{l_2{<}r_2}.
For arrangements between the two sides, it considers the cases:
\form{r_1{\geq}r_2} and \form{r_2{\geq}r_1}.
In particular,
function \code{\comp}
considers the following two scenarios
of the carrier terms.

\noindent{\bf Case 1:} One term is a letter
and another term is a string variable, e.g. \form{x_1tr_1{=}c_2tr_2}.
\code{\comp} generates the set
\form{L_i} 
as
 \form{L_i {\equiv} \{(\ses_{i_1},{\sub_1}); ({\ses_{i_2}},{\sub_2})\}} where
\begin{itemize}[noitemsep,topsep=0pt]
\item
 1a) \form{\sub_1{=}[\semp/x_1]}
\item
 1b) \form{\sub_2{=}[c_2x_1'/x_1]}, 
 \form{x_1'} is a fresh variable and
 referred as a subterm of \form{x_1}.
\end{itemize}

\noindent{\bf Case 2:} These terms are two different string variables, e.g.
\form{x_1tr_1{=}x_2tr_2}. {\complete} generates the set
\form{L_i} 
as : \form{L_i {\equiv} \{({\ses_{i_1}}, {\sub_1});
  ({\ses_{i_2}}, {\sub_2});  ({\ses_{i_3}}, {\sub_3});
  ({\ses_{i_4}}, {\sub_4})\}} where
 \begin{itemize}[noitemsep,topsep=0pt]
 \item
2a) \form{\sub_1{=}[\semp/x_1]},
\item
 2b) \form{\sub_3{=}[x_2x_1'/x_1]} - \form{x_1'} is a fresh variable and
 referred as a subterm of \form{x_1},
\item 
2c) \form{\sub_2{=}[\semp/x_2]} 
 \item
2d) \form{\sub_4{=}[x_1x_2'/x_2]}, \form{x_2'} is a fresh variable
and referred as a subterm of \form{x_2}.
\end{itemize}

As both Case 2b and Case 2d include the scenario
where \form{x_1{=}x_2}, the reduction tree
generated represents a {\em complete} but {\em not minimal}
set of all solution. 

\noindent{\bf Linking back}
 {\code{\lb}}
links a leaf node \form{\ses_b}
 to an interior node \form{\ses_c} if after some substitution \form{\sub_{cyc}},
 two nodes are identical: \form{\ses_c {\equiv} \ses_b\sub_{cyc}}.
 In addition, for every entry \form{X/X' \in \sub_{cyc}}
where \form{X} and \form{X'} are string variables,
 \form{X'} is a subterm of \form{X}.
 \form{\sub_{cyc}} can be considered
as a permutation function on both \form{\SVar}
and  the
alphabet \form{\alphabet}.
We recap that
we refer to this cycle as a triple \form{\ctree{\ses_c}{\ses_b}{\sub_{cyc}}}
 where \form{\ses_c} is called a companion,
 \form{\ses_b} is called a bud.


\subsection{Cyclic Reduction Tree for \form{\se_c{\equiv}xaby {=} ybax}}\label{casestudy}
\begin{figure}[tb] \savespace \savespace \savespace \savespace
\begin{center}
 \savespace \savespace
\begin{tikzpicture}[node distance=18mm,level 1/.style={sibling distance=32mm},
      level 2/.style={sibling distance=32mm},
                        level distance=22pt, draw]
  \tikzstyle{every state}=[draw,text=black]

\node (A)                    {\textcolor{blue}{$^\bigstar\se_c^\heartsuit$}};
  \node         (B) [below left=46mm and 12mm of A] {\textcolor{orange}{$\se_{1}^\dagger$}};
  \node         (B1) [below right=6mm and 4mm of B] {\underline{$\se_{5}$}};
   \node        (B2) [below left=6mm and 4mm of B] {{$\se_{6}$}};
  \node         (B21) [below right=8mm and 6mm of B2] {{$\se_{13}$}};
   \node         (B22) [below left=8mm and 6mm of B2] {\textcolor{orange}{$\se_{14}^{\dagger}$}};
  \node         (C) [below left=2mm and 28mm of A] {{$\se_{2}$}};
  \node         (F) [below right=46mm and 12mm  of A] {\textcolor{orange}{$\se_{3}^{\ddagger}$}};
  \node         (F1) [below left=6mm and 4mm of F] {\underline{$\se_{9}$}};
   \node         (F2) [below right=6mm and 4mm of F] {{$\se_{10}$}};
  \node         (F21) [below left=8mm and 6mm of F2] {{$\se_{17}$}};
   \node         (F22) [below right=8mm and 6mm of F2] {\textcolor{orange}{$\se_{18}^{\ddagger}$}};
  \node         (G) [below right=2mm and 28mm of A] {{$\se_{4}$}};
  \node         (H1) [below left=4mm and 3mm of G] {\underline{$\se_{11}$}};
   \node         (H2) [below right=4mm and 3mm of G] {{$\se_{12}$}};
   \node         (K1) [below left=6mm and 4mm of H2] {\textcolor{red}{$\se_{19}^\bigtriangleup$}};
   \node         (K2) [below right=6mm and 4mm of H2] {\textcolor{blue}{$\se_{20}^\heartsuit$}};
  \node         (K11) [below left=8mm and 4mm of K1] {{$\se_{23}$}};
   \node         (K12) [below right=8mm and 4mm of K1] {\textcolor{red}{$\se_{24}^\bigtriangleup$}};
  \node         (D) [below right=5mm and 3mm of C] {$\underline{\se_{7}}$};
  \node         (E) [below left=5mm and 3mm of C] {\textcolor{black}{$\se_{8}$}};
   \node         (E1) [below right=7mm and 4mm of E] {\textcolor{red}{$^\bigtriangledown\se_{15}$}};
    \node        (E11) [below right=8mm and 4mm of E1] {{$\se_{21}$}};
   \node         (E12) [below left=8mm and 4mm of E1] {\textcolor{red}{$\se_{22}^\bigtriangledown$}};

   \node         (E2) [below left=7mm and 4mm of E] {\textcolor{blue}{$\se_{16}^\bigstar$}};

  \path (A) edge[->]     node [pos=0.5,right]  {$[\semp/x]$} (B)
            edge[->]     node [pos=0.5,above] {$[yx_1/x]$} (C)
            edge[->]     node [pos=0.5,left] {$[\semp/y]$} (F)
            edge[->]     node [pos=0.5,above] {$[xy_1/y]$} (G)
        (B) edge [->]    node [pos=0.3,right] {$[\semp/y]$} (B1)
            edge [->]    node [pos=0.3,left] {$[ay_3/y]$} (B2)
        (B2) edge [->]    node [pos=0.3,right] {$[\semp/y_3]$} (B21)
            edge [->]    node [pos=0.3,left] {$[by_4/y_3]$} (B22)
        (B22) edge [->,dotted,bend left=85,dotted]  node [pos=0.5,left] {$[y/y_4]$} (B)
        (F) edge [->]    node [pos=0.3,left] {$[\semp/x]$} (F1)
            edge [->]    node [pos=0.3,right] {$[bx_4/x]$} (F2)
        (F2) edge [->]    node [pos=0.3,left] {$[\semp/x_4]$} (F21)
            edge [->]    node [pos=0.3,right] {$[bx_5/x_4]$} (F22)
        (F22) edge [->,dotted,bend right=85,dotted]  node [pos=0.5,right] {$[x/x_5]$} (F)
        (G) edge [->]    node [pos=0.3,left] {$[\semp/y_1]$} (H1)
            edge [->]    node [pos=0.3,right] {$[ay_5/y_1]$} (H2)
        (H2) edge [->]    node [pos=0.3,left] {$[\semp/y_5]$} (K1)
            edge [->]    node [pos=0.05,below] {$[by_6/y_5]$} (K2)
        (K1) edge [->]    node [pos=0.3,left] {$[\semp/x]$} (K11)
            edge [->]    node [pos=0.7,left] {$[ax_6/x]$} (K12)
        (K12) edge [->,dotted,bend right=30,dotted]  node [pos=0.3,right] {$[x/x_6]$} (K1)
        (C) edge [->]    node [pos=0.3,right] {$[\semp/x_1]$} (D)
            edge [->]    node [pos=0.3,left] {$[bx_2/x_1]$} (E)
        (E) edge [->]  node [pos=0.3,right] {$[\semp/x_2]$} (E1)
            edge [->]  node [pos=0.05,below] {$[ax_3/x_2]$} (E2)
        (E1) edge [->]  node [pos=0.3,right] {$[\semp/y]$} (E11)
            edge [->]  node [pos=0.7,right] {$[by_2/y]$} (E12)
        (E12) edge [->,dotted,bend left=30,dotted]  node [pos=0.5,left] {$[y/y_2]$} (E1)
        (E2) edge [->,dotted,bend left=75,dotted]  node [pos=0.5,above] {$[x/x_3]$} (A)
        (K2) edge [->,dotted,bend right=75,dotted]  node [pos=0.5,above] {$[y/y_6]$} (A);
\end{tikzpicture}
 \savespace \saveone 
\caption{Cyclic Reduction Tree $\utree{11}$ for \form{xaby{=}ybax}.}\label{fig.reduction.tree.complex}
\end{center} \savespace 
\end{figure}
 We describe how {\allSat}
 can derive a reduction tree for the word equation:
 \form{\se_c{\equiv}xaby {=} ybax}.
As mentioned before, although
the work presented in
\cite{Plandowski:STOC:2006} 
can derive a graph to finitely represent all solutions
of the word equation \form{\se_c}, the length constraints implied for
variables \form{x} and \form{y} by all solutions of
this  equation can not represented with finitely many
 equations in numeric solvable form.
Our decision procedure
can decide that \form{\pure_c} is satisfiable. Indeed, it
 derives for \form{\se_c}
a reduction tree as presented in Fig. \ref{fig.reduction.tree.complex}
where its nodes are as follows.
\saveone\[\saveone
\begin{array}{llll}
\se_{1}{\equiv}aby{=}yba &\quad \quad
\se_{2}{\equiv}x_1aby {=} bayx_1 &\quad \quad
\se_{3}{\equiv}xab{=}bax &\quad \quad
\se_{4}{\equiv}abxy_1{=}y_1bax \\
\se_{7}{\equiv}aby {=} bay &\quad \quad
\se_{8}{\equiv}x_2aby {=} aybx_2 &\quad \quad
\se_{15}{\equiv}by {=} yb &\quad \quad
\se_{16}{\equiv}x_3aby {=} ybax_3 \\
\se_{21}{\equiv}by {=} yb &\quad \quad
\se_{22}{\equiv}by_2 {=} y_2b &\quad \quad
\se_{5}{\equiv}ab{=}ba &\quad \quad
 \se_{6}{\equiv}bay_3{=}y_3ba  \\
  \se_{13}{\equiv}\semp{=}\semp  &\quad \quad
\se_{14}{\equiv}aby_4{=}y_4ba &\quad \quad
\se_{9}{\equiv}ab{=}ba &\quad \quad
\se_{10}{\equiv}x_4ab{=}abx_4  \\
\se_{17}{\equiv}\semp{=}\semp &\quad \quad
\se_{18}{\equiv}x_5ab{=}bax_5 & \quad \quad
\se_{11}{\equiv}abx{=}bax &\quad \quad
\se_{12}{\equiv}bxay_5{=}y_5bax \\
\se_{19}{\equiv}xa{=}ax &\quad \quad
\se_{20}{\equiv}bxay_6{=}y_6bax  & \quad \quad
\se_{23}{\equiv}\semp{=}\semp &\quad \quad
\se_{24}{\equiv}xa{=}ax 
\end{array}
\]

\subsection{Combining with regular memberships}\label{algo.reg}

%
 We propose to derive
a finite representation of all solutions
 of a conjunction
of word equations and regular expressions.
using procedure \code{widentree}.
Procedure \code{widentree} takes
a pair of a reduction tree \form{\utree{n}} of \form{\ses_0} (generated by {\allSat})
and a conjunction of
 regular expressions \form{\astart} as inputs
and manipulates the reduction tree \form{\utree{n}}  through the following
three steps.
 First, it constructs
a DFA \form{A{=}\dfa{Q}{\alphabet}{\transition}{q_o}{Q_F}} which generates
 the same
 language
with \form{\astart}. Let \form{m} be the number states
in \form{Q} and \form{M{=} m!}. Intuitively,
\form{m{+}1} is the minimal times
of a cycle to obtain the minimal solutions of \form{\ses_0{\wedge}\astart}.
 \form{M} is the periodic of the 
\begin{wrapfigure}{r}{0.57\textwidth} \savespace  \saveone
  \savespace  \savespace \savespace \savespace
\begin{center}
\begin{minipage}{0.57\textwidth}
\scriptsize
\begin{scriptsize}
\begin{tikzpicture}[node distance=18mm,level 1/.style={sibling distance=22mm},
      level 2/.style={sibling distance=25mm},
                        level distance=22pt, draw]
  \tikzstyle{every state}=[draw,text=black]

\node (A)                    {\textcolor{blue}{\small $\se_0{\equiv}abx {=} xba$}};
  \node         (C) [below right=3mm and 0mm of A] {\small{$\se'_{12}{\equiv}bax_1 {=} x_1ba$}};
  \node         (D) [below left=3mm and 0mm of C] {\small ${\se_{21}{\equiv}ba{=}ba}$};
  \node         (E) [below right=3mm and 0mm of C] {\small \textcolor{blue}{$\se_{22}'{\equiv}abx_2 {=} x_2ba$}};
\node (A1)          [below left=2mm and 22mm of E]          {\small \textcolor{blue}{$\se^1_0{\equiv}abx_3 {=} x_3ba$}};
  \node         (C1) [below right=3mm and 0mm of A1] {\small {$\se^{1'}_{12}{\equiv}bax_4 {=} x_4ba$}};
  \node         (D1) [below left=3mm and 0mm of C1] {\small \underline{${\se^1_{21}{\equiv}ba{=}ba}$}};
  \node         (E1) [below right=3mm and 0mm of C1] {\small \textcolor{blue}{$\se^{1'}_{22}{\equiv}abx_5 {=} x_5ba$}};
\node (A2)          [below left=2mm and 22mm of E1]          {\small \textcolor{blue}{$\se^2_0{\equiv}abx_6 {=} x_6ba^\bigstar$}};
  \node         (C2) [below right=3mm and 0mm of A2] {\small {$\se^{2'}_{12}{\equiv}bax_7 {=} x_7ba$}};
  \node         (D2) [below left=3mm and 0mm of C2] {\small \underline{${\se^2_{21}{\equiv}ba{=}ba}$}};
  \node         (E2) [below right=3mm and 0mm of C2] {\small \textcolor{blue}{$\se^{2'}_{22}{\equiv}abx_8 {=} x_8ba^\bigstar$}};

  \path (A) edge[->]     node [pos=0.3,right] {\small \form{[ax_1/x]}} (C)
        (C) edge [->]    node [pos=0.3,left] {\small \form{[\semp/x_1]}} (D)
            edge [->]    node [pos=0.3,right] {\small \form{[bx_2/x_1]}} (E)
        (E) edge [->]  node [pos=0.3,below] {\small $[x_3/x_2]$} (A1)
        (A1) edge[->]     node [pos=0.3,right] {\small  \form{[ax_4/x3]}} (C1)
        (C1) edge [->]    node [pos=0.3,left] {\small \form{[\semp/x_4]}} (D1)
            edge [->]    node [pos=0.3,right] {\small \form{[bx_5/x_4]}} (E1)
        (E1) edge [->]  node [pos=0.3,above] {\small $[x_6/x_5]$} (A2)
        (A2) edge[->]     node [pos=0.3,right] {\small  \form{[ax_7/x_6]}} (C2)
        (C2) edge [->]    node [pos=0.3,left] {\small \form{[\semp/x_7]}} (D2)
             edge [->]    node [pos=0.3,right] {\small \form{[bx_8/x_7]}} (E2)
(E2) edge [->,bend right=45,dotted]  node [pos=0.5,right] {\small $[x_6/x_{8}]$} (A)
;
\end{tikzpicture}
 \savespace 
\end{scriptsize}\savespace \saveone
\end{minipage}
\caption{Extending Tree $\utree{2}$ with \form{x \in a^*}.}\label{fig.unfolding.tree.reg}
 \savespace   \savespace
\end{center}   \savespace 
 \savespace
  \savespace \savespace \saveone
\end{wrapfigure}
sets of all solutions.
 Secondly, it unfolds every cycles \form{\ctree{\ses_c}{\ses_b}{\sub}}
 of \form{\utree{n}}
 \form{m{+}M} times.
It updates \form{\lb} functions by eliminating
   the old back-link
between 
 \form{\ses_b}  and 
 \form{\ses_c}
prior to generating a new back-link between \form{\ses_{b_{m{+}M}}}
and
 \form{\ses_{c_m}}
as well as marking \form{\ses_{b_{m{+}M}}}
as closed. We note
that a solution corresponding to a trace
which visits the companion \form{\ses_{c_m}} \form{l{+}1} times
(i.e., including \form{k} new cycles above)
 has the form:
\form{S \equiv u_1w^{m{+}1{+}lM}u_2}.
 Lastly, it collects label \form{\sub_j} for every
 path \form{(\ses_0, \ses_j)} in the new tree
where 
\form{\ses_0} is the root, \form{\ses_j} is a leaf node
that is neither unsatisfiable nor a bud prior to
evaluating \form{\ses_j}.
From \form{\sub_j}, it generates the following formula: 
\form{\pure_j {\equiv} \bigwedge \{X_i {=} s_i {|} (s_i/X_i) {\in} \sub_j \} {\wedge} \astart}.
\form{\pure_j} is in a {\em straight-line} fragment 
where the satisfiability problem {\strProb} is decidable \cite{Lin:POPL:2016}.

\begin{figure}[tb]
\begin{center}
 \savespace
\begin{tikzpicture}[node distance=18mm,level 1/.style={sibling distance=22mm},
      level 2/.style={sibling distance=25mm},
                        level distance=22pt, draw]
  \tikzstyle{every state}=[draw,text=black]

\node (A)                    {\textcolor{blue}{$\se_0{\equiv}abx {=} xba^\bigstar$}};
  \node         (B) [below left=2mm and 4mm of A] {\underline{$\se_{11}{\equiv}ab{=}ba$}};
  \node         (C) [below right=2mm and 4mm of A] {{$\se_{12}{\equiv}bax_1 {=} x_1ba$}};
  \node         (D) [below left=2mm and 4mm of C] {${\se_{21}{\equiv}\semp{=}\semp}$};
  \node         (E) [below right=2mm and 4mm of C] {\textcolor{blue}{$\se_{22}{\equiv}abx_2 {=} x_2ba^\bigstar$}};

  \path (A) edge[->]     node [pos=0.3,left] {\form{[\semp/x]}} (B)
            edge[->]     node [pos=0.3,right] {\form{[ax_1/x]}} (C)
        (C) edge [->]    node [pos=0.3,left] {\form{[\semp/x_1]}} (D)
            edge [->]    node [pos=0.3,right] {\form{[bx_2/x_1]}} (E)
        (E) edge [->,bend right=30,dotted]  node [pos=0.3,above] {$[x/x_2]$} (A);
\end{tikzpicture}
 \savespace \savespace 
\caption{Reduction Tree $\utree{2}$.}\label{fig.unfolding.tree2}
\end{center} 
  \savespace \saveone
 \savespace  \savespace
\end{figure}
\begin{example}
To illustrate our first decidable fragment, we use the following word equation as a running example:
\form{abx{=}xba} where \form{x} is string variable and \form{a}, \form{b} are letters. This is the first equation in the motivating example (section \ref{sec.mov.example}).
Its reduction tree \form{\utree{2}} is presented in
  Fig. \ref{fig.unfolding.tree2}.
We now illustrate how to use procedure \code{widentree} above to
 extend the tree to represent
all solutions of \form{\pure_1{\equiv}abx{=}xba \wedge x{\in} a^*}.
To do that, \code{widentree} first
derives  for the regular expression \form{x\in a^*}
a DFA
as: \form{A = \dfa{\{q_0\}}{\{a\}}{\{((q_0,a), a)\}}{q_0}{\{q_0\}}},
and then identifies \form{m{=}1} and \form{M{=}m!{=}1}.
Secondly, it clones the cycle of \form{\utree{2}} 
 \form{m+M=1+1=2} more times.
The resulting tree is described in
 Fig. \ref{fig.unfolding.tree.reg}.
Lastly, it discharges the satisfiability of solutions
corresponding to the paths
which start from the root and end at leaf nodes \form{\se_{21}},
\form{\se^1_{21}} or \form{\se^2_{21}}.
 The evaluation is as follows.
\[
\begin{array}{clc}
\text{path} & \text{formula} & \text{outcome}\\
\hline
(\se_0,\se_{21}) & x{=}ax_1 {\wedge} x_1{=}\semp \wedge x{\in}a^* & \sat \\
(\se_0,\se^1_{21}) &
 x=ax_1 {\wedge}  x_1{=}bx_2  {\wedge} x_2{=}x_3  \wedge x_3{=}ax_4
{\wedge} x_4{=}\semp \wedge x{\in}a^*& \unsat \\
(\se_0,\se^2_{21}) &
\begin{array}{l}
 x=ax_1 {\wedge}  x_1{=}bx_2  {\wedge} x_2{=}x_3  \wedge x_3{=}ax_4
{\wedge} x_4{=}bx_5 {\wedge}  \\
 x_5{=}x_6  {\wedge} x_6{=}ax_7
{\wedge} x_7{=}\semp \wedge x{\in}a^*
\end{array}
& \unsat \\
 \end{array}
 \]
\end{example}

 \subsection{Correctness}\label{algo.correct}

In the following, we formalize
the correctness of the proposed procedures
and show the relationship between the derived
reduction tree with \form{\edtl} system \cite{L:Book}.


\begin{proposition}\label{thm.tree}
Suppose that {\allSat} takes a conjunction
 \form{\ses} as input,
and produces a cyclic reduction graph  \form{\utree{n}}
  in a finite time.
  Then, \form{\utree{n}} represents all solutions
of \form{\ses}.
\end{proposition}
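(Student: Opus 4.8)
The plan is to establish the two inclusions that together constitute the statement ``$\utree{n}$ represents all solutions of $\ses$'': (soundness) every solution trace of $\utree{n}$ yields a genuine solution $\langle\sstack,\istack\rangle\force\ses$ of the root, and (completeness) every solution of $\ses$ arises from some solution trace. Since the proposition already assumes {\allSat} terminates, I may treat $\utree{n}$ as a fixed finite object and need not reprove termination; all the work is in relating paths (with cycle unrollings) to solutions. Note also that here $\ses$ is a pure conjunction of word equations, so no regular or arithmetic side conditions enter yet.

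First I would isolate the single-step invariant that drives both directions. By construction of \red, every reduction edge $(\ses_i,\sub_j,\ses_{ij})$ satisfies $\ses_{ij}\equiv\ses_i\sub_j$, and \matchse{} only deletes a common head symbol from the two sides of an equation, an operation that preserves the solution set. Hence I would prove a \emph{Reduction Lemma}: a solution of the child lifts to a solution of the parent by composing with $\sub_j$ (for instance, for $\sub_j{=}[c_2x_1'/x_1]$ one sets $\sstack(x_1){=}c_2\,\sstack'(x_1')$), and conversely, by the case-completeness of \comp, every solution of the parent projects onto a solution of (at least) one child. The essential observation is that the cases of \comp{} exhaust the possible relative orderings of the two carrier terms: since both heads start at the same position $l_1{=}l_2$ in any solution word, either a variable head is $\semp$, or it begins with the opposing letter (Case~1), or one variable head is a prefix of the other (Case~2). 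The overlap between Cases~2b and~2d costs only minimality, not completeness.

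The soundness direction then follows by induction along a solution trace. For a cycle $\ctree{\ses_c}{\ses_b}{\sub_{cyc}}$ I would use the back-link equivalence $\ses_c\equiv\ses_b\sub_{cyc}$ to transport any solution of the bud $\ses_b$ to a solution of the companion $\ses_c$, so that any number of cycle iterations followed by an exit to a satisfiable leaf composes, via the Reduction Lemma, into a solution of $\ses$; this is exactly the reading of the family $\sub_{ij}$ given after Fig.~\ref{fig.unfolding.tree}.

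The completeness direction is where the real work lies, and I expect it to be the main obstacle. Given an arbitrary $\langle\sstack,\istack\rangle\force\ses$, I would follow it down the tree, at each node using the projection half of the Reduction Lemma to select a child consistent with the solution. To guarantee this descent is finite I would introduce the well-founded measure $\mu$ equal to the total length of the current solution word, i.e.\ the sum of $\sleng{\sstack(X)}$ over the head variables. Each \matchse{} step strictly decreases $\mu$; the delicate point is the back-link, where the descent returns to a companion via $\sub_{cyc}$. Because every entry $X/X'\in\sub_{cyc}$ requires $X'$ to be a proper subterm of $X$, the companion variable $X$ receives, under the renaming, the strictly shorter value the solution had assigned to its descendant $X'$, so $\mu$ strictly decreases across the cycle as well. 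Discharging this strict-decrease obligation — the standard well-foundedness condition of cyclic proofs — is the crux: it forbids infinite descent and forces the process to terminate at an irreducible leaf. That leaf cannot be trivially false, since neither a leading-letter clash nor an unsatisfiable length over-approximation can hold of an actual solution, so it is trivially true and hence satisfiable. The recorded trace, annotated with the number of times each cycle was unrolled, is then a solution trace representing the given $\langle\sstack,\istack\rangle$, which completes the inclusion and the proposition.
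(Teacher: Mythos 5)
Your proposal follows essentially the same route as the paper's proof. Your ``Reduction Lemma'' is exactly the paper's pair of lemmas---Lemma \ref{lem.sound.match} (matching preserves satisfiability in both directions, with the suffix property) and Lemma \ref{lem.sound.reduce} (conditions C1/C2, the two halves of the case-completeness of \comp)---and your well-founded descent for the completeness direction is the contrapositive of the paper's Lemma \ref{lem.unsat}: the paper assumes a cyclic proof (all leaves unsatisfiable or back-linked, subject to the global soundness condition of Definition \ref{cyclic}) and derives a contradiction from the finiteness of solution-word lengths, which is the same infinite-descent idea you run forwards by chasing a given solution down the tree. So the decomposition and the key ideas coincide; only the packaging differs (direct descent versus proof by contradiction over pre-proofs).

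There is, however, one concrete overstatement at exactly the point you call the crux. It is not true that every \matchse{} step, nor every traversal of a cycle, strictly decreases your measure \form{\mu}: if the matched head is a variable \form{u} with \form{\sstack(u){=}\semp}, or if the progressing substitutions along a cycle are all of the form \form{[y{\cdot}x'/x]} with \form{\sstack(y){=}\semp}, then \form{\mu} is unchanged---the back-link condition only makes \form{X'} a \emph{subterm} of \form{X}, not a variable carrying a strictly shorter value. The descent can be repaired because \comp{} always also offers the \form{\semp}-children: if you commit to selecting \form{[\semp/x]} whenever the solution assigns the relevant head variable the empty word, then any back-link you actually traverse contains a progressing step consuming a nonempty prefix, so \form{\mu} does strictly decrease across it; alternatively, use the lexicographic measure \form{(\mu, N)} with \form{N} the notational length, which the \form{\semp}-steps strictly decrease. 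This is precisely why the paper phrases its soundness condition as demanding an \emph{infinitely progressing} trace on every infinite path rather than an edge-by-edge decrease. With that repair (a choice strategy for consistent children, or the refined measure) your argument is sound and matches the paper's.
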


\begin{proposition}\label{lemma.edtl.re}
Suppose \form{\astart \equiv X_1 {\in} \regex_1 {\wedge} ...{\wedge}X_n {\in} \regex_n} (\form{X_i {\in} \FV(\ses_0), \forall 1 \leq i \leq n})
 be a conjunction of
 regular memberships and
\form{\utree{n}} be the reduction tree
derived for \form{\ses_0}. Then, \form{\widentree{\utree{n}, \astart}}
produces
a reduction tree representing all solutions of
\form{\ses_0 \wedge \astart}.
\end{proposition}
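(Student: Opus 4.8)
The plan is to prove the two inclusions separately, reducing the word-equation part to Proposition \ref{thm.tree} and isolating the genuinely new content --- the interaction with the regular language --- into a periodicity argument on the DFA $A$. Throughout I write a solution as a pair $\langle \sstack, \istack \rangle$, and I use that by Proposition \ref{thm.tree} the input tree $\utree{n}$ already represents every solution of $\ses_0$: each such solution arises from a solution trace of $\utree{n}$ whose traversals of a cycle $\ctree{\ses_c}{\ses_b}{\sub}$ contribute repeated factors, so that the solution word of a traced variable has the shape $u_1 w^{k} u_2$, with $w$ the word read off from one turn of the cycle and $k$ the number of turns.

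First, soundness ($\subseteq$). Every solution trace of $\widentree{\utree{n}, \astart}$ maps, by collapsing the unfolded copies of each cycle and re-identifying the relinked bud with its companion, onto a solution trace of $\utree{n}$ that traverses the corresponding cycle a definite number of times; hence the represented assignment satisfies $\ses_0$ by Proposition \ref{thm.tree}. Moreover, by construction each retained path $(\ses_0,\ses_j)$ yields $\pure_j \equiv \bigwedge\{X_i {=} s_i \mid (s_i/X_i)\in\sub_j\}\wedge\astart$, and only assignments for which $\pure_j$ is satisfiable are represented; since $\pure_j$ explicitly conjoins $\astart$, every represented assignment also satisfies $\astart$, so it is a solution of $\ses_0\wedge\astart$. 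I will note in passing that each $\pure_j$ lies in the straight-line fragment, so this satisfiability test is itself decidable \cite{Lin:POPL:2016}.

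Next, completeness ($\supseteq$), which is the crux. Take any solution $\langle \sstack, \istack \rangle$ of $\ses_0\wedge\astart$. Being a solution of $\ses_0$, it corresponds (Proposition \ref{thm.tree}) to a solution trace of $\utree{n}$; if that trace traverses a cycle $k$ times, the relevant solution word is $u_1 w^{k} u_2$. I must show this trace survives in the widened tree, i.e. that the relinking at offset $m$ with period $M{=}m!$ does not discard it. The plan is an iteration argument: reading the cycle word $w$ induces the transformation $f:Q\to Q$, $f(q)=\transition^*(q,w)$, on the $m$-state DFA $A$; the sequence $f^0,f^1,f^2,\dots$ is eventually periodic, with pre-period bounded by the number of states and period dividing $\mathrm{lcm}(1,\dots,m)$, hence dividing $M{=}m!$. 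Consequently the predicate ``$u_1 w^{k} u_2\in\classlang{A}$'' is, as a function of $k$, arbitrary for small $k$ but constant on each residue class modulo $M$ once $k$ exceeds the pre-period. The widened tree is built precisely to mirror this: the $m{+}M$ unfoldings expose the pre-periodic exits together with one full period of $M$ exits, while the new back-link from $\ses_{b_{m+M}}$ to $\ses_{c_m}$ realises the period-$M$ repetition. Thus for the given $k$ there is a path whose number of cycle iterations matches $k$ in residue modulo $M$ (and exactly, below the pre-period), and along which $\pure_j$ is satisfiable --- because membership in $\classlang{A}$ depends only on that residue. Hence $\langle \sstack, \istack \rangle$ is represented.

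The main obstacle is exactly this periodicity step: I expect most of the work to go into justifying that $m$ and $M{=}m!$ are the correct bounds, i.e. that the pre-period and period of $k\mapsto f^k$ are bounded by the number of DFA states and by $m!$, and --- more delicately --- into lifting the single-variable, single-cycle picture to the full statement, where $\astart$ constrains several variables $X_1,\dots,X_n$ at once and $\utree{n}$ may contain several (possibly interacting) cycles. I would handle the multi-membership case by taking $A$ to be the product automaton for $X_1{\in}\regex_1,\dots,X_n{\in}\regex_n$, so a single $m$ and $M$ suffice, and the multi-cycle case by applying the bound to each cycle independently, using that \code{widentree} unfolds \emph{every} cycle of $\utree{n}$. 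Finally I would observe that finiteness of the widened tree is immediate, since unfolding each cycle a fixed $m{+}M$ times and relinking adds only finitely many nodes and edges.
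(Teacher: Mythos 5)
Your proof is correct and, at its core, takes the same route as the paper: both arguments hinge on the periodicity of DFA membership under pumping, with threshold \form{m} (the number of states of \form{A}) and period \form{M{=}m!}, which are exactly the quantities \code{widentree} is built around. The difference is where that periodicity comes from. The paper's appendix imports it wholesale as Schulz's Lemma (Lemma \ref{lemma.Schulz}, \cite{Schulz:1990:MAW}): exponents with \form{t \equiv_{A} t'} (i.e.\ \form{t \,{=}\, t' \bmod M} and \form{t{>}m} iff \form{t'{>}m}) leave membership in \form{\classlang{A,\Gamma}} unchanged, and the proof is then two sentences observing that traces through the new back-links produce words \form{u_1w^{t_1}\ldots u_kw^{t_k}u_{k{+}1}} that are \form{\equiv_{A}}-equivalent to the words of the loop-free traces, whose \form{\pure_j} are explicitly checked. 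You instead re-derive this lemma from first principles via the transformation \form{f(q){=}\transition^*(q,w)} and the eventual periodicity of \form{k \mapsto f^k} (pre-period at most \form{m}, period dividing \form{\mathrm{lcm}(1,\dots,m)}, hence \form{m!}), which is precisely a proof of Lemma \ref{lemma.Schulz} at a single pumping site; your ``independent'' treatment of several cycles matches the multi-exponent form of that lemma, and your product automaton plays the role of the paper's single DFA with \form{A}-constraints \form{\Gamma_i \subseteq Q \times Q} for the several memberships \form{X_i \in \regex_i}. Your version buys self-containedness and transparent bounds (it even shows \form{M} could be sharpened from \form{m!} to \form{\mathrm{lcm}(1,\dots,m)}); the paper's buys brevity and a formulation that is uniform over all variables at once. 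One cosmetic slip: in your soundness paragraph you claim ``only assignments for which \form{\pure_j} is satisfiable are represented,'' but assignments arising from traces that iterate the new back-link are not literally the ones tested by \form{\pure_j} --- the leaf is classified using the loop-free trace --- so soundness of the pumped assignments also needs the periodicity, in the ``if'' direction; since your residue argument (like Lemma \ref{lemma.Schulz}) is an ``if and only if,'' it already covers this, so nothing is missing, only the attribution of which step does the work.
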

\repconf{The proof is presented in App. \ref{proof.lemma.edtl.re}.}{}


An {\em interactionless Lindenmayer system} (0L system) \cite{L:Book}
is a parallel rewriting system which was introduced
in 1968 to model the development of multicellular system.
The class of \form{\edtl} languages forms perhaps
the central class in the theory of L systems.
The acronym EDT0L refers to
{\bf E}xtended, {\bf D}eterministic,
 {\bf T}able, {\bf 0} interaction,
and {\bf L}indenmayer. (More discussion on EDT0L language is left in App. \ref{sec.edt0l}.)
In the following, we give a formal definition
of \form{\edtl} system.
\begin{defn}
An \form{\etl} system is a quadruple \form{G{=}\lgrammar{\var}{\alphabet}{\productions}{S}} where
 \form{\var} is a finite nonempty set of nonterminals
 (or variables),
 \form{\alphabet} is a finite set of terminals
and
 disjoint from \form{\var},
 \form{S {\in} \var} is the start variable (or start symbol),
 \form{\productions} is a finite set each element of which
 (called a {\em table}) is a finite binary relation
included in \form{\var \times (\var {\cup} \alphabet)^*}.
It is assumed that \form{\forall \production \in \productions,
\forall x {\in} \var, \exists tr {\in} (\var \cup \alphabet)^*} such that \form{(x,tr) \in \production}.
An \form{\edtl} system is a {\em deterministic} \form{\etl}
system
in which \form{\forall \production {\in} \productions,
\forall x\in \var, \exists! tr {\in} (\var \cup \alphabet)^*} s.t. \form{(x,tr) \in \production}.
\end{defn}
For a production \form{(x{,}tr)} of \form{\production}
in \form{\productions}, we often write:
 \form{x \rightarrow tr}. We also write \form{x \rightarrow_\production tr} for ``\form{x \rightarrow tr} is in \form{\productions}".
Let
\form{G{=}\lgrammar{\var}{\alphabet}{\productions}{S}}
be an \form{\etl} system.
\begin{enumerate}[noitemsep,topsep=0pt]
\item Let \form{x},\form{y\in (\var \cup \alphabet)^*}, and
\form{x} contains \form{k} nonterminals \form{v_1{,}...,v_k} in
 \form{\var}.
 We say that \form{x} directly derives \form{y} (in \form{G}), denoted as \form{x\Rightarrow_G y}, if there is a
 \form{\production \in \productions} such that
 \form{y} is obtained by substituting \form{v_i}
by \form{s_i}, respectively
 for all \form{i \in \{1,...,k\}}, where
 \form{v_1 \rightarrow_\production s_1},
..., \form{v_k \rightarrow_\production s_k}.
In this case, we also write \form{x\Rightarrow_\production y}.
\item Let \form{\Rightarrow^*_G} be the
  reflexive transitive closure 
of the relation \form{\Rightarrow}.
If  \form{x\Rightarrow^*_G y} then we say that \form{x} derives \form{y} (in \form{G}).
\item The language of \form{G}, denoted by \form{\langg{G}}, defined
by \form{\langg{G} = \{w\in \alphabet^* ~|~ S \Rightarrow^*_G w \}}.
\end{enumerate}

A grammar system that is
{\em k-index} is restricted so that, for
every word generated by the grammar, there is some
successful derivation where
at most {\em k} nonterminals
appear in every sentential form
of the derivation \cite{ROZENBERG:1978}.
A system is finite-index if it is \form{k}-index
for some \form{k}.
We use \form{\classlangfin{L}} to denote the class of all \form{L}
languages of finite-index.
\begin{cor}\label{thm.edtf}
A reduction tree
derived by {\allSat} 
  forms a finite-index
\form{\edtl} system.
\end{cor}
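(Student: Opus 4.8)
The plan is to read an $\edtl$ system directly off a reduction tree $\utree{n}{=}(V,E,\backfun)$ returned by {\allSat}, and then bound its index using the finiteness of $V$. First I would set the terminals to be $\alphabet$ and the nonterminals $\var$ to be a fresh start symbol $S$ together with every string variable occurring in some node of $V$; this set is finite because $V$ is finite (Proposition~\ref{thm.tree}) and each node is a single word equation. For the tables $\productions$ I would take one table per terminating branch (a root-to-satisfiable-leaf path) and one table per cycle $\ctree{\ses_c}{\ses_b}{\sub}$ in $\backfun$. Each such table contains the root production $S \rightarrow s_0$, where $s_0$ is the left-hand word of the root $\ses_0$, together with $X \rightarrow s$ for every edge label $\sub{=}[s/X]$ traversed along that branch or cycle, and the identity rule $Y \rightarrow Y$ for every remaining variable. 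Since the fresh variables produced by \comp{} are globally distinct and each variable is the target of at most one substitution along a single branch or cycle, no variable receives two right-hand sides inside one table; hence each table is deterministic and $G{=}\lgrammar{\var}{\alphabet}{\productions}{S}$ is a genuine $\edtl$ system.

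Next I would verify that $G$ generates exactly the solution words of $\ses_0$. A derivation applies tables one at a time, and by construction applying a branch-table (resp.\ a cycle-table) rewrites the currently active nonterminals exactly as moving along that branch (resp.\ once around that cycle) rewrites the left-hand word of the current node of $\utree{n}$. Consequently a terminal word is obtained precisely along a walk that starts at the root, threads the cycles finitely often, and ends at a satisfiable leaf; by Proposition~\ref{thm.tree} these walks enumerate all solutions of $\ses_0$ (and, after the combination with regular memberships, all solutions of $\ses_0 \wedge \astart$ by Proposition~\ref{lemma.edtl.re}). For the finite-index claim it then suffices to look at these canonical, walk-shaped derivations: after erasing the terminal letters already emitted, the sentential form of such a derivation is the left-hand word of the node of $\utree{n}$ currently reached, so its nonterminals are exactly the variable occurrences of that node. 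Because $V$ is finite, the number of variable occurrences over all nodes of $V$ admits a uniform bound $k$; thus every word of $\langg{G}$ has a derivation all of whose sentential forms contain at most $k$ nonterminals, i.e.\ $G$ is $k$-index and $\langg{G} \in \edtlfs$.

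The main obstacle is the tension between the \emph{determinism} demanded of $\edtl$ tables and the need to capture the \emph{branching} family of all solutions: the argument works only because the nondeterminism is pushed into the choice of which table to apply, while each individual table fixes a unique right-hand side per variable and is completed by identities. The real work, then, is to show that this per-branch/per-cycle family of completed tables generates neither more nor fewer words than the walks of $\utree{n}$ — in particular that identity completion cannot splice together fragments of different branches into a spurious terminal word — and that the walk-shaped derivations realizing each word are exactly the ones whose sentential forms track nodes of $V$, so that the bound $k$ obtained from the finiteness of $V$ applies. A secondary check is that a variable is re-substituted only when a cycle closes, which is what keeps each table deterministic.
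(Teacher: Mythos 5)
Your construction is essentially the paper's own proof (App.~\ref{proof.thm.edtf}): there the function {\extractedtl} likewise reads the grammar off the tree with nonterminals the tree variables plus a fresh $S$, takes one deterministic table per path from the root to each satisfiable leaf \emph{or bud} (the back-link rule $X \rightarrow s$ being folded into that table, which is how your separate per-cycle tables appear in the paper's version), argues determinism because each table arises from a single path, and bounds the index by the maximal notational length of a node --- the same bound you obtain from the finiteness of $V$ via canonical walk-shaped derivations. Your explicit identity-completion of the tables (required by the EDT0L definition, where every table must define every nonterminal) and your flagged check that mixed table applications cannot splice branches are details the paper's proof leaves implicit rather than a different method, so the two arguments coincide in substance.
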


\savespace 
\begin{example}\label{edtl.ex}
The tree in the Fig. \ref{fig.unfolding.tree2} above 
forms the following finite-index \form{\edtl}.\\
 \form{G{=}\lgrammar{\{S, x, x_1,x_2\}}{\alphabet}{\{P_1,P_2\}}{S}}
where
\form{\production_1=\{(S, abx), (x,ax_1), (x_1,\semp) \}}
and
\form{\production_2=\{(S, abx), (x,ax_1), (x_1,bx_2), (x_2,x) \}}.
\end{example}


\section{Decision Procedure}\label{sec.deci}

\hide{In this section, 
we present a decision procedure
for the satisfiability problem (subsection \ref{sec.deci.algo}).
Next, we show the correctness of the proposed procedure
(subsection \ref{sec.deci.correct}).
Finally, we identify
a subfragment of \form{\stredtsem} that
syntactically defines a decidable fragment
(subsection \ref{algo.dec.syn}).}


\label{sec.deci.algo}
We present decision procedure \code{\deci}
to handle {\strProb}. 
%
\form{\deci} takes a constraint, 
say \form{\ses{\wedge} \astart{\wedge}\ariths}, as input
and returns {\sat} or {\unsat}. 
It works as follows.
\begin{enumerate}[leftmargin=*] \setlength\itemsep{0em}
\item
 First, it invokes {\allSat} to construct a reduction tree \utree{n}
as a finite representation of all solutions of \form{\ses}.
After that, \utree{n} is post-processed
using procedure {\postpro} as below to explicate all free variables.
This step is critical to the next step. 
\item
 Secondly, 
it uses procedure \code{widentree} to extend \utree{n} 
 with membership predicates \form{\astart}
and obtains
 \utree{n{+}1}.
Note that unsatisfiable nodes in the reduction tree are eliminated.
\item
 Thirdly, it
computes the length constraints
which are precisely implied by
all solutions generated 
through
procedure \extractpres{\utree{n{+}1}}.
These length constrains, say \form{\arith_w}, are computed
as an existentially quantified Presburger formula.
\item
Lastly, \form{\deci}
solves that satisfiability of the conjunction
\form{\arith_w {\wedge}\ariths} which is in the Presburger
arithmetic and decidable \cite{Fischer:SCP:1974}.
\end{enumerate}
%


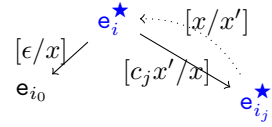
\begin{wrapfigure}{r}{0.33\textwidth} 
\begin{center}
 \savespace \savespace
\begin{tikzpicture}[node distance=18mm,level 1/.style={sibling distance=22mm},
      level 2/.style={sibling distance=25mm},
                        level distance=22pt, draw]
  \tikzstyle{every state}=[draw,text=black]

\node (A)                    {\textcolor{blue}{$\se_i^\bigstar$}};
  \node         (B) [below left=4mm and 4mm of A] {{$\se_{i_0}$}};
  \node         (C2) [below right= 4mm and 12mm of A] {\textcolor{blue}{$\se_{i_j}^\bigstar$}};

  \path (A) edge[->]     node [pos=0.3,left] {\form{[\semp/x]}} (B)
            edge[->]     node [pos=0.3,below] {\form{[c_jx'/x]}} (C2)
        (C2) edge [->,bend right=30,dotted]  node [pos=0.3,above] {$[x/x']$} (A)
;
\end{tikzpicture}
 \savespace \savespace 
\caption{Free Variable $x$.}\label{fig.free.tree}
\end{center}  \saveone  \savespace \savespace \savespace \savespace
\end{wrapfigure}
\paragraph{Post-Processing}
Given a path from the root \form{\se_0}
 to a satisfiable leaf node \form{\se_i},
a variable \form{x}
 appearing in this path
is called {\em free} if it has not been reduced yet.
This means \form{x} can be assigned any value
in \form{\alphabet^*} in a solution.
Procedure {\postpro} aims to
replace a free variable by a sub-tree
which represents
 for arbitrary values
in \form{\alphabet^*}.
The sub-tree is presented in Fig. \ref{fig.free.tree}.
This tree has a {\em base} leaf node
  (with substitution \form{[\semp/x]})
and \form{k} cycles (\form{k} is the size of the
alphabet \form{\alphabet}) one of which represents for
a letter \form{c_i \in \alphabet}.
If a satisfiable leaf node has more than one free variable,
each variable is replaced by such sub-tree
and these sub-trees are connected together at base nodes.





\paragraph{Correctness}
\label{sec.deci.correct}
The correctness of 
step 1 and step 2 have been shown
in the previous section.
%
%
%
%
Thus, the remaining tasks to show {\deci} is a decision procedure in a fragment
 are the termination of {\allSat}
as well as 
the decidability of
\extractpres{\utree{n{+}1}}.

\section{\form{\stredt} Decidable Fragment}\label{sec.quad}

Computing length constraint in this fragment is based on
   Parikh's Theorem \cite{Parikh:JACM:1966}, one of the most
  celebrated theorem in automata theory.
The Parikh image  (a.k.a. letter-counts) of a word over a given
alphabet counts the number of occurrences of each symbol
in the word without regard to their order.
The Parikh image of a language is
the set of  Parikh images of the words in the language.
A language is Parikh-definable if its
Parikh image precisely coincides with semilinear sets
which, in turn, can be
 computed as a 
 Presburger formula.
In particular,
Parikh's Theorem \cite{Parikh:JACM:1966}
 states that
 context-free languages (and regular languages, of course)
are Parikh-definable.
In fact, given a context-free grammar, we can
compute its Parikh image in polynomial time \cite{Verma2005,Esparza:IPL:2011}. 
Moreover, the authors in \cite{ROZENBERG:1978} show that
 finite-index EDT0L languages \cite{L:Book}
are also Parikh-definable.
In our work, we use \form{\Par{L}} to denote
the Parikh images computed for the language \form{L}.


Given a constraint, say \form{ 
 \ses{\wedge} \astart{\wedge}\pure}, is said to be in the fragment if
the following two conditions hold.
First, {\allSat} terminates on \form{\ses}.
Secondly, \form{\pure \equiv \arith_1{\wedge}..{\wedge}\arith_n}
where \form{\FV(\arith_i)} contains at most one string length \form{\forall i\in\{1...n\}}.
%
By the first condition, {\deci} can derive for
\form{\ses} a finite-index \form{\edtl}
system (Corollary \ref{thm.edtf}).
Moreover, finite-index \form{\edtl}  can be translated into a Parikh-equivalent
DFA (by Parikh's Theorem \cite{Parikh:JACM:1966,ROZENBERG:1978}).
This means length of each string variable in the set of all solutions
can be computed as a DFA.
By the second condition, each constraint \form{\arith_1}
is based on the length of one string variable. Hence, this constraint
can be translated into another DFA.
As regular languages are closed under intersection.
Therefore, the satisfiability of \form{\pure} is decidable.

{\deci} uses
\extractpres{\utree{n{+}1}} to compute the length constraints
represented for all solutions of \form{\ses{\wedge} \astart}
as follows. 
Firstly, it transforms \form{\utree{n{+}1}} 
into a
finite-index \form{\edtl}
system. 
Secondly, it transforms the \form{\edtl} grammar into
a Parikh-equivalent CFG \form{G} (see \cite{ROZENBERG:1978}).
Lastly, it computes the length constraints \form{\arith_w}
for every string variables
as: \form{\arith_w{\equiv} \bigwedge \{\Par{\langg{G_x}} \mid x \in \FV(\ses{\wedge} \astart) \} }.




\subsection{Parikh Image of {\cfg}}
In order to infer the Parikh image for a given {\cfg},
 we first transform the CFG into a
Parikh equivalent communication-free Petri net
 and then compute the Parikh image of the communication-free Petri net
\cite{Verma2005}. 
The correctness was
 presented in \cite{Esparza:FCT:1995,Seidl:ICALP:2004,Verma2005}.
Procedure \form{Par} takes a {\cfg}
\form{G{=}\cfgrammar{\var}{\alphabet}{\production}{s_0}}
as input and produces a Presburger formula
to represents the Parihk image of
all words derived from the start symbol \form{s_0}.
In particular, it first transforms the {\cfg}
into a communication-free Petri net
and then generates a Presburger formula \form{\arith_G} for
this net.

 A net \form{N} is a quadruple
\form{N{=}\netgrammar{S}{T}{W}{s_0}} where
\form{S} is a set of places, \form{T} is a
set of transitions,
\form{W} is a weight function:
 \form{(S \times T) \cup (T \times S) \rightarrow \mathbb{N}},
and \form{s_0} is the start place in the net.
If \form{W(x, y) {>} 0}, there is an edge from
\form{x} to \form{y} of weight
\form{W(x, y)}.
A net is communication-free if for each
transition \form{t} there is at most one place
\form{s} with \form{W(s, t) > 0}
and furthermore \form{W(s, t) = 1}.
A marking \form{M}, a function \form{S \rightarrow \mathbb{N}},
 associates a number of tokens
with each place.
A communication-free Petri net is a pair \form{(N, M)}
where \form{N} is a communication-free net
and \form{M} is a marking.

The {\cfg} \form{G} 
is transformed into a  communication-free Petri net
\form{(N_G, M_G)} as: \form{N_G{=}\netgrammar{\var {\cup} \alphabet}{\production}{W}{s_0}}.
If \form{A {\rightarrow} s} is a production \form{p \in \production}
then \form{W(A, p) {=} 1}
and \form{W(B, p)} is the number occurrences of \form{B}
in \form{s}, for each \form{B {\in} \var {\cup} \alphabet}.
Finally, \form{M_G(s_0) {=} 1} and \form{M_G(X) {=} 0}
for all other \form{X {\in} \var {\cup} \alphabet}
and \form{X {\not=} s_0}.
Let \form{x_c} be a new integer variable for each letter \form{c {\in} \alphabet},
\form{y_p} be a new integer variable for each rule \form{p {\in} \production},
and \form{z_s} be a new integer variable for each symbol
 \form{s {\in} \var \cup \alphabet}.
We assume that we have \form{m}
variables \form{y_{p_1},..,y_{p_m}} and 
\form{n} variables \form{z_{s_1},..,z_{s_n}}.
We note that \form{x_c} is used to count the number occurrences
of the letter \form{c{\in}\alphabet} in a word derived by the grammar \form{G}.
The output \form{\arith_G} is generated through the following two steps.
Firstly, the procedure generates
a quantifier-free Presburger formula \form{\arith_{count}}
which constrains the occurrences of letters in words derived
by  the grammar \form{G}. In particular, \form{\arith_{count}}
 is a conjunction of
the four following kinds of subformulas.
\begin{itemize}[noitemsep,topsep=0pt]
\item \form{x_c {\geq} 0} for all \form{c {\in} \alphabet}.
\item For each \form{X {\in} \var},
let \form{p_1}, ..., \form{p_k} be all productions
which \form{X} is on the left-hand side.
And we 
recap \form{W(X,p)} denotes the number occurrences of
\form{X} on the right-hand side of the production rule
\form{p}.
Then, \form{\arith_{count}} contains the following conjunct:
\savespace\[\savespace
\form{M_G(X) + \Sigma_{p \in \production} W(X,p)y_{p} - \Sigma_{i{=}1}^ky_{p_i} = 0}
\]
\item For each \form{c \in \alphabet},
  \form{\arith_{count}} contains the following conjuncts:
\savespace\[\savespace
\form{x_c = \Sigma_{p \in \production} W(c,p)y_{p} \wedge (x_c = 0 \vee z_c>0)}
\]
\item For each \form{s \in \var \cup \alphabet}, let
\form{p_1},...,\form{p_l} be the productions
where \form{s} is on the right-hand side
and \form{X_1},...\form{X_l} are their corresponding left-hand
sides.
Then,  \form{\arith_{count}} contains the following conjunct:
\form{(z_s{=}0
~ {\vee}~ \bigvee_{i{=}1}^l(z_s=z_{X_i} {+} 1 {\wedge} y_{p_i}{>}0 {\wedge} z_{X_i}{>}0)}.
If one of the \form{X_i} is the start symbol \form{s_0},
the corresponding disjunct is replaced by \form{z_s{=}1 {\wedge} y_{p_i}{>}0}.
\end{itemize}

Secondly, \form{\arith_G} is generated as:
\form{\arith_G\equiv 
\exists {y_{p_1},..,y_{p_m}} ,z_{s_1},..,z_{s_n}.
\sleng{s_0}{=}\Sigma_{c \in \alphabet}x_c \wedge \arith_{count}}.

The correctness of \form{Par} immediately follows the following theorem.
\begin{theorem}[\cite{Seidl:ICALP:2004,Verma2005}]
Given a {\cfg} \form{G}, one can
compute an existential Presburger formula \form{\arith {\equiv} \exists {y_{p_1},..,y_{p_m}} ,z_{s_1},..,z_{s_n}. \arith_{count}}
for the Parikh image of \form{\classlang{G}} in linear time.
\end{theorem}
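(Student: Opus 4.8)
The plan is to establish that the set of letter-count vectors $(x_c)_{c\in\alphabet}$ admitted by $\arith_G$ coincides exactly with $\Par{\classlang{G}}$, and then to bound the size of the construction. The guiding object is the \emph{derivation tree}: a vector $(x_c)_{c\in\alphabet}$ lies in $\Par{\classlang{G}}$ iff there is a derivation tree $t$ rooted at $s_0$ whose yield $w$ has exactly $x_c$ occurrences of each terminal $c$. I would read the auxiliary variables accordingly: $y_p$ counts how many internal nodes of $t$ are instances of production $p$, and $z_s$ is a \emph{reachability rank} certifying that $s$ is reachable from $s_0$ along productions that are actually applied in $t$.

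For the soundness direction (tree $\Rightarrow$ satisfying assignment), given a derivation tree $t$ for $w$ I set $x_c$ to the number of $c$-leaves, $y_p$ to the number of $p$-labelled internal nodes, and $z_s$ to the least depth at which $s$ occurs in $t$ (so $z_{s_0}=0$ and the children of the root receive $1$). The flow conjunct $M_G(X) + \Sigma_{p\in\production} W(X,p)\,y_p - \Sigma_{i=1}^k y_{p_i} = 0$ is then just the node-counting identity for $X$: the total number of $X$-copies produced on right-hand sides, $\Sigma_p W(X,p)\,y_p$, plus the single root copy $M_G(X)$, equals the number of applications $\Sigma_i y_{p_i}$ of productions having $X$ on the left, since each copy of $X$ in $t$ is consumed by exactly one such application. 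The letter conjuncts $x_c=\Sigma_p W(c,p)\,y_p$ and $x_c\geq 0$ hold by the definition of $x_c$, and each rank conjunct holds because an occurrence of $s$ at depth $d>0$ is produced by a used production whose left-hand side $X_i$ sits at depth $d-1$, giving $z_s = z_{X_i}+1$ (or $z_s=1$ when $X_i=s_0$).

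The completeness direction (satisfying assignment $\Rightarrow$ word) is where the real work lies, and I expect it to be the main obstacle. A multiset of productions with the correct flow balance need not assemble into a genuine derivation: the Kirchhoff-type balance equations alone admit spurious ``floating'' cycles of applications that are balanced but disconnected from $s_0$, and hence correspond to no real parse tree. The purpose of the $z_s$ variables is precisely to rule these out — the disjunct $z_s = z_{X_i}+1 \wedge y_{p_i}>0 \wedge z_{X_i}>0$ (with base case $z_s=1 \wedge y_{p_i}>0$ when $X_i=s_0$) is a witness that every symbol with $z_s>0$ is reachable from $s_0$ through used productions. I would then invoke the reachability characterization for communication-free Petri nets \cite{Esparza:FCT:1995}, which states that flow-balance together with exactly this connectivity certificate is necessary and sufficient for the chosen multiset of applications to be firable from the initial marking $M_G$; the reachability ranks let me pick a spanning sub-derivation reaching every used symbol, and the balance equations let me graft in the remaining applications, consuming each produced nonterminal exactly once. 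Reading off the yield of the resulting tree gives a word in $\classlang{G}$ with Parikh image $(x_c)$, which also justifies the defining equation $\sleng{s_0} = \Sigma_{c\in\alphabet} x_c$ for the total length.

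Finally, for the linear-time bound I would tally the size of $\arith_{count}$ directly. It contains, per terminal $c$, the constraints $x_c\geq 0$, $x_c=\Sigma_p W(c,p)\,y_p$ and $x_c=0\vee z_c>0$; per nonterminal $X$, one flow equation; and per symbol $s\in\var\cup\alphabet$, one reachability disjunction whose length is bounded by the number of productions in which $s$ occurs on a right-hand side. Summing over all symbols, each production is inspected a constant number of times, so the total formula length is $O(|\production| + |\var| + |\alphabet|) = O(|G|)$ and it is produced by a single linear pass over the grammar; prefixing the existential block $\exists y_{p_1},..,y_{p_m}, z_{s_1},..,z_{s_n}$ and the length equation preserves this bound, yielding the claimed linear-time construction of an existential Presburger formula for $\Par{\classlang{G}}$.
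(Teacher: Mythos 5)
You should first note what the paper itself does here: it offers no proof of this theorem at all --- it is imported verbatim from \cite{Seidl:ICALP:2004,Verma2005}, with the correctness of the underlying net translation deferred to \cite{Esparza:FCT:1995,Seidl:ICALP:2004,Verma2005}. So your proposal is being measured against the cited construction, and in architecture it matches it exactly: $y_p$ as production multiplicities, flow balance as the Kirchhoff condition, $z_s$ as a connectivity certificate killing the spurious balanced-but-floating cycles, and Esparza's characterization of reachability in communication-free Petri nets to reassemble a satisfying assignment into a derivation; the linear-size tally is also right, since each production is touched a constant number of times.

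However, your soundness witness is wrong in one concrete step: you set $z_s$ to the \emph{least depth at which $s$ occurs in the derivation tree}, and that assignment does not satisfy the rank conjuncts in general. The conjunct demands $z_s = z_{X_i} + 1$ \emph{exactly} for some used production $p_i$ with $s$ on its right-hand side, but the minimal-depth occurrence of $s$ may hang below a non-minimal occurrence of $X_i$. Counterexample: take the derivation $S \Rightarrow AA$, with the first $A \Rightarrow c$, the second $A \Rightarrow D$, $D \Rightarrow A$, and that last $A \Rightarrow b$ ($b,c$ terminals). Least depths give $z_A = 1$ and $z_b = 4$, while the only production with $b$ on its right-hand side has left-hand side $A$, so the conjunct forces $z_b = z_A + 1 = 2 \neq 4$ and your assignment falsifies $\arith_{count}$ on a word that is plainly in $\classlang{G}$ --- i.e., your soundness direction, as written, fails. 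The fix is the one the paper's own prose already signals: define $z_s$ as the BFS distance from $s_0$ in the directed graph whose vertices are the used symbols and whose edges go from the left-hand side of each used production ($y_p > 0$) to every symbol on its right-hand side (equivalently, depth in a spanning tree of that subgraph). With that choice, any vertex at distance $d > 0$ has, by the defining property of BFS distance, a predecessor at distance exactly $d-1$ along a used production, which is precisely what the conjunct asserts; in the counterexample this gives $z_b = 2 = z_A + 1$ as required. Note that graph distance is in general strictly smaller than least tree depth, which is why the two witnesses are not interchangeable. The completeness direction you sketch (``graft in the remaining applications'') is hand-wavy but acceptable, since you, like the paper, may legitimately discharge it to \cite{Esparza:FCT:1995}.
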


\begin{example}
For the \form{\edtl} in Ex. \ref{edtl.ex}, we generate the following
 Parikh-equivalent CFG \form{G_1} \form{\cfgrammar{\var_1}{\alphabet}{\production_1}{S_1}} where
 the start symbol \form{S_1} is fresh,
\form{\var_1{=}\{S_1,x,x_1,x_2,x_3\}} and
\form{P_1 {\equiv} \{(S_1, abx), (x,  ax_1),(x_1,  bx_2),
(x_2,  x), (x, x_3), (x_3, ax_1), (x_1, \semp) \}}.

Next, we show how to compute \form{Par(\classlang{G_{1_x}})},
Parikh image of {\cfg} \form{G_{1_x}}. 
Let \form{x_a} and
\form{x_b} be integer variables which count the occurrences
of letters \form{a} and \form{b}, resp., of every word.
Let \form{y_1}, \form{y_2},..., \form{y_7} be integer variables
representing for the each production in \form{\production_1}
following the 
left-right order.
And let \form{z_a}, \form{z_b}, \form{z_{S_1}}, \form{z_x},
\form{z_{x_1}}, \form{z_{x_2}} and \form{z_{x_3}} be integer
variables which reflect the distance of the corresponding
symbols to the start symbol \form{x} in a spanning tree
on the subgraph of the transformed net induced by those \form{p}
 with \form{y_p {>} 0}.
The first kind of conjuncts in \form{\arith_{count}} is:
\form{x_a{\geq}0 {\wedge} x_b{\geq}0}.
The second 
is:
\saveone\[\saveone
\begin{array}{l | l}
\begin{array}{cl}
\text{Variable} &\quad conjunct \\
x & 1 + (y_4 + y_1) - (y_2 + y_5) = 0\\
S_1 & 0 + 0 - y_1 = 0\\
x_1 & 0 + (y_2 + y_6) - (y_3 + y_7) = 0 \\
\end{array}
 &
\begin{array}{cl}
\text{Variable} &\quad conjunct \\
x_2 & 0 + y_3 - y_4 = 0\\
x_3 & 0 + y_5 - y_6 = 0\\
\end{array}
\end{array}
\]
The third kind of conjuncts in \form{\arith_{count}}
corresponding to letter \form{a} and \form{b} is:
\form{x_a {=} y_1 {+} y_2 {+} y_6 \wedge (x_a{=}0 {\vee} z_a{>}0)} and
 \form{x_b {=} y_1 {+} y_3 {\wedge} (x_b{=}0 {\vee} z_b{>}0)}, respectively.
The fourth 
is as follows.
\savespace\[\savespace
\begin{array}{ll}
x \quad& z_x=0 \vee (z_x=z_{x_2} + 1 \wedge y_4>0 \wedge z_{x_2} > 0)
\vee (z_x=z_{S_1} + 1 \wedge y_1>0 \wedge z_{S_1} > 0)\\
S_1 & z_{S_1} = 0\\
x_1 & z_{x_1} > 0 \vee (z_{x_1} = 1 \wedge y_2 > 0)
 \vee (z_{x_1}=z_{x_3} + 1 \wedge y_6>0 \wedge z_{x_3} > 0) \\
x_2 &  z_{x_2} > 0 \vee (z_{x_2}=z_{x_1} + 1 \wedge y_3>0 \wedge z_{x_1} > 0)\\
x_3 &  z_{x_3} > 0 \vee (z_{x_3} = 1 \wedge y_5 > 0)\\
a & z_{a} {>} 0 \vee (z_{a}{=}z_{S_1} {+} 1 \wedge y_1{>}0 \wedge z_{S_1}{>} 0)
\vee  (z_{a} {=} 1 {\wedge} y_2 {>} 0) \vee
(z_{a}{=}z_{x_3} {+} 1 \wedge y_6{>}0 \wedge z_{a} {>} 0)\\
b & z_{b} > 0 \vee (z_{b}=z_{S} + 1 \wedge y_1>0 \wedge z_{S_1} > 0)
\vee (z_{a}=z_{x_1} + 1 \wedge y_3>0 \wedge z_{a} > 0)\\
\end{array}
\]
Then, the length constraint of \form{x} is inferred as:
\savespace\[\savespace
\begin{array}{lcl}
\form{\arith_{{G_1}_x}} & {\equiv} & \exists y_1,..,y_7, z_a, z_b,z_x,z_{S_1},
z_{x_1}, z_{x_2}, z_{x_3}. \sleng{x}{=} x_a {+} x_b {\wedge} \arith_{count} \\
 & {\equiv} & \exists y_1,..,y_7, z_a, z_b,z_x,z_{S_1},
z_{x_1}, z_{x_2}, z_{x_3} . \sleng{x}{=} 2y_3{+}1 {\wedge}  x_a{=}y_3{+}1 {\wedge}
x_b{=}y_3 {\wedge}
 \arith_{count}\\
\end{array}
\]
\end{example}


 \subsection{\form{\stredt}:
  A Syntactic Decidable Fragment} \label{algo.dec.syn}

\begin{defn}[\form{\stredt} Formulas]
 \form{
   \ses{\wedge} \astart{\wedge}\arith_1{\wedge}..{\wedge}\arith_n}
is called in 
fragment \form{\stredt}
if \form{\ses} is a quadratic system
and \form{\FV(\arith_i)} contains at most one string length \form{\forall i\in\{1...n\}}.
\end{defn}


The decidability relies on the termination
of  {\allSat} over quadratic systems.

\begin{proposition}\label{lemma.allsat.quad}
{\allSat} runs in factorial time in the worst case for
quadratic systems.
\end{proposition}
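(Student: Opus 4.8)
The plan is to exhibit a well-founded measure on nodes that is non-increasing along every reduction step, use it to bound the number of distinct node-configurations, and then invoke the back-linking discipline of \lb{} to forbid any configuration from recurring along a root-to-leaf path, so that \form{\utree{n}} is finite with a factorial bound on its size. Throughout I treat the input as a quadratic system \form{\ses} of notational length $N$ (its total number of symbols).

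First I would isolate the key invariant: the notational length never increases under \red. The \matchse{} step only deletes matched leading symbols, so it is non-increasing; and for each case of \comp{} I would check the substitution followed by matching. In Case~1 (\form{x_1 tr_1 {=} c_2 tr_2}) the branch \form{[\semp/x_1]} deletes one symbol, while \form{[c_2 x_1'/x_1]} inserts the letter $c_2$ at each of the (at most two, by quadraticity) occurrences of $x_1$ and \matchse{} then removes the freshly-exposed $c_2$; a short case analysis shows the length is preserved exactly when the head variable recurs and strictly decreases otherwise. The analogous bookkeeping for Case~2 (\form{x_1 tr_1 {=} x_2 tr_2}) gives the same conclusion. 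Hence every node of \form{\utree{n}} has notational length at most $N$.

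Next I would record that \red{} preserves quadraticity: although the pre-\matchse{} substitution in Cases~2b/2d may transiently create up to four occurrences of the carrier variable $x_2$, matching the exposed $x_2$ restores the bound of two occurrences, so every actual node is again quadratic. Combined with the length invariant, every reachable node is a quadratic word-equation system of notational length at most $N$; modulo renaming of its (at most $N$) variables---exactly the freedom captured by the permutation \form{\sub_{cyc}} used by \lb---there are only finitely many such configurations. Counting them is where the factorial appears: a quadratic shape of length $\ell\le N$ is fixed by a partial matching of its positions (pairing the two occurrences of each repeated variable), a letter labelling of the remaining positions, and the placement of the equality signs, giving at most $2^{O(N)}\,N!$ classes, i.e.\ a factorial-in-$N$ bound $C(N)$.

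Finally I would combine these with termination of \lb. Since \lb{} installs a back-link \form{\ctree{\ses_c}{\ses_b}{\sub_{cyc}}} as soon as a leaf equals an ancestor up to renaming, no configuration repeats along any root-to-leaf path; hence every path has length at most $C(N)$ and \form{\utree{n}} is finite, which in particular establishes that {\allSat} terminates on quadratic systems (the finiteness presupposed by Proposition~\ref{thm.tree}). For the running time the per-node work (\comp, \matchse, and the ancestor test in \lb) is polynomial in $N$, so it suffices to bound the node count by $C(N)$ up to polynomial factors. \textbf{The main obstacle is precisely this last step}, because \comp{} branches up to four ways and a naive depth bound only yields $4^{C(N)}$. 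The plan to defeat this is to stratify each path by notational length: length-preserving (cyclic) steps form segments that must close into a back-link within $C(N)$ steps and therefore cannot branch unboundedly, while each length-decreasing step strictly lowers $N$ and so contributes a recursion on a smaller instance; accounting these two effects separately yields a total node count of $N^{O(N)}$, i.e.\ factorial time in the worst case.
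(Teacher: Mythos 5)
Your first three paragraphs are, in substance, the paper's own proof. The two invariants you isolate are exactly Lemma~\ref{lem.match.quad} and Lemma~\ref{lem.reduce.size} (quadraticity is preserved and notational length is non-increasing under \matchse{} and \comp), combined in Lemma~\ref{lem.tree.size}; and your ``no configuration recurs along a root-to-leaf path, and there are only factorially many quadratic configurations of length at most $N$'' is precisely the content of the paper's PathLength property, which it proves by induction on $N$: within a maximal same-length segment it counts $\mathcal{O}(k\,(k{+}1)!)$ distinct arrangements of symbols (your $2^{O(N)}\,N!$ count modulo renaming is the same device), and every length-decreasing step recurses on a strictly smaller instance, yielding a path-length bound of $\mathcal{O}(N^2\cdot N!)$. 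So up to and including the factorial bound on path length, your argument and the paper's coincide, with your explicit no-repetition observation playing the role the paper leaves implicit in its counting.

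Where you go beyond the paper is your final paragraph, and there the sketch fails. First, be aware that the paper never takes this step: its proof ends with the PathLength property and reads the factorial \emph{depth} bound as ``factorial time'', leaving the path-to-tree accounting unaddressed. Second, your proposed stratification does not deliver $N^{O(N)}$: inside a length-preserving stratum, \comp{} can produce \emph{two} length-preserving children at a single node (cases 2b and 2d both preserve length when both carrier variables occur twice; the paper's own tree for \form{xaby{=}ybax} in Fig.~\ref{fig.reduction.tree.complex} exhibits this, with \form{\se_2} and \form{\se_4} both same-length children of the root). The back-linking discipline of \lb{} compares a leaf only against its \emph{ancestors}, so it bounds the depth of such a stratum by your $C(N)$ but says nothing about its width; a binary tree of depth $C(N)$ can have $2^{C(N)}$ nodes, far beyond factorial. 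Your parenthetical claim that such segments ``cannot branch unboundedly'' is true (branching is at most four-way) but useless for the count. Making your last step rigorous would require something genuinely absent from the paper---for example, cross-branch memoization of configurations, or a proof that same-length strata of quadratic systems have polynomially bounded width. If your goal is only to match what the paper actually establishes, you should stop at the factorial path-length bound, which you already have.
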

\repconf{The proof is presented in App. \ref{proof.lemma.allsat.quad}.}{}


Let {\strProb}[\form{\stredt}]
be the satisfiability problem in this fragment.
 The following theorem
 immediately follows from Proposition \ref{lemma.allsat.quad},
Corollary \ref{thm.edtf}, Parikh image of finite-index \form{\edtl} systems
\cite{ROZENBERG:1978}. 
\begin{theorem}\label{thm.edtl.syn}
{\strProb}[\form{\stredt}] is decidable.
\end{theorem}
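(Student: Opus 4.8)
\noindent The plan is to assemble the claim from the three results cited immediately before it, using the pipeline of {\deci} as the effective witness of decidability. Membership of \form{\ses{\wedge}\astart{\wedge}\arith_1{\wedge}\cdots{\wedge}\arith_n} in \form{\stredt} supplies exactly the two hypotheses I need: the word-equation system \ses is quadratic, and each arithmetic conjunct \form{\arith_i} constrains the length of at most one string variable.

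First I would appeal to Proposition~\ref{lemma.allsat.quad}: since \ses is quadratic, {\allSat} terminates (indeed in factorial time), so it returns a genuinely \emph{finite} cyclic reduction tree \utree{n}. By Proposition~\ref{thm.tree} this tree represents all solutions of \ses, and by Proposition~\ref{lemma.edtl.re} the call \code{widentree}(\utree{n},\astart) also terminates---it unfolds each cycle only the fixed number \form{m{+}M} of times---and returns \utree{n{+}1}, which represents all solutions of \form{\ses{\wedge}\astart}. Hence the first two stages of {\deci} are both effective and exact on every \form{\stredt} input.

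Next I would convert the tree into arithmetic. By Corollary~\ref{thm.edtf} the reduction tree is a finite-index \edtl system, and finite-index \edtl languages are Parikh-definable \cite{ROZENBERG:1978}; concretely, \extractpres{\utree{n{+}1}} rewrites the system into a Parikh-equivalent {\cfg} \form{G} and, by the theorem of \cite{Seidl:ICALP:2004,Verma2005}, computes for each string variable \form{x} an existential Presburger formula \Par{\langg{G_x}}. Taking the conjunction of these over \form{\FV(\ses{\wedge}\astart)} yields \form{\arith_w}, an existential Presburger formula that describes the length tuples realised by the solution set. I then close the argument by equi-satisfiability: the input is satisfiable iff \form{\arith_w{\wedge}\ariths} is, and the latter is a Presburger sentence, hence decidable \cite{Fischer:SCP:1974}.

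The step I expect to be the genuine obstacle---the rest being bookkeeping over results already established---is to certify that \form{\arith_w} is not merely a sound over-approximation but an \emph{exact} description of the realisable lengths, and that conjoining the per-variable Parikh images faithfully retains any correlation that a single shared cycle may impose among several variables. This is precisely where the second defining condition of \form{\stredt} is used: because each \form{\arith_i} mentions at most one string length, the satisfiability query decomposes variable by variable, so the production-count variables shared across the Parikh encodings are enough to keep the reduction sound and complete, and no joint length relation among distinct variables---which might fall outside what the per-variable abstraction can express---ever has to be decided. With that precision argument in place, \strProb[\form{\stredt}] reduces to satisfiability in Presburger arithmetic and the theorem follows.
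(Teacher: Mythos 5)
Your proposal is correct and takes essentially the same route as the paper, whose proof derives the theorem immediately from Proposition~\ref{lemma.allsat.quad} (termination of {\allSat} on quadratic systems), Corollary~\ref{thm.edtf} (the reduction tree forms a finite-index \form{\edtl} system), the Parikh-definability of finite-index \form{\edtl} languages \cite{ROZENBERG:1978}, and the final reduction to Presburger satisfiability---exactly the pipeline you assemble. Even your closing concern is resolved the way the paper resolves it: the second defining condition of \form{\stredt} (each \form{\arith_i} mentions at most one string length) is precisely what lets the paper argue per variable, intersecting each variable's Parikh-definable length language with its own constraint, so that no joint length relation among distinct variables ever needs to be decided.
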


\section{\form{\strflat} Decidable Fragment}\label{sec.flat}
We 
first describe \form{\strflatsem} fragment through
 a semantic restriction
and then show the computation of the length constraints. 
After that, 
 we syntactically define \form{\strcf}. 

\savespace
\begin{defn}
The normalized formula
 \form{\ses{\wedge} \astart{\wedge}\ariths}
is called in the \form{\strcfsem} 
fragment
if {\allSat} takes \form{\ses} as input,
and produces a tree  \form{\utree{n}}
  in a finite time. Furthermore,
for every cycle \form{\ctree{\ses_c}{\ses_b}{\sub_{cyc}}}
 of \form{\utree{n}}, 
  every label along the path (\form{\ses_c}, \form{\ses_b})
 is of the form:
\form{[cY/X]}
where \form{X}, \form{Y} 
are string
variables
and 
 \form{c} is
a letter.
\savespace\end{defn}
This restriction implies that \form{\utree{n}} does not contain
any nested cycles. 
We refer such \form{\utree{n}} as a {\em flat(able)} tree.
It further implies that
\form{\sub_{cyc}} is of the form
  \form{\sub_{cyc}\equiv [X_1/X'_{1}, ...,X_k/X'_k]}
and \form{X'_j} is a (direct or indirect) subterm of 
\form{X_j} for all \form{j \in \{1...k\}}.
We refer the variables \form{X_j} for all \form{j \in \{1...k\}} as  extensible variables
and such cycle as \form{\vtree{\ses_c}{\ses_b}{\sub_{cyc}}{[X_1,...,X_k]}}.



\paragraph{Procedure \form{extract\_pres}}\label{sec.flat.length}
From
 a reduction tree,
we propose 
 to extract a system of inductive predicates which precisely
capture
 the length constraints of string variables.

We assume that the system  \form{\PName}
includes \form{n} {\em unknown} (a.k.a. uninterpreted) predicates
and \form{\PName} is defined by a set of constrained Horn clauses.
We notice that, as shown in
 Fig. \ref{prm.spec.fig},
inductive predicates are restricted within arithmetic domain.
Every clause is of the form:
 \form{\phi_{i_j} \imply \seppredF{\code{P_i}}{\setvars{v}_i}}
where \form{\seppredF{\code{P_i}}{\setvars{v}_i}}
is the head and \form{\phi_{i_j}} is the body.
A clause without head is called a query.
A formula without any inductive predicate
is referred as a {\em base} formula
and denoted as \form{\base}.
We now introduce \form{\Gamma} to denote an interpretation over
unknown predicates such that for every \form{\code{P_i} \in \PName},
 \form{\Gamma(\code{P_i}(\setvars{v}_i)) \equiv \base_i}.
We use \form{\phi(\Gamma)} to denote a formula obtained by replacing
all unknown predicates in \form{\phi} with their definitions in \form{\Gamma}.
We say a clause \form{\phi_b \imply \phi_h} satisfies
 if there exists \form{\Gamma} and  for all stacks \form{\sstack {\in} \Stack},
 we have \form{\sstack \models \phi_b(\Gamma)}
implies  \form{\sstack \models \phi_h(\Gamma)}.
A conjunctive set of Horn clauses (CHC for short), denoted by \form{\Horn},
is satisfied if every constraints in \form{\Horn} is satisfied
under the same interpretation of unknown predicates.

We maintain a one to one function
that maps every string variable \form{x {\in} \SVar}
to its respective length variable \form{n_x {\in} \IVar}.
We further distinguish \form{\SVar} into two disjoint sets:
\form{\SGVar} a set of
 global variables and \form{\SLVar} a set of local
(existential)  variables.
While \form{\SGVar} includes those variables from the root
of a reduction tree, \form{\SLVar} includes those fresh variables
generated by {\allSat}.
 Given
a tree \form{\utree{n{+}1}} \form{(V, E, \backfun)}
(where \form{\ses_0 {\in} V} be the root of the tree)
deduced from an input \form{\ses_0 {\wedge}\astart},
we generate a system of inductive predicates
and CHC \form{\Horn} as follows.
\begin{enumerate}
\item For every node \form{\ses_i{\in}V} s.t.
\form{\setvars{v}_i{=}\FV(\ses_i){\neq}\emptyset},
we generate an inductive predicate \form{\seppredF{\code{P_i}}{\setvars{v}_i}}.
\item For every edge \form{(\ses_i, \sub, \ses_j) {\in} E}, 
\form{\setvars{v}_i{=}\FV(\ses_i){\neq}\emptyset},
\form{\setvars{v}_j{=}\FV(\ses_j)},
\form{\setvars{w}_j{=} \setvars{v}_j {\cap} \SLVar}, we generate
the clause:
 \form{\exists \setvars{w}_j.~ \code{gen}(\sub) \wedge
\seppredF{\code{P_j}}{\setvars{v}_j} \imply \seppredF{\code{P_i}}{\setvars{v}_i}
  }
 where \code{gen}(\sub) is defined
as:
\saveone\[\saveone
\begin{array}{l}
 \form{\code{gen}(\sub)} ==
\quad \begin{cases}
   n_x{=}0 & \quad \text{if } \sub{\equiv} [\semp/x]\\
   n_x{=}n_y{+}1 & \quad \text{if } \sub{\equiv}[cy/x] \\
   n_x{=}n_y{+}n_z & \quad \text{if } \sub{\equiv}[yz/x] \\
  \end{cases}
\end{array}
\]
\item For every cycle \form{\ctree{\ses_c}{\ses_b}{\sub_{cyc}} {\in} \backfun}, we generate the following clause:
\savespace\[\savespace
\form{\bigwedge \{{v}_{b_i} {=} {v}_{c_i} \mid [{v}_{c_i}/{v}_{b_i}] \in \sub_{cyc}\}
~ {\wedge}~\seppredF{\code{P_c}}{\setvars{v}_c} \imply \seppredF{\code{P_b}}{\setvars{v}_b} }
\]
\end{enumerate}
The length constraint of all solutions
of \form{\ses_0 {\wedge}\astart} is
captured by the query:
 \form{\seppredF{\code{P_0}}{\FV(\ses_0)}}.


In the following, we show that if \form{\utree{n}} is a flat tree,
the satisfiability of
the generated CHC is decidable.
This decidability relies on the decidability of inductive
predicates in DPI fragment which is presented in \cite{Makoto:APLAS:2016}. In particular,
a system of inductive predicates is in DPI fragment
if every predicate \code{P} is defined as follows.
Either it is constrained by one base clause as:
 \form{\base \imply \seppredF{\code{P}}{\setvars{v}}}
or it is defined by two clauses as:
\saveone\[\saveone
\begin{array}{cc}
\form{\base_1{\wedge}..{\wedge}\base_m \imply \seppredF{\code{P}}{\setvars{v}}} & \qquad
\exists \setvars{w}. \bigwedge \{ \setvars{v}_i \underline{+} \setvars{t}_i {=}k \}
{\wedge} \seppredF{\code{P}}{\setvars{t}}
\imply \seppredF{\code{P}}{\setvars{v}}
\end{array}
\]
where \form{\FV(\base_j) \in \setvars{v}}
 (for all \form{i \in 1..m}) and
has at most one variable;
\form{\setvars{t} \subseteq \setvars{v} \cup \setvars{w}},
 \form{\setvars{v}_i} is the variable at $i^{th}$ position
of the sequence \form{\setvars{v}},
and \form{k \in \mathbb{Z}}.

To solve the generated clauses \form{\Horn}, we infer definitions
for the unknown predicates in a bottom-up manner.
Under assumption that \form{\utree{n}} does not contain any
mutual cycles, all mutual recursions can be eliminated
and predicates are in the DPI fragment.
\begin{proposition}\label{prop.flat.length}
The length
 constraint implied by
a flat tree 
is 
Presburger-definable.
\end{proposition}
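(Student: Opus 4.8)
The plan is to feed the flat tree to the procedure \form{extract\_pres}, argue that the resulting constrained Horn clauses fall into the DPI fragment of \cite{Makoto:APLAS:2016}, and then appeal to the fact that least solutions of DPI predicates are semilinear, hence Presburger-definable. The length constraint implied by the tree is exactly the solution of the query predicate, so Presburger-definability of that solution is the goal.

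Concretely, I would first run \form{extract\_pres} on \form{\utree{n{+}1}} to obtain the predicate family \form{\{\seppredF{\code{P_i}}{\setvars{v}_i}\}}, the clause set \form{\Horn}, and the query \form{\seppredF{\code{P_0}}{\FV(\ses_0)}} that captures the target constraint. Recall the three generation rules: a leaf substitution \form{[\semp/x]} yields a base clause with body \form{n_x{=}0}; a tree edge labelled \form{\sub} yields a clause \form{\exists\setvars{w}_j.\,\code{gen}(\sub)\wedge\seppredF{\code{P_j}}{\setvars{v}_j}\imply\seppredF{\code{P_i}}{\setvars{v}_i}}; and each back-link \form{\ctree{\ses_c}{\ses_b}{\sub_{cyc}}} yields a clause relating the bud \form{\seppredF{\code{P_b}}{\setvars{v}_b}} to the companion \form{\seppredF{\code{P_c}}{\setvars{v}_c}} through the variable equalities recorded in \form{\sub_{cyc}}.

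Next, I would exploit flatness to normalise \form{\Horn} into DPI shape. Flatness guarantees that the tree has no nested cycles (no cycle is contained in another), so its back-link relation is a collection of simple loops sitting over a finite acyclic DAG. I would first inline every predicate that is neither a companion nor a bud by back-substituting along the finite acyclic paths; this terminates precisely because those paths contain no cycle, and it collapses each companion-to-bud path into a single self-recursive clause for the companion. The decisive point is that on a flat tree every label along a cycle has the form \form{[cY/X]}, so \form{\code{gen}} contributes only the unit increment \form{n_X{=}n_Y{+}1}; composing one traversal of \form{\vtree{\ses_c}{\ses_b}{\sub_{cyc}}{[X_1,...,X_k]}} therefore relates the head variables \form{\setvars{v}} and the recursive arguments \form{\setvars{t}} by a fixed constant vector, i.e.\ an inductive body of the required DPI form \form{\bigwedge\{\setvars{v}_i\underline{+}\setvars{t}_i{=}k\}} with \form{k\in\mathbb{Z}}. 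Crucially, the two-variable increment \form{n_x{=}n_y{+}n_z} coming from an \form{[yz/x]} label can arise only on acyclic edges, which are exactly the ones removed by inlining, so it never appears inside a recursive body. The base clauses, fed by leaf equalities and, where the regular memberships constrain a variable, by the single-variable Parikh constraints induced by \form{\astart}, have each \form{\base_j} referring to at most one variable, as DPI demands.

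With the no-mutual-cycle assumption eliminating any residual simultaneous recursion between distinct predicates, every predicate is now defined either by a lone base clause or by a base-plus-inductive pair in exactly the DPI format, and I can invoke \cite{Makoto:APLAS:2016}: the least fixed point of each DPI predicate is a semilinear set, expressible as an existential Presburger formula. Specialising this to \form{\seppredF{\code{P_0}}{\FV(\ses_0)}} yields a Presburger formula defining the length constraint implied by the flat tree, which is the claim. I expect the crux to be the normalisation step: one must verify, by induction on the length of a cyclic path while tracking how the extensible variables \form{X_1,\ldots,X_k} accumulate their unit increments, that after the acyclic fragment is inlined the only relations surviving between \form{\setvars{v}} and \form{\setvars{t}} are constant offsets. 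Confining every nonlinear \form{n_x{=}n_y{+}n_z} relation to the eliminable DAG, and checking that no companion retains two incomparable recursive clauses, is the delicate bookkeeping on which the whole reduction to DPI rests.
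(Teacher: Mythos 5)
Your proposal follows essentially the same route as the paper, which likewise generates the CHC system via \form{extract\_pres}, eliminates mutual recursion bottom-up under the flatness/no-nested-cycles assumption, and observes that the resulting self-recursive predicates land in the DPI fragment of \cite{Makoto:APLAS:2016}, whence Presburger-definability. Your write-up is in fact more explicit than the paper's (which leaves the reduction to DPI as a brief remark), correctly pinpointing that the \form{[cY/X]} cycle labels confine \form{\code{gen}} to unit increments inside recursive bodies while the \form{n_x{=}n_y{+}n_z} clauses survive only on the inlined acyclic part.
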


\begin{example}[Motivating Example Revisited]
We generate the following CHC for the tree $\utree{3}$
in Fig. \ref{fig.unfolding.tree}.
\saveone\[\saveone
\begin{array}{ll}
\begin{array}{lll}
\exists n_{x_1}. n_{x}{=}n_{x_1}{+}1 {\wedge}
 \seppredF{\code{P_{12}}}{n_{x_1}{,} n_y} &\imply& \seppredF{\code{P_{0}}}{n_{x}{,} n_y} \\
n_{x_1}{=}0 {\wedge}
 \seppredF{\code{P_{21}}}{ n_y} &\imply& \seppredF{\code{P_{12}}}{n_{x_1}{,} n_y} \\
\exists n_{x_2}. n_{x_1}{=} n_{x_2}{+}1 {\wedge}
 \seppredF{\code{P_{22}}}{n_{x_2}, n_y} &\imply& \seppredF{\code{P_{12}}}{n_{x_1}{,} n_y} \\
 n_{x_2}{=} n_{x} {\wedge}
 \seppredF{\code{P_{0}}}{n_{x}, n_y} &\imply& \seppredF{\code{P_{22}}}{n_{x_2}{,} n_y} \\
\end{array} &\quad
\begin{array}{l}
\begin{array}{lll}
n_{y}{=}0  &\imply& \seppredF{\code{P_{21}}}{n_y} \\
\exists n_{y_1}. n_{y}{=}n_{y_1}{+}1 {\wedge} \seppredF{\code{P_{32}}}{n_{y_1}} &\imply& \seppredF{\code{P_{21}}}{n_y} \\
n_{y_1}{=}n_{y} {\wedge} \seppredF{\code{P_{21}}}{n_{y}}  &\imply& \seppredF{\code{P_{32}}}{n_{y_1}} \\
\end{array}\\
\seppredF{\code{P_{0}}}{n_{x}{,} n_y} {\wedge}
(\exists k . n_{x} {=} 4k{+}3) {\wedge}
  n_{x} {=} 2 n_{y}
\end{array}
\end{array}
\]
After eliminating the mutual recursion,
predicate \code{P_{21}} is in the DPI fragment
and generated a definitions as:
 \form{\seppredF{\code{P_{21}}}{n_{y}} \equiv n_{y}{\geq}0}.
 Similarly,
after substituting the definition of \code{P_{21}} into
the remaining clauses and eliminating the mutual recursion,
predicate \code{P_{0}} is in the DPI fragment
and generated a definitions as:
 \form{\seppredF{\code{P_{0}}}{n_x{,}n_{y}} \equiv \exists i. n_x{=}2i{+}1 {\wedge} n_{y}{\geq}0}.
\end{example}


\paragraph{\form{\strflat} Decidable Fragment}\label{sec.flat.syn}

A quadratic word equation is called {\em regular}
if it is either acyclic or of the form \form{Xw_1=w_2X}
where \form{X} is a string variable and
 \form{w_1, w_2 \in \alphabet^*}.
A quadratic word equation
 is called \form{n} {\em phased-regular}
if it is of the form: \form{s_1{\cdot}...{\cdot}s_n=t_1{\cdot}...{\cdot}t_n} where \form{s_i{=}t_i} is a regular equation for
all \form{i \in \{1...n\}}. 

\begin{defn}[\form{\strflat} Formulas]
 \form{\pure{\equiv} \ses{\wedge} \astart{\wedge}\ariths}
is called in the \form{\strflat} 
fragment
if either \form{\ses} is both quadratic
and phased-regular or \form{\ses} is in SL fragment.
\end{defn}


\begin{proposition}
\label{lemma.cfg.phased.twisted.prop}
 {\allSat} constructs a flat tree for
a \form{\strflat} constraint in linear time.
\end{proposition}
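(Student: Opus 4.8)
The plan is to handle the two disjuncts in the definition of \form{\strflat} separately, tracking at every reduction step which branch of \comp{} fires, since flatness of \utree{n} asks exactly that every edge lying on a cycle carry a label \form{[cY/X]}, i.e. the continuation \textbf{1b} of \comp. The core is a single--phase lemma: for a \emph{regular} word equation \allSat{} builds a flat tree in linear time. If the equation is acyclic, each variable occurs once, so \matchse{} and \comp{} strictly shrink the equation, no variable ever recurs, \lb{} never fires, and the resulting tree has no cycle at all and depth bounded by the number of symbols --- vacuously flat and linear. If the equation is \form{Xw_1{=}w_2X} with \form{w_1,w_2{\in}\alphabet^*}, the decisive point is that at every descendant exactly one carrier is the (renamed) variable and the other is a letter of some rotation of \form{w_2}; hence \comp{} is always invoked in \textbf{Case 1}, emitting the base leaf \form{[\semp/X]} and the single edge \form{[cX'/X]}. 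Matching the now--common head rewrites \form{X'w_1{=}w_2X'} into the equation obtained by rotating \form{w_2} one position, so after a bounded number of steps (at most \form{\sleng{w_2}}) the node coincides, up to renaming, with the root and \lb{} closes one cycle every edge of which is labelled \form{[cX'/X]}. This yields a single flat cycle, a closed bud, and \form{O(\sleng{w_2})} work.

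Next I would lift this to a quadratic phased--regular system \form{\ses{\equiv}s_1{\cdots}s_n{=}t_1{\cdots}t_n}. Because the whole system is quadratic and each \form{s_i{=}t_i} is regular, the phases use pairwise disjoint variables, so a head of \form{s_{i{+}1}{\cdots}s_n} or \form{t_{i{+}1}{\cdots}t_n} can become a carrier only after the variable of phase \form{i} has been eliminated. I would argue by induction on \form{i} with the invariant that, having consumed phases \form{1,\dots,i}, the residual equation is exactly the suffix \form{s_{i{+}1}{\cdots}s_n{=}t_{i{+}1}{\cdots}t_n}, which follows from the suffix property of \comp{} stated in Section \ref{algo.allsat}. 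For a cyclic phase the single--phase lemma applies verbatim with the untouched tail carried along: the spine uses only \textbf{Case 1} edges \form{[cX'/X]} and closes into one flat cycle whose bud is marked closed by \lb{} and hence not re--expanded; among the base edges \form{[\semp/X]} on the spine, exactly one interior node matches the conjugating rotation and, after matching away \form{w_i}, exposes the next--phase suffix, while the others fail on a letter clash. Thus each phase contributes \emph{additively} --- one flat cycle plus one forward transition --- so the phases compose without multiplying and \utree{n} has \form{O(\sleng{\ses})} nodes. Since every cycle of \utree{n} is one of these per--phase spine cycles, all of whose labels are \form{[cX'/X]}, the tree is flat.

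For the second disjunct, when \form{\ses} lies in the SL fragment, rewriting it as \form{\bigwedge_i x_i{=}s_i} with \form{\FV(s_i){\subseteq}\{x_1,\dots,x_{i{-}1}\}} gives a well--founded order on the variables. Reducing in this order, each step removes a head symbol and never reintroduces an already--resolved variable, so \lb{} never links a leaf back to an ancestor: the tree is cycle--free and therefore flat \emph{vacuously}, irrespective of whether individual edges happen to be of the \textbf{Case 2} form \form{[yz/x]}. The number of reduction steps is bounded along this straight--line order, giving linear time, and combining the three cases establishes the proposition.

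The step I expect to be hardest is the additive (rather than multiplicative) composition of phases: I must show that a cyclic phase admits a \emph{unique} forward transition, so the subtree for phase \form{i{+}1} is built once rather than once per spine node. This rests on two facts to be proved carefully --- that the distinct rotations of \form{w_i} visited along the spine equal the conjugate word on the opposite side at exactly one interior position (satisfiability forces \form{w_1} to be a rotation of \form{w_2}, and the rotations occurring on the spine are pairwise distinct), and that \lb{} closes the returning bud before its own base edge is created, so no duplicate copy of the next phase is spawned. Once these are in place, the linear bound follows by summing the per--phase costs and flatness follows because the only surviving cycles are the per--phase spine cycles.
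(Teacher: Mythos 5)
Your overall skeleton---a base lemma for a single regular equation (acyclic case: strictly shrinking, cycle-free, linear; cyclic case \form{Xw_1{=}w_2X}: a spine of Case-1 edges \form{[cX'/X]} that rotates \form{w_2} and closes into one flat cycle within \form{\sleng{w_2}} steps), then an induction over phases---is the same shape as the paper's proof, which rests on Lemma \ref{prop.acyclic} and Lemma \ref{lemma.cfg.twisted.prop} as base cases and then performs structural induction on the number of phases. Your single-phase rotation argument, and your resolution of the periodicity issue (the spine closes at the primitive period, so the rotations visited are pairwise distinct and at most one base edge exposes the next phase), are sound as far as they go.

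The genuine gap is your claim that ``the phases use pairwise disjoint variables,'' from which you derive the invariant that the residual after consuming phases \form{1,\dots,i} is exactly the suffix \form{s_{i{+}1}\cdots s_n{=}t_{i{+}1}\cdots t_n} and that phases compose additively. Quadraticity only bounds \emph{total} occurrences by two; it does not prevent a variable from occurring once in phase \form{i} and once in phase \form{j{\neq}i}. The paper's flagship \form{\strflat} instance \form{xaby{=}ybax} is phased-regular precisely via such a decomposition (phases \form{x{=}y} and \form{aby{=}bax}), with both \form{x} and \form{y} straddling the two phases. There both carriers are variables, so \comp{} fires its Case 2 with four children including the variable-prefix substitution \form{[yx_1/x]}; that substitution rewrites the \emph{later} occurrence of \form{x} and entangles the next phase, so the residual is not the clean suffix, and the resulting tree (the paper's Fig.~\ref{fig.reduction.tree.complex}) is not the additive concatenation of per-phase spines that your induction would build. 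This is exactly why the paper proves its base lemma by induction on the number of twice-occurring variables (``duals''), explicitly analysing the four-way \comp{} branching for the shapes \form{Xs_1{=}s_2Xs_3} and \form{Xs_1Y{=}Ys_2X} in which once-occurring variables from other phases sit inside \form{s_1,s_2,s_3}: each branch either strictly reduces the dual count, so the induction hypothesis applies, or extends a spine that closes after at most \form{\sleng{s_2}} steps. To repair your proof you would have to drop the disjointness assumption and carry out precisely this variable--variable carrier analysis; the step you flagged as hardest (uniqueness of the forward transition) is real but secondary, and your treatment of the SL disjunct is acceptable at the level of rigour the paper itself adopts.
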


Let {\strProb}[\form{\strflat}]
be the satisfiability problem in this fragment.
\begin{theorem}
{\strProb}[\form{\strflat}] is decidable.
\end{theorem}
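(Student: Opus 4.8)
The plan is to show that on a $\strflat$ input the procedure $\deci$ terminates with an exact verdict, by checking that each of its four stages is effective in this fragment and that the final test lands in decidable Presburger arithmetic. The three previously-established ingredients I would chain together are: termination of $\allSat$ producing a \emph{flat} tree (Proposition~\ref{lemma.cfg.phased.twisted.prop}), soundness and completeness of the solution representation (Propositions~\ref{thm.tree} and~\ref{lemma.edtl.re}), and Presburger-definability of the length constraint extracted from a flat tree (Proposition~\ref{prop.flat.length}).

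First I would fix a $\strflat$ formula $\pure \equiv \ses \wedge \astart \wedge \ariths$. By Proposition~\ref{lemma.cfg.phased.twisted.prop}, $\allSat$ halts on $\ses$ in both defining cases (quadratic phased-regular, and SL, where the system is essentially acyclic so the tree is trivially flat) and returns a flat reduction tree $\utree{n}$; by Proposition~\ref{thm.tree} this tree represents all solutions of $\ses$. Applying \code{widentree} to $\utree{n}$ and $\astart$ yields $\utree{n+1}$, which by Proposition~\ref{lemma.edtl.re} represents all solutions of $\ses \wedge \astart$. Here I would observe that unfolding each single-level cycle $m{+}M$ times and re-linking the resulting bud cannot introduce a nested cycle, so $\utree{n+1}$ stays flat.

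Next I would invoke $\extractpres{\utree{n+1}}$ to obtain the length constraint $\arith_w$ of all solutions of $\ses \wedge \astart$. Because $\utree{n+1}$ is flat, the generated system of constrained Horn clauses has no mutual or nested recursion, so after bottom-up elimination its predicates lie in the DPI fragment of \cite{Makoto:APLAS:2016}, and Proposition~\ref{prop.flat.length} then guarantees that $\arith_w$ is Presburger-definable. Unlike the $\stredt$ case, the arithmetic side $\ariths$ is here an \emph{arbitrary} Presburger formula, but this is harmless precisely because $\arith_w$ is itself fully Presburger-definable: the conjunction $\arith_w \wedge \ariths$ is again a Presburger formula, and by the correctness of $\deci$ it is equi-satisfiable with $\pure$. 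Since satisfiability of Presburger arithmetic is decidable \cite{Fischer:SCP:1974}, this settles the theorem.

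The main obstacle is the content already isolated in Proposition~\ref{prop.flat.length}, together with the claim that flatness survives \code{widentree}: I must ensure that the Horn-clause encoding of $\utree{n+1}$ really falls into a decidable class. The delicate point is that each cycle contributes, via $\code{gen}$, a length recurrence $n_{X_j} = n_{X'_j} + 1$ with $X'_j$ a subterm of the extensible variable $X_j$, so a single cycle yields one linear recurrence over one length variable rather than a coupled system; verifying that the absence of nested cycles indeed eliminates all mutual recursion, so that the bottom-up fixpoint computation terminates and produces predicates in DPI, is where the real work sits, and it is exactly what Proposition~\ref{prop.flat.length} supplies.
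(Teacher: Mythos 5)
Your proof is correct and takes essentially the same route as the paper, which establishes this theorem by chaining Proposition~\ref{lemma.cfg.phased.twisted.prop} (termination of {\allSat} with a flat tree), Propositions~\ref{thm.tree} and~\ref{lemma.edtl.re} (the tree, extended by \code{widentree}, represents all solutions of $\ses \wedge \astart$), Proposition~\ref{prop.flat.length} (Presburger-definability of the extracted length constraint), and the decidability of Presburger arithmetic. Your explicit observations that flatness survives \code{widentree} and that the SL case yields a trivially flat (acyclic) tree are points the paper leaves implicit, but they fill in the same argument rather than constituting a different one.
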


\section{Implementation and Evaluation} \label{sec.impl}
We have implemented a prototype for {\deci}, using OCaml,
to handle
the satisfiability problem in theory of word equations
and  length constraints over the Presburger arithmetic.
It takes a formula in SMT-LIB format version as input
and produces {\sat} or {\unsat} as output.
For the problem beyond the decidable fragments,
{\allSat} may not terminate
and {\deci} may return
{\unknown}.
Our SMT-LIB parser is based on the open source \cite{fontend}.
We made use of
 Z3 \cite{TACAS08:Moura} as a back-end SMT solver for the linear arithmetic.


\paragraph{Evaluation}
As noted in \cite{Ganesh:HVF:2012,Tao:POPL:2018}, all
constraints in the standard Kaluza benchmarks \cite{Saxena:SP:2010}
with 50,000+ test cases generated by symbolic execution on
JavaScript applications 
satisfy the straight-line conditions.
Therefore, it could not be used to evaluate
our  proposal that focuses on cyclic constraints.
We have generated and experimented {\deci} over
 a new set of 600 hand-drafted benchmarks each of which is a 
{\em phased-regular} constraint in
the proposed decidable fragment \form{\strflat}.
The set of benchmarks includes 
298 satisfiable queries  and 302 unsatisfiable queries. 
Each benchmark has from one to three phases.
Each phase is in the form of either \form{xaby{=}ybax} or \form{xab{=}bax}
in the case of satisfiable constraints (files quad{-}*odd\_number{-}*)
and \form{xaay{=}ybax} or \form{xaa{=}bax} in the case of
 unsatisfiable constraints (files quad{-}*even\_number{-}*)
where \form{x}, \form{y} are string variables and \form{a}, \form{b}
are letters.
We have also compared {\deci}
against existing state-of-the-art string solvers:
Z3-str2 \cite{Zheng:FSE:2013,Zheng:CAV:2015}, Z3str3 \cite{Berzish:FMCAD:2017}, CVC4 \cite{Liang:FMSD:2016},
 S3P \cite{Trinh:CAV:2016}, 
 Norn \cite{Abdulla:CAV:2014,Abdulla:CAV:2015}
and Trau \cite{Parosh:PLDI:2017}.
All experiments were performed on an Intel Core i7 3.6Gh with 12GB RAM.

\begin{table}[t]
\caption{Experimental Results}
\label{tbl:expr}\savespace
\begin{center}
\begin{tabular}[t]{|c | c |c|c|c|c|c|c|c|} 
\hline
 & {\#}$\surd$\sat & {\#}$\surd$\unsat &
 {\#}\ding{55}\sat & {\#}\ding{55}\unsat &
 {\#}\unknown & {\#}timeout & $ERR$ & Time \\
\hline
\rowcolor{Gray}Trau \cite{TRAU:Jan2018} & 0 & 302 & 236 & 0 & 0 & 0 & 62 &  37s\\
 S3P \cite{S3P:Jan2018}& 55 & 110 & 1 
 & 0 & 100 & 253 & 81 & 801m55s\\
 \rowcolor{Gray} CVC4 \cite{CVC4:Jan2018}& 120 & 143 & 0
 & 69  & 0 & 268  & 0  & 795m49s \\
Norn \cite{Norn:Jan2018} & 67 & 98 & 0
 & 3 
 &  432 & 0  &  0 & 336m20s \\
\rowcolor{Gray}Z3str3 \cite{Z3str3:Jan2018} & 69 & 102
 & 0 
  & 0
 & 292 & 24 & 113  & 77m4s\\
 Z3str2 \cite{Zheng:CAV:2015} & 136 & 66
 & 0 
  & 0
 & 380 & 18 & 0  & 54m35s\\
\rowcolor{Gray}{\deci} & 298 & 302 & 0 & 0 & 0 & 0 & 0 & 18m58s\\
\hline
\end{tabular}
\end{center}\savespace \savespace \savespace \savespace
\end{table}

The experiments are shown in Table \ref{tbl:expr}.
The first column shows the solvers.
The column {\#$\surd$\sat} (resp., {\#$\surd$\unsat})
indicates the number of benchmarks for which
the solvers decided {\sat} (resp., {\unsat}) correctly.
The column {\#\ding{55}\sat} (resp., {\#\ding{55}\unsat})
indicates the number of benchmarks for which
the solvers decided {\unsat} on satisfiable queries (resp., {\sat} on unsatisfiable queries).
The column {\#\unknown} indicates  the number of benchmarks for which
the solvers returned unknown,
{\em timeout} 
for which
the solvers were unable to decide within 
180 seconds,
{\em ERR} for 
internal errors.
The column {\em Time} gives CPU running time 
({\em m} for minutes
and {\em s} for seconds) taken by the solvers.

The experimental results show that
among the existing techniques that deal with cyclic scenarios,
the 
 method
presented by Z3-str2 performed the most effectively and efficiently.
It could detect the overlapping variables in 380 problems (63.3\%) 
without any wrong outcomes in a short running time.
Moreover, it could decide 202 problems (33.7\%) correctly.
CVC4 produced very high number of correct outcome (43.8\% - 263/600).
However, it returned both false positives and false negatives.
Finally, non-progressing detection method in S3P worked not very well.
It detected non-progressing reasoning in only 98 problems (16.3\%) but produced
false negatives and high number of timeouts and internal errors (crashes).
Surprisingly, Norn performed really well. It could detect the highest number
of the cyclic reasoning (432 problems - 72\%).
Trau eventually returned either crashes or {\unsat} for all benchmarks.
%
The results also show that
 {\deci} was both effective
and efficient on
these benchmarks. It decided correctly all queries within a short
running time. These results are encouraging us to
extend the proposed cyclic proof system to support
inductive reasoning over
other string operations
(like \code{replaceAll}).

To highlight our contribution, we revisit the problem
\form{\se_{c}{\equiv}~ xaay {=} ybax} (highlighted in Sect. \ref{sec.intro})
which is contained in
file \code{quad{-}004{-}2-unsat} of the benchmarks.
{\deci} generates a cyclic proof for  \form{\se_{c}}
with the base case \form{\se_{c}^1 {\vee} \se_{c}^2} where
\form{\se_{c}^1{\equiv}\se_{c}[\semp/x]{\equiv} aay{=}yba} and
\form{\se_{c}^2{\equiv}\se_{c}[\semp/y]{\equiv} xaa{=}bax}.
 It is known that for certain words \form{w_1}, \form{w_2} and a variable \form{z}
the word equation \form{z {\cdot} w_1 {=} w_2 {\cdot} z} is satisfied if
there exist words \form{A}, \form{B}
and a natural number \form{i} such that
 \form{w_1{=}A{\cdot}B}, \form{w_2{=}B{\cdot}A}
and \form{z{=}(A{\cdot}B)^i{\cdot}A}.
Therefore, both \form{\se_{c}^1}
and \form{\se_{c}^2} are
unsatisfiable. The soundness of the cyclic proof implies that
 \form{\se_{c}} is unsatisfiable.
 For this problem, while {\deci} returned
{\unsat}
within 1 second,
 Z3str2 and Z3str3 
 returned {\unknown},
S3P, Norn
and CVC4  were unable to decide
within 180 seconds.


\section{Related Work and Conclusion} \label{sec.related}

Makanin notably provides a mathematical proof for the satisfiability problem
of word equation \cite{Makanin:math:1977}. 
In the sequence of papers,
 Plandowski {\em et.al.} showed that the complexity of
this problem is PSPACE \cite{Plandowski:JACM:2004}.
The proposed procedure {\allSat} is closed to
the (more general) problem in computing
the set of all solutions for a word equation
  \cite{Jaffar:JACM:1990,Plandowski:STOC:2006,Ferte2014,Jez:JACM:2016,Ciobanu:ICALP:2015}.
The algorithm presented
in \cite{Jaffar:JACM:1990} which is based on
Makanin's algorithm does not terminate if the set is infinite.
Moreover,  the length constraints
 derived by \cite{Plandowski:STOC:2006,Jez:JACM:2016}
may not be in a finite form.
\hide{In particular, although Plandowski's work \cite{Plandowski:STOC:2006}
can derive a finite representation of all solutions
of a word equation, the length constraints implied by this representation
 are not always represented with finitely many
 equations in numeric solvable form.}
In comparison, due to the consideration of
cyclic solutions,  {\allSat} terminates even for infinite
sets of all solutions.
The description of the sets of all solutions
as EDTOL languages
was known \cite{Ferte2014,Ciobanu:ICALP:2015}.
For instance, authors in \cite{Ferte2014}
show that the languages of quadratic word equations
can be recognized by some pushdown automaton of level 2.
Although \cite{Jez:JACM:2016} did not aim at giving such
a structural result, it provided {\em recompression} method
which is the foundation for the remarkable procedure in
\cite{Ciobanu:ICALP:2015}
which prove that languages of
solution sets of
 arbitrary word equations
are EDTOL.
In this work, we propose
a decision procedure which is
 based on the description of solution sets
as {\em finite-index} EDTOL languages.
Like \cite{Ferte2014},
we also show that sets of all solutions
of quadratic word equation are EDTOL languages.
In contrast to \cite{Ferte2014},
we give a concrete procedure to construct
such languages for a solvable equation
such that an implementation of the decision procedure
for string constraints
is feasible.
As shown in this work,
 finite-index feature is the key to obtain
a decidability result when handling a theory combining
word equations with length constraints over words.
It is unclear whether the description
derived by the procedure in \cite{Ciobanu:ICALP:2015}
is the language of finite index.
Furthermore, node of the graph derived by \cite{Ciobanu:ICALP:2015}
is an extended equation which is an element in
a free partially commutative monoid rather
than 
 a word equation.

Decision procedures for
quadratic word equations
are presented in \cite{Schulz:1990:MAW,Diekert1999}. 
Moreover, Schulz \cite{Schulz:1990:MAW} also extends Makanin's
algorithm to a theory of word equations and regular memberships.
Recently, \cite{Hooimeijer:PLDI:2009,Hooimeijer:ASE:2010}
presents a decision procedure for subset constraints over regular
expressions.
\cite{Liang:FroCos:2015} presents a decision procedure for
regular memberships and length constraints.
\cite{Ganesh:HVF:2012,Abdulla:CAV:2014} presents a decidable fragment
of {\em acyclic} word equations, regular expressions
and constraints over length functions.
It can be implied that this fragment is subsumed by ours.
\cite{Lin:POPL:2016,Tao:POPL:2018,Lukas:POPL:2018} presents a straight-line
fragment
including word equations and transducer-based functions (e.g., \code{replaceAll})
which is incomparable to our decidable fragments.
%
\hide{In the rest of this section,
we summarize the development of related works on practical string solvers.

{\em Automaton-based Solvers.} Finite automata provides a natural encoding
for string with regular membership constraints.
 Rex \cite{Veanes:ICST:2010} encodes strings as symbolic finite automata (SFA).
Each SFA transition is transformed into SMT constraints.
Java String Analyzer (JSA) \cite{Christensen:SAS:2003} is specialized
for Java string constraints. JSA approximates string constraints 
into multi-level automaton.
 \cite{Hooimeijer:PLDI:2009,Hooimeijer:ASE:2010}
provides a reasoning over string with {\em priori} length bounds.
Recent work in \cite{Aydin:CAV:2015} provides
a length-bound approach for solving string constraints and further
counting the number of solution to such constraints.
Recently, authors in \cite{Abdulla:CAV:2014,Abdulla:CAV:2015}
proposes a DPLL(T)-based approach to
unbounded string constraints with regular expressions and length function.
 \cite{Wang2016} described
a new method based on a scalable logic circuit representation to support
 various string and automata manipulation operations and
 counter-example generation.
The work in \cite{Parosh:PLDI:2017} also uses Parikh's Theorem
to reduce words into letter-counts.
However, this work is not complete
and only aims for an efficient analysis.
In contrast to \cite{Parosh:PLDI:2017},
we base on Parikh's Theorem to construct a
decision procedure for the powerful fragments.

{\em Bit-vector-based Solvers.}
Hampi  \cite{Kiezun:ISSTA:2009}
reduces fixed-sized string constraints to bit-vector problem
and then satisfiability.
 Kazula \cite{Saxena:SP:2010} extends Hampi with
{\em concatenation} operation. It
first solves arithmetical constraints and then
 enumerates possible fixed-length versions
of an input formula using Hampi.
In \cite{Bjorner:TACAS:2009}, strings are represented as arrays.
Discharging string with length constraints are performed through two phases.
First an integer-based over-approximation of the string constraint is solved. After that,
fixed-length string constraints are then decided in a second phase.

{\em Word-based Solvers.}}
Z3str \cite{Zheng:FSE:2013} implements string theory as an extension
of Z3 SMT solver through string plug-in. It supports unbounded string constraints with
a wide range of string operations. Intuitively,
it solves string constraints and generates string lemmas
to control with Z3's congruence closure core.
Z3str2 \cite{Zheng:CAV:2015} improves Z3str by
proposing a detection of those constraints beyond the tractable fragment,
 i.e.  overlapping arrangement,
and pruning the search space for efficiency.
Similar to Z3str, CVC4-based string solver \cite{Liang:CAV:2014}
communicates with CVC4's equality solver to exchange information over string.
S3P \cite{Trinh:CAV:2016} enhances Z3str
to incrementally interchange information between string and arithmetic constraints.
S3P also presented some heuristics
to detect and prune non-minimal subproblems
while searching for a proof.
While the technique in S3P was able to detect non-progressing
scenarios of satisfiable formulas, it would not terminate
for
 unsatisfiable formulas
due to presence of multiple occurrences of each string variable.
Our solver can support well for both classes of queries in case of less than
or equal to two  occurrences of each string variable.

\paragraph{Conclusion}\label{sec.conclude}
We have presented the solver
 {\deci} for the satisfiability
 of string constraints combining
word equations, regular expressions
and length functions.
We have identified two decidable fragments 
including quadratic
word equations. 
Finally, we have implemented and evaluated
{\deci}.
Although our solver is only a prototype, the results are encouraging for their
coverage as well as their performance.
For future work, we plan 
to 
 support
other string operations (e.g., replaceAll).
\hide{
for the full fragment {\strel} as well as
 support
additional practical functions (e.g., replaceAll, charAt).} 

\paragraph{\bf Acknowledgements.}
 Anthony W. Lin for the helpful discussion while the author was visiting Oxford University in December 2017. We are grateful for the constructive feedback from the reviewers of POPL'18 and CAV'18.


\bibliographystyle{abbrv}
\bibliography{all}

  \appendix

\section{String Constraints (Cont)}

\paragraph{Semantics of String Constraint} \label{app.sem}
\begin{figure}[h]
 \begin{center}
  \begin{minipage}{0.65\textwidth}
  \begin{frameit} \savespace  \savespace
\[
\begin{array}{lcl}
\form{\sstack,\istack} \force \pure_1 {\vee} \pure_2 & {\iffs} &
 \form{\sstack,\istack} \force \pure_1 \text{ or }
 \form{\sstack,\istack} \force \pure_2 \\
 \form{\sstack,\istack} \force \pure_1 {\wedge} \pure_2 & {\iffs} &
 \form{\sstack,\istack} \force \pure_1 \text{ and }
 \form{\sstack,\istack} \force \pure_2 \\
\form{\sstack,\istack}  \force \neg \pure_1 & {\iffs} &
 \form{\sstack,\istack} \not{\models} \pure_1 \\
\form{\sstack,\istack} \force s{\in}\regex & {\iffs} & 
\exists w{\in}  \classlang{\regex} \cdot \sstack,\istack \force
 s = w
  \\
 \form{\sstack,\istack} \force s_1{=}s_2 & {\iffs} & 
 {\sstack}(s_1){=} {\sstack}(s_2) \text{ and }
{\istack}(s_1){=} {\istack}(s_2)
  \\
\form{\sstack,\istack} \force s_1{\neq}s_2 & {\iffs} &
\form{\sstack,\istack} \force \neg (s_1{=}s_2)
  \\
 \form{\sstack,\istack} \force {\a_1}{\oslash}{\a_2} & {\iffs} &  {\sstack}(\a_1)~{\oslash}~ {\sstack}(\a_2) \text{, where }
 \oslash \in \{=,\leq\}
  \\
 \end{array}
 \]
 \end{frameit}
 \end{minipage}
\caption{Semantics}\label{prm.sem.fig}
\end{center}  \savespace 
\end{figure}

We show the details of the semantics in Fig. \ref{prm.sem.fig}.
We notice that
an equation of string terms is satisfied
if there exists an assignment that satisfies
word equation over string variables
as well as equation over their lengths.

\subsection{Normalized Formulas} \label{app.norm.form}

We show how to normalize word equations and
regular expressions in a formula.
First, we show how to transform
negation over word equations,
and disjunction of word equations
into
  an equivalent single word equation.
By doing so, it is safe
to consider only single word equation
in the proposed algorithms.
The reader is referred to \cite{khmelevski1976equations,Volker:2002:Book}
for the
correctness of the transformation.
Word disequalities can be eliminated
using the following proposition \cite{Volker:2002:Book}.
\begin{proposition}
A disequality \form{s_1{\neq}s_2} is equivalent with
the following formula:
\[
\bigvee_{a \in \alphabet} (s_1{=}s_2{\cdot}a{\cdot}x \vee
 s_2{=}s_1{\cdot}a{\cdot}x) \vee
 \bigvee_{a,b \in \alphabet, a{\neq}b}(s_1{=}x{\cdot}a{\cdot}y {\wedge}
s_2{=}x{\cdot}b{\cdot}z)
\]
where \form{x}, \form{y} and \form{z} are fresh variables.
\end{proposition}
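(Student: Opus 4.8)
The plan is to establish the equivalence semantically, reading the fresh variables $\form{x}$, $\form{y}$, $\form{z}$ as existentially quantified at the formula level. Concretely, I would fix an interpretation $\sstack$ of the variables in $\FV(s_1) \cup \FV(s_2)$ and show that $\sstack \force s_1 \neq s_2$ holds if and only if $\sstack$ can be extended to $\form{x}$, $\form{y}$, $\form{z}$ so that one disjunct of the right-hand side is satisfied. This yields the equi-satisfiability that the normalization step requires, and it splits cleanly into a soundness (right-to-left) and a completeness (left-to-right) direction, each argued by elementary reasoning on words in $\alphabet^*$.

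For soundness I would assume a witnessing extension for one disjunct and exhibit a concrete reason the two words differ. In the two prefix cases $\form{s_1 = s_2 \cdot a \cdot x}$ and $\form{s_2 = s_1 \cdot a \cdot x}$, the inserted letter $\form{a}$ forces a strict length gap, e.g. $\sleng{s_1} = \sleng{s_2} + 1 + \sleng{x} > \sleng{s_2}$ (and symmetrically), so the words cannot be equal. In the divergence case $\form{s_1 = x \cdot a \cdot y}$ and $\form{s_2 = x \cdot b \cdot z}$ with $\form{a \neq b}$, the letters at position $\sleng{x}+1$ disagree, so again $\sstack(s_1) \neq \sstack(s_2)$. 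Each disjunct thus describes only genuine disequalities.

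For completeness I would assume $\sstack(s_1) \neq \sstack(s_2)$ as words and let $\form{w}$ be their longest common prefix. A case split on whether $\form{w}$ exhausts one or both words produces exactly the three disjuncts. If $\form{w} = \sstack(s_1)$ then $s_1$ is a proper prefix of $s_2$ (proper since the words differ), so taking the first letter $\form{a}$ of the remaining suffix gives $\form{s_2 = s_1 \cdot a \cdot x}$ with $\form{x}$ the rest; symmetrically $\form{w} = \sstack(s_2)$ gives $\form{s_1 = s_2 \cdot a \cdot x}$. Otherwise both words strictly extend $\form{w}$, and I instantiate the fresh prefix variable $\form{x}$ by $\form{w}$, reading off $\form{s_1 = w \cdot a \cdot y}$ and $\form{s_2 = w \cdot b \cdot z}$.

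The main obstacle is getting this last case split exactly right rather than any deep difficulty: the crux is invoking maximality of the longest common prefix to guarantee that the immediately following letters satisfy $\form{a \neq b}$ in the divergence case, and checking that the two prefix disjuncts together cover every proper-prefix situation, including the boundary where one of $s_1$, $s_2$ is $\semp$ (the empty word is a proper prefix of any nonempty word, so the relevant prefix disjunct fires, while if both are empty the left side is false and no disjunct can hold since each demands an extra letter). Since $\alphabet$ is finite, the right-hand side is a genuine finite disjunction, and the existential witnesses for $\form{x}$, $\form{y}$, $\form{z}$ are read off directly from $\form{w}$ and the suffixes, so no machinery beyond the classical word combinatorics cited in the references is needed.
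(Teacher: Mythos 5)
Your proof is correct. It is worth noting that the paper does not actually prove this proposition at all: it is stated in the normalization appendix with a citation to the literature (Diekert's 2002 handbook chapter) and accompanied only by the one-line intuition that two string terms differ iff there is a position $i$ at which the letters of \form{\sstack(s_1)} and \form{\sstack(s_2)} disagree, the elimination itself being borrowed from the Norn solver. Your argument is a self-contained formalization of exactly that intuition: the longest-common-prefix case split yields precisely the three disjunct shapes --- one word a proper prefix of the other (two symmetric cases, where the strict length gap $\sleng{s_1} = \sleng{s_2} + 1 + \sleng{x} > \sleng{s_2}$ gives soundness) or a first position of divergence (where maximality of the common prefix is what forces $a \neq b$, the one step that genuinely needs an argument). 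You also make explicit two points the paper leaves implicit: that ``equivalent'' must be read modulo existential quantification of the fresh variables $x$, $y$, $z$ (equi-satisfiability relative to a fixed interpretation of $\FV(s_1) \cup \FV(s_2)$, which is all the normalization step requires), and that finiteness of \form{\alphabet} is what makes the right-hand side a legitimate finite disjunction. Your boundary check is also right: if both terms denote \form{\semp} the left side is false and no disjunct can hold, since each demands at least one additional letter, while if exactly one is empty the appropriate prefix disjunct fires with $x$ possibly \form{\semp}. In short, your proposal is not a divergence from the paper's proof so much as the proof the paper omits; what it buys is a verifiable elementary argument in place of a citation, at the cost of nothing beyond routine word combinatorics.
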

Intuitively, two string terms \form{s_1} and \form{s_2}
are different
if there exists an interpretation \form{\sstack} and a non-negative number
 \form{i}
such that the letters at position i in \form{\sstack(s_1)} and
\form{\sstack(s_2)} are different.
This elimination is utilized in Norn solver \cite{Abdulla:CAV:2014}.

A disjunction of word equations can be replaced by
  a single word equation as follows.
\begin{proposition}
Let \form{a, b \in \alphabet} be distinct letters and
 \form{a \neq b}.
A disjunction of two word equations is equivalent with a single
word equation in two extra unknowns.
\end{proposition}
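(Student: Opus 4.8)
The plan is to treat this as the ``union'' counterpart of the classical reduction of a conjunction (system) of word equations to a single equation, and to realise the choice between the two disjuncts by two guard letters. Write the two equations as $E_1 \equiv u_1 {=} v_1$ and $E_2 \equiv u_2 {=} v_2$ with $u_i, v_i$ terms over variables in $\SVar$, and let $x, y$ be two fresh string variables. The goal is to exhibit a single word equation $F$, built from $u_i, v_i$, the fresh $x, y$, and the two distinct letters $a, b$, such that over the original variables $E_1 \vee E_2$ holds exactly when $\exists x, y.\, F$ holds; i.e.\ the projection of the solution set of $F$ onto $\SVar$ equals $S_1 \cup S_2$, where $S_i$ is the solution set of $E_i$. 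First I would fix this target equivalence and put both $E_i$ into a normal shape (each side a single concatenation) so that the guards can be inserted at well-defined positions.

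The main tool is the mismatch principle for distinct letters: since $a \neq b$, no assignment can make two words equal if, after a common matched prefix, one continues with $a$ and the other with $b$. I would use $a$ and $b$ as guards that make the two branches of the disjunction mutually recognisable, and use the fresh variables $x, y$ as absorbers: the layout of $F$ is chosen so that matching $E_1$ forces one guard alignment and lets $x, y$ swallow the material coming from $E_2$ (making the $E_2$-part vacuously satisfiable), while matching $E_2$ forces the complementary alignment, and symmetrically. Concretely I would take $F$ of the cross-linked form used in \cite{khmelevski1976equations,Volker:2002:Book}, in which the two equations do \emph{not} appear ``in parallel'' (a parallel layout collapses to the conjunction $E_1 \wedge E_2$ after cancellation), but are interleaved with the guards $a, b$ so that a successful match certifies which disjunct was taken.

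With $F$ in hand the verification splits into two directions. The easy direction is forward: given a model of, say, $E_1$, I would give explicit witnesses for $x$ and $y$ (read off from the chosen layout) and check by direct cancellation that $F$ is satisfied; symmetrically for $E_2$. The hard part is the backward direction --- showing that every solution of $F$ projects into $S_1 \cup S_2$ and that the free variables $x, y$ cannot manufacture a spurious solution outside it. Here I would run a first-mismatch and length case analysis on the positions of the guard letters $a, b$ inside a hypothetical solution of $F$: by the mismatch principle each admissible alignment is pinned down to one of exactly two cases, and in each case the equation cancels down to precisely $u_1 {=} v_1$ or $u_2 {=} v_2$, with no ``mixed'' alignment surviving, since such an alignment would force an $a$ to meet a $b$. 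This case analysis, together with the bookkeeping needed to show the guards cannot be absorbed by the fresh variables in an unintended way, is the real obstacle; it is exactly the combinatorial content supplied by the cited sources.

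Finally I would record why the hypothesis $a \neq b$ (equivalently, at least two letters) is indispensable: it is what supplies mutually exclusive guards. Over a unary alphabet the construction cannot work at all, since every word equation degenerates to a linear relation over $\mathbb{N}$ and a disjunction of two such relations is in general not expressible by a single one; the two distinct letters are precisely what let disjunction be internalised into one equation.
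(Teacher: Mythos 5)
The paper itself contains no proof of this proposition: in the normalization appendix (App.~\ref{app.norm.form}) it states the result and, just above, refers the reader to \cite{khmelevski1976equations,Volker:2002:Book} for the correctness of the transformation --- the single-equation encoding of disjunction being Hmelevskii's classical result. Your attempt, read as a standalone proof, has a genuine gap of exactly the same kind: the single equation $F$ is never exhibited. You say you ``would take $F$ of the cross-linked form used in'' the cited sources, and you concede that the backward direction's alignment analysis ``is exactly the combinatorial content supplied by the cited sources.'' But that construction \emph{is} the proposition. Without a concrete $F$, the forward direction (witnesses for $x,y$ ``read off from the chosen layout'') cannot be carried out, and the central claim of the backward direction --- that the first-mismatch analysis pins every solution of $F$ to exactly one of two alignments, with no mixed alignment and no unintended absorption of the guard letters by the fresh variables --- is unverifiable; this is precisely where naive candidate layouts fail, a danger you flag (``the real obstacle'') but do not resolve. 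What you do get right, and what matches the intent behind the paper's citation, is the inventory of ingredients: the two distinct letters as mutually exclusive guards, the two fresh variables as absorbers, the mismatch principle driving the case analysis, and the observation that a parallel layout would encode conjunction rather than disjunction. That is a correct proof \emph{plan}, not a proof.

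One further inaccuracy worth noting: your closing claim that over a unary alphabet ``the construction cannot work at all'' because a disjunction of two linear relations ``is in general not expressible by a single one'' is too strong as stated, since the extra unknowns permit existential projection. For instance $x{=}\semp \,\vee\, x{=}a$ is exactly the projection onto $x$ of the solutions of the single equation $x\cdot y = a$ over the alphabet $\{a\}$, so some unary disjunctions are expressible; the genuine obstruction over a one-letter alphabet is subtler and concerns which eventually-periodic sets arise as such projections. The remark is inessential to the proposition, which assumes $a \neq b$, but as written it would not survive scrutiny.
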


Next, we show how to remove the negation and the concatenation
operator
over regular expression. It is easy to show that
the negation of a membership predicate in
a regular expression is equivalent \form{\regex} with
a membership predicate in its complement  \form{\regex^C}.
\begin{lemma}
Let \form{s} be a string term and \form{\regex}
a regular expression. Then,
\form{\neg(s \in \regex) \equiv (s \in \regex^C)}.
\end{lemma}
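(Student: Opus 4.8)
The plan is to prove the stated equivalence semantically, by showing that for every interpretation consisting of a string store \form{\sstack {\in} \SStore} and an arithmetic store \form{\istack {\in} \IStore} the two formulas \form{\neg(s {\in} \regex)} and \form{s {\in} \regex^C} are satisfied by exactly the same models, which is precisely what the semantic equivalence \form{\equiv} demands. The crucial observation, which drives the whole argument, is that since \form{\sstack} is fixed as a homomorphism with \form{\sstack(c){=}c} and \form{\sstack(t_1t_2){=}\sstack(t_1)\sstack(t_2)}, the value \form{\sstack(s)} is a single, uniquely determined word of \form{\alphabet^*}. Consequently, deciding whether a string term lies in a regular expression reduces to testing membership of this one word in the associated language, and the existential quantifier appearing in the semantics of \form{s {\in} \regex} (Fig. \ref{prm.sem.fig}) collapses to a plain set-membership test.

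Concretely, first I would unfold the left-hand side: \form{\sstack,\istack \force \neg(s {\in} \regex)} holds iff \form{\sstack,\istack \nforce (s {\in} \regex)}, which by the membership rule means there is no \form{w {\in} \classlang{\regex}} with \form{\sstack,\istack \force s {=} w}. Because \form{\sstack(w){=}w} for every concrete word \form{w}, the inner equation \form{s{=}w} holds exactly when \form{\sstack(s){=}w} (the accompanying length condition \form{\sleng{s}{=}\sleng{w}} is then automatic, since the length of \form{\sstack(s)} is fixed once its value is fixed). Hence the left-hand side is equivalent to the single condition \form{\sstack(s) {\notin} \classlang{\regex}}. I would then unfold the right-hand side the same way: \form{\sstack,\istack \force (s {\in} \regex^C)} holds iff \form{\sstack(s) {\in} \classlang{\regex^C}}. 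It remains to invoke the semantics of the complement operator, namely \form{\classlang{\regex^C} {=} \alphabet^* \setminus \classlang{\regex}} (regular languages are closed under complement \cite{Hopcroft:2006:IAT}), turning the right-hand side into \form{\sstack(s) {\in} \alphabet^* \setminus \classlang{\regex}}, i.e. again \form{\sstack(s) {\notin} \classlang{\regex}}. Chaining these equivalences over all \form{\sstack}, \form{\istack} closes the argument.

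The one delicate point, and the place I expect to spend the most care, is justifying that the existential \form{\exists w {\in} \classlang{\regex}} genuinely collapses to membership of the single word \form{\sstack(s)}: this step uses exactly that a string term denotes a definite word under a given interpretation, so that ``some \form{w} in the language equals \form{s}'' and ``the value of \form{s} lies in the language'' coincide. Once this is settled, the remainder is a routine rewriting of definitions; in particular no induction on the structure of \form{\regex} is required, since the identity \form{\classlang{\regex^C}{=}\alphabet^*\setminus\classlang{\regex}} is part of the regular-expression semantics rather than something to be built up. I therefore expect the full proof to be short, with the semantic collapse of the existential as its only nontrivial ingredient.
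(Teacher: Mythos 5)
Your proof is correct and matches the paper's intent: the paper states this lemma without proof (dismissing it with ``It is easy to show\dots''), and the routine argument it has in mind is precisely your unfolding --- since \form{\sstack(s)} is a uniquely determined word, the existential in the membership semantics collapses to \form{\sstack(s) \in \classlang{\regex}}, and \form{\classlang{\regex^C} = \alphabet^* \setminus \classlang{\regex}} does the rest. Your explicit handling of the length side-condition (automatic once \form{\sstack(s){=}w} is fixed, under the intended reading where lengths are derived from \form{\sstack}) is a reasonable filling-in of a detail the paper's semantics leaves implicit.
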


We note that either
the expression \form{w \in \regex} or the expression \form{\neg(w \in \regex)} where \form{w \in \alphabet^*} is
trivially evaluated and replaced by \form{\true}
or \form{\false}.
Removing the concatenation operator
in the expression \form{s_1 \cdot s_2 \in \regex}
relies on the following function.
Let \form{L} be a regular language
and \form{f(L)} is a set of pairs of DFAs
\form{(D_1, D_2)} which represent for
regular
languages \form{(L_1, L_2)}
such that \form{L = L_1 \cdot L_2}.
To compute the set \form{f(L)} of
a given regular language \form{L}, represent
\form{L} by some fixed automaton \form{\dfa{Q}{\alphabet}{\transition}{s_0}{Q_F}}.
For any state \form{q_i \in Q}, we generate two automata
\form{D_1} form{\dfa{Q}{\alphabet}{\transition}{s_0}{\{s_i\}}} and
\form{D_2} \form{\dfa{Q}{\alphabet}{\transition}{s_i}{Q_F}}, respectively.
 Then, \form{\forall w \in L}
and \form{w=w_1 \cdot w_2}, there exists
a state \form{s_i} to form such two automata which,
in turn, generate two corresponding languages \form{L_1}, \form{L_2}
such that \form{w_1 \in L_1}
and \form{w_2 \in L_2}.
Let \form{DFA2RE} be the function to convert
a DFA to a regular expression. Then, the
 following lemma is straightforward.
\begin{lemma}
Let \form{s_1}, \form{s_2} be string terms and \form{\regex}
a regular expression. Then,
\[
\form{(s_1 \cdot s_2 \in \regex) \equiv \bigvee \{
 s_1 \in DFA2RE(D_1) \wedge s_2 \in DFA2RE(D_2) \mid (D_1,D_2) \in f(\lang(\regex))} \}
\]
\end{lemma}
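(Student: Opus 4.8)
The plan is to reduce the claimed \emph{semantic} equivalence to a purely language-theoretic identity over words, and then to establish that identity by a standard cut-state decomposition of a single automaton. First I would unfold the semantics of regular membership from Fig.~\ref{prm.sem.fig}: for any interpretation $\sstack$, the judgement $\sstack \force s \in \regex$ holds iff $\sstack(s) \in \classlang{\regex}$. Writing $L = \classlang{\regex}$, $w_1 = \sstack(s_1)$ and $w_2 = \sstack(s_2)$, and using $\sstack(s_1 \cdot s_2) = w_1 w_2$, the left-hand side becomes $w_1 w_2 \in L$, while the right-hand side becomes ``there is a pair $(D_1, D_2) \in f(L)$ with $\sstack(s_1) \in \classlang{DFA2RE(D_1)}$ and $\sstack(s_2) \in \classlang{DFA2RE(D_2)}$''. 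Since the two sides now concern only the two words produced by the interpretation, it suffices to prove, for arbitrary words $w_1, w_2 \in \alphabet^*$, that $w_1 w_2 \in L$ iff such a pair exists; the equivalence for every $\sstack$ then follows by instantiation.

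For the word-level identity I would fix the DFA $A = \dfa{Q}{\alphabet}{\transition}{s_0}{Q_F}$ used to define $f(L)$, and recall that every pair in $f(L)$ is $(D_1^{s_i}, D_2^{s_i})$ for some state $s_i \in Q$, where $D_1^{s_i} = \dfa{Q}{\alphabet}{\transition}{s_0}{\{s_i\}}$ and $D_2^{s_i} = \dfa{Q}{\alphabet}{\transition}{s_i}{Q_F}$ share the transition relation $\transition$ and state set $Q$ of $A$. The forward direction: if $w_1 w_2 \in L$, pick an accepting run of $A$ on $w_1 w_2$ and let $s_i$ be the state it occupies immediately after reading $w_1$; the prefix of that run witnesses $w_1 \in \classlang{D_1^{s_i}}$ and the suffix witnesses $w_2 \in \classlang{D_2^{s_i}}$, yielding the required disjunct. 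The backward direction is the dual gluing: given $w_1 \in \classlang{D_1^{s_i}}$ and $w_2 \in \classlang{D_2^{s_i}}$, a run from $s_0$ to $s_i$ reading $w_1$ and a run from $s_i$ to some accepting state reading $w_2$ concatenate---legally, because all three automata use the same $\transition$---into an accepting run of $A$ on $w_1 w_2$, so $w_1 w_2 \in L$. The two directions together assert exactly $\bigcup_{s_i \in Q} \classlang{D_1^{s_i}} \cdot \classlang{D_2^{s_i}} = L$.

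Two auxiliary facts close the argument. I would invoke the standard correctness of $DFA2RE$, namely $\classlang{DFA2RE(D)} = \classlang{D}$, to pass between the automaton languages and the regular-expression memberships that actually appear in the statement; this is the Kleene-style conversion already cited via~\cite{Hopcroft:2006:IAT}, so I would assume it. I would also record that $f(L)$ is \emph{finite}---one pair per state of $Q$---so the disjunction is a well-formed formula of the logic, which is what makes this a usable normalization rule. The main obstacle I anticipate is not depth but care over the automaton model: $\transition$ here permits $\semp$-transitions, so the cut state reached after reading $w_1$ is generally not unique and a run is not determined by the word. The gluing must therefore be phrased in terms of the \emph{existence} of a run (an NFA-style reachability argument) rather than a deterministic extended transition function; once both directions are stated as reachability of an accepting state they are routine, and in fact determinism of $A$ is never used, so the decomposition holds for the $\semp$-NFA as stated.
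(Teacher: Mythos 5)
Your proof is correct and follows exactly the route the paper intends: the paper itself gives the cut-state construction of $f(L)$ inline (one pair of automata per state $s_i$, sharing $Q$ and $\transition$) and then declares the lemma ``straightforward,'' and your run-splitting/run-gluing argument is precisely the omitted justification. Your added observation that $\transition$ admits $\semp$-transitions---so the argument must be phrased via existence of runs and determinism is never used---is a careful refinement the paper glosses over, but it does not change the approach.
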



\hide{ We separate the conjuncts of a formula \form{\pure} into
four parts: \form{\pure{\equiv} s_1{=}s_2 ~{\wedge}~ \astart ~{\wedge}~ \myit{I} ~ {\wedge} ~\subterm } where
(i) \form{s_1{=}s_2} is a word equation,
(ii) \form{\astart} a conjunction of 
regular expressions,
(iii) \form{\myit{I}} is a conjunction of  arithmetical constraints,
 (iv) and finally \form {\subterm} is a conjunction
 of subterm relations  obtained from
unfolding inductive string predicates.
We notice that if it is unambiguous, we sometimes use 
\form{\astart}, \myit{I} and \form{\subterm} as sets
instead of conjunctions. And
while string variables
in \form{\seqs} may be encoded with the inductive predicates,
those in \form{\subterm} are not.
For every string \form{x}, 
its invariant \form{\code{LEN}(x){\geq}0}
 must be implied by \form{\myit{I}}.
Each \form{\subterm} is of the form
either \form{s_1{=}c{\cdot}s_2} or \form{s_1{=}s_2{\cdot}s_3}.
They are deduced during solving a formula
and dedicated for constructing a model to witness {\sat}.}

\section{Correctness of {\allSat} - Proposition \ref{thm.tree}}\label{app.correct.tree}


%

The correctness of procedure {\allSat}
replies on the correctness of \code{\matchse},
\code{\complete} and the soundness
of the {\em cyclic proofs} where all leaf nodes are 
marked as closed.

\paragraph{Procedure \code{\matchse}}
First, we show that
\code{\matchse} produces a equi-satisfiable word
equation.
\begin{lemma}[Matching]\label{lem.sound.match}
Suppose that \form{\se} is a word equation,
and \form{\se'{=}\code{\matchse}(\se)}. Then,
 a) if \form{\se} is satisfiable, so is \form{\se'}.
b) if \form{\se'} is satisfiable, so is \form{\se}.
c) in both cases a) and b),
{\se'} is a suffix of {\se}.
\end{lemma}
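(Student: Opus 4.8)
The plan is to reduce the whole statement to a single \emph{left-cancellation} fact about the free monoid $\alphabet^*$ and then lift it along the recursion that defines $\matchse$. First I would establish the one-step claim: if $u$ is a letter or a string variable, then for \emph{every} $\langle \sstack, \istack \rangle$ we have $\sstack, \istack \force u \cdot tr_1 = u \cdot tr_2$ iff $\sstack, \istack \force tr_1 = tr_2$. This is immediate from the semantics: since $\sstack$ is a monoid homomorphism, $\sstack(u \cdot tr_1) = \sstack(u)\,\sstack(tr_1)$ and $\sstack(u \cdot tr_2) = \sstack(u)\,\sstack(tr_2)$, and in $\alphabet^*$ the equality $\sstack(u)\,\sstack(tr_1) = \sstack(u)\,\sstack(tr_2)$ holds exactly when $\sstack(tr_1) = \sstack(tr_2)$ by left cancellation; the length component of the satisfaction relation agrees automatically because equal words have equal length. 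Hence a single matching step preserves the \emph{entire solution set}, not merely satisfiability, and both cases ($u$ a letter and $u$ a variable) are covered by this one computation.

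Next I would observe that $\matchse(\se)$ is obtained by iterating this step finitely many times: each recursive call strips one common head symbol, and the recursion terminates since the notational length $\se(N)$ strictly decreases. A straightforward induction on the number of stripping steps, invoking the one-step claim at each stage, then shows that $\se$ and $\se' = \matchse(\se)$ have \emph{identical} solution sets. Parts (a) and (b) are immediate corollaries: any witness $\langle \sstack, \istack \rangle$ for one is a witness for the other.

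For part (c) I would track the solution word. Since we are in case (a) or (b), at least one of $\se, \se'$ is satisfiable, so by the solution-set identity both are, and I may fix a common solution $\langle \sstack, \istack \rangle$. Writing the equation stripped at one step as $u \cdot tr_1 = u \cdot tr_2$, its solution word is $\sstack(u \cdot tr_1) = \sstack(u)\,\sstack(tr_1)$, whereas the solution word of the one-step result $tr_1 = tr_2$ is $\sstack(tr_1)$ --- literally the former with its length-$\sleng{\sstack(u)}$ prefix removed, hence a suffix of it. As the suffix relation on $\alphabet^*$ is transitive, composing the finitely many stripping steps shows the solution word of $\se'$ is a suffix of the solution word of $\se$, which is precisely the definition of $\se'$ being a suffix of $\se$.

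The argument is essentially routine; the only points demanding care are that the recursion be unwound uniformly over its two cases --- both handled by the single homomorphism-plus-cancellation computation above --- and that the length component of the satisfaction relation be addressed, which it is since agreement of the string values forces agreement of their lengths.
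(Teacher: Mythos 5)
Your proof is correct, and it takes a genuinely more uniform route than the paper's. The paper proves the lemma by case analysis on the common head symbol: a letter (Case~1), a variable occurring elsewhere in the equation (Sub-case~2.1), and a variable not occurring elsewhere (Sub-case~2.2); in the last case, direction~(b) is handled by explicitly \emph{extending} a witness of \form{\se'} with \form{\sstack'(u)=\semp}, which incidentally yields the sharper conclusion that the solution words coincide (\form{S=S'}) there, so the suffix in~(c) is trivial. You instead observe that since \form{\SStore} consists of \emph{total} maps \form{(\SVar\cup\alphabet)\rightarrow\alphabet^*} and \form{\alphabet^*} is left-cancellative, a single stripping step preserves the entire solution set --- no case split between letters and variables, and no witness surgery --- and then an induction on the (strictly decreasing) number of recursive calls of \form{\matchse} transports this to \form{\se'}, with~(c) following because \form{\sstack(tr_1)} is literally a suffix of \form{\sstack(u)\,\sstack(tr_1)}. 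What your approach buys is a stronger invariant (identity of solution sets, not mere equisatisfiability) and a one-line core argument; what the paper's buys is robustness if assignments are read as partial maps defined only on the free variables of the equation at hand (the reading its Sub-case~2.2 implicitly adopts, and the one used elsewhere in its proofs of the \code{\complete} lemma), where your ``same \form{\sstack} works both ways'' step would need exactly the extension the paper constructs. Under the paper's stated semantics your argument is complete as written; you also correctly discharge the length component of \form{\force}, since equal words force equal lengths.
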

\repconf{The proof is presented in App. \ref{proof.lem.sound.match}.}{}

Function \form{\matchse(\se)} also has the following property.
\begin{lemma}\label{lem.match.size}
Let \form{\code{\se'(N') {=}\matchse}(\se(N))}. Then,
  \form{N'{\leq}N}.
\end{lemma}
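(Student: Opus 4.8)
The plan is to proceed by a straightforward induction on the notational length $N$ of the input equation $\se$ (equivalently, on the number of recursive calls that $\matchse$ performs). The essential observation is that the definition of $\matchse$ branches in exactly two ways, and neither branch can increase the symbol count: the recursive branch strips matched heads, while the terminating branch leaves the equation untouched.

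First I would treat the terminating branch as the base case. When the two carrier terms $u_1, u_2$ at the heads of the two sides fail to be identical (in particular, when one side has no remaining head to compare), $\matchse$ returns the equation verbatim, so $N' = N$ and the bound $N' \leq N$ holds trivially. For the inductive step, suppose instead that $u_1$ and $u_2$ are identical, so that $\matchse(u_1{\cdot}tr_1 {=} u_2{\cdot}tr_2) = \matchse(tr_1 {=} tr_2)$. The recursive argument $tr_1 {=} tr_2$ is obtained by deleting exactly one head symbol — a single letter or a single string variable — from each side of $\se$, so its notational length is $N - 2$. Applying the induction hypothesis to this strictly smaller equation gives a result of length at most $N - 2$, and since $N'$ equals that length we obtain $N' \leq N - 2 \leq N$, closing the induction.

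I do not expect any genuine obstacle here: the argument is a routine structural induction, and the only point requiring a moment's care is confirming that every matched head, whether a letter or a string variable, contributes precisely one symbol, so that each matching step decreases $N$ by exactly two rather than by some variable amount. This monotonicity is moreover consistent with the suffix property recorded in Lemma~\ref{lem.sound.match}(c): since $\se'$ is a suffix of $\se$, the procedure $\matchse$ can only ever remove symbols and never introduce new ones, which is exactly the content of the inequality $N' \leq N$.
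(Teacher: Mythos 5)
Your proof is correct and takes essentially the same route as the paper's: a case analysis on the two branches of \form{\matchse}, with \form{N'{=}N} in the non-matching branch and a deletion of exactly one head symbol per side (hence a decrease of two) in the matching branch. If anything, your explicit induction is slightly more careful than the paper's one-step case analysis, which writes \form{\se' \equiv tr_1{=}tr_2} and \form{N'{=}N{-}2} without formally accounting for the subsequent recursive calls of \form{\matchse}.
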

\repconf{The proof is presented in App. \ref{proof.lem.match.size}.}{}

\paragraph{Procedure {\complete}}
Next, we show that {\complete} produces
an equi-satisfiable set of word equations.
We remark that procedure {\complete} also
produces substitutions labeled along
path traces
which help to construct a model (assignments to string variables)
for satisfiable inputs.

\begin{lemma}[Complete]\label{lem.sound.reduce}
Suppose that \form{\se} is a word equation,
and \form{L} is set of pairs of word equation
and substitutions such that
\form{L=}\complete(\form{\se}). Then,
\begin{itemize}
\item C1) if \form{\se} is satisfiable then
there exists a pair \form{(\se', \sub) \in L}
such that \form{\se'} is satisfiable.
\item C2) if there exists a pair \form{(\se', \sub) \in L}
such that \form{\se'} is satisfiable, then
\form{\se} is satisfiable.
\end{itemize}
\end{lemma}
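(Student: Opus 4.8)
The plan is to prove the two implications by a direct case analysis on the two scenarios that \comp enumerates, exploiting the defining property of an edge, namely that every child satisfies \form{\se_{i_j} {\equiv} \se\,\sub_j}. I would dispatch the soundness direction C2 first, as it is the more routine half. Given a solution \form{\langle\sstack,\istack\rangle} of some child \form{\se'{\equiv}\se\,\sub_j}, I reconstruct a solution of \form{\se} by reinstating the variable that \form{\sub_j} eliminated: for the entry \form{[t/x]\in\sub_j} I set \form{\sstack'(x){=}\sstack(t)} (so \form{\sstack'(x_1){=}\semp} for \form{[\semp/x_1]}, and \form{\sstack'(x_1){=}c_2\,\sstack(x_1')} for \form{[c_2x_1'/x_1]}), leaving all other variables untouched. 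Because \form{\se\,\sub_j} is literally \form{\se} with \form{x} replaced textually by \form{t}, the identity \form{\sstack'(\se){=}\sstack(\se\,\sub_j)} holds term by term, so \form{\sstack'} satisfies \form{\se}; the accompanying length interpretation agrees automatically since \form{\sleng{\cdot}} is a homomorphism under concatenation.

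For completeness C1 I would show that the children generated by \comp are exhaustive, so every solution of \form{\se} is captured by at least one child. The combinatorial fact I rely on is that two words that are both prefixes of a common word are comparable, i.e.\ one is a prefix of the other. Fix a solution \form{\langle\sstack,\istack\rangle} of \form{\se{\equiv}u_1tr_1{=}u_2tr_2}; since \form{u_1} and \form{u_2} sit at the heads of the two sides, \form{\sstack(u_1)} and \form{\sstack(u_2)} are comparable prefixes of the common solution word. In Case~1, \form{x_1tr_1{=}c_2tr_2}, either \form{\sstack(x_1){=}\semp}, which matches \form{\sub_1}, or \form{\sstack(x_1)} is nonempty and, being a prefix of a word whose first letter is \form{c_2}, must begin with \form{c_2}, which matches \form{\sub_2} with \form{\sstack(x_1')} set to the tail of \form{\sstack(x_1)}. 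In Case~2, \form{x_1tr_1{=}x_2tr_2}, the four substitutions cover precisely \form{\sstack(x_1){=}\semp} (\form{\sub_1}), \form{\sstack(x_2){=}\semp} (\form{\sub_2}), \form{\sstack(x_2)} a prefix of \form{\sstack(x_1)} (\form{\sub_3}), and \form{\sstack(x_1)} a prefix of \form{\sstack(x_2)} (\form{\sub_4}); by prefix comparability these exhaust every possibility, with the equal-length case subsumed by both \form{\sub_3} and \form{\sub_4} as already observed. In the matching child I assign the fresh subterm variable the residual word and check that the restricted assignment satisfies \form{\se\,\sub_j}, which yields C1.

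I expect the main obstacle to lie not in the case split but in verifying cleanly that, after the chosen \form{\sub_j} is applied, the induced assignment genuinely satisfies \form{\se\,\sub_j}: one must track how the fresh subterm variable absorbs the residual prefix and confirm that the head carrier terms are consumed consistently on both sides of the equation. A secondary point requiring a single uniform remark is the coupling of the string and arithmetic components in the satisfaction relation for word equations; since the reconstructed string assignment determines all the relevant \form{\sleng{\cdot}} values, no separate reasoning about \form{\istack} is needed in either direction.
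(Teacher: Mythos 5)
Your proposal is correct and follows essentially the same route as the paper's proof: a case split on the solution value of the head variable (\form{\sstack(x_1){=}\semp} versus \form{\sstack(x_1)} beginning with the head symbol of the other side), with the fresh subterm variable absorbing the residual word and a modified assignment \form{\sstack'} verified against \form{\se\,\sub_j}; the paper packages both directions C1 and C2 into biconditional chains, whereas you separate C2 as a substitution-identity argument, which is only a cosmetic difference. Your explicit prefix-comparability treatment of the four sub-cases in Case~2 is exactly what the paper leaves implicit when it says that case is ``similar to the case above.''
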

\repconf{The proof is presented in App. \ref{proof.lem.sound.reduce}.}{}

\paragraph{Cyclic Proofs}
Finally, we consider the case where
the input is unsatisfiable.
Suppose that {\allSat} takes a word equation \form{\se} as input,
and produces a cyclic reduction tree  \form{\utree{n}}
 as output in a finite time.
If all leaf nodes of  \form{\utree{n}}
is unsatisfiable, then following
Lemma \ref{lem.sound.match} and Lemma
 \ref{lem.sound.reduce} we can conclude that \form{\se}
is unsatisfiable.
Now, we study the scenarios
 where some leaf nodes of  \form{\utree{n}}
is unsatisfiable and the remaining leaf nodes are linked back.
We refer to such reduction tree \form{\utree{n}}
as cyclic proofs. In the following,
we show that if {\allSat} can derive
sound cyclic proofs for a word equation \form{\se},
then \form{\se} is unsatisfiable.
 The following formalism is based on the generic
framework S2SAT \cite{Loc:CAV:2016,Le:CAV:2017}.
In contrast to  \cite{Loc:CAV:2016,Le:CAV:2017}, our soundness proof
is based on the fact that
solutions of a word equation must be finite.

\begin{defn}[Pre-proof]
A {\em pre-proof} derived for an equation \form{\se}
is a pair ($\utree{i}$, $\mathcal L$) where
   $\utree{i}$ is an unfolding tree
 whose root labelled by  \form{\se}
and $\mathcal L$ is a back-link function
assigning some leaf nodes \form{\se_{c}} of $\utree{i}$ to interior nodes $\se_{c}= \mathcal L(\se_{b})$
such that there exists some substitution $\theta$ i.e.,
 $\se_{c}=\se_{l}[\theta]$.
\end{defn}
We recap that in the above definition
\form{\se_{b}} is referred as a bud and \form{\se_{c}} is referred as its companion.

A {\em cycle path} in a pre-proof is a sequence of nodes $(\se_{i})_{i{\geq}0}$.

\begin{defn}[Cycle Trace]
 Let $({\se}_{i})_{i{\geq}0}$ be a
cycle path in a pre-proof $\mathcal {PP}$.
 A cycle trace following
$(\se_{i})_{i{\geq}0}$ is a sequence $(\alpha_i)_{i{\geq}0}$ such that, for all
$i{\geq}0$, $\alpha_i$ is a string variable $x$ in the formula
 $\se_{i}$, and either:
\begin{enumerate}
\item $\alpha_{i{+}1}$ is the variable
 $x$
 occurrence in $\se_{i{+}1}$, or
\item  $\alpha_{i{+}1}$ is the subformula \form{c{\cdot}x'} (where $c$ is a letter)
according to \form{x} in \form{\se_{i{+}1}} (i.e., \form{\se_{i{+}1}= \se_i[c{\cdot}x'/x]}) and
 $i$ is a progressing point of the trace, or
\item $\alpha_{i{+}1}$ is the subformula \form{y{\cdot}x'} (where $y$ is a string variable)
according to \form{x} in \form{\se_{i{+}1}} (i.e., \form{\se_{i{+}1}= \se_i[y{\cdot}x'/x]}) and
 $i$ is a progressing point of the trace.
\end{enumerate}
\end{defn}

To ensure that pre-proofs correspond to sound proofs,
a global
{\em soundness condition} must be imposed on such  pre-proofs as follows.
\begin{defn}[Cyclic proof]\label{cyclic}
A pre-proof is a cyclic proof if, for every infinite path 
 $(\form{\se_{i}})_{i{\geq}0}$,
there is a tail of the cycle path $p{=}(\form{\se_{i}})_{i{\geq}n}$
 s.t. there is an
infinitely progressing trace following $p$.
\end{defn}

\begin{lemma}[Soundness]\label{lem.unsat}
If there is a cyclic proof of \form{\se}, \form{\se} is unsatisfiable.
\end{lemma}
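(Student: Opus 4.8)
The plan is to argue by contradiction using \emph{infinite descent} on the length of a solution word, exploiting the fact (alluded to above) that every solution word is finite. Suppose, towards a contradiction, that $\se$ is satisfiable and fix a solution $\rho_0$ of the root $\se_0=\se$. I would first extract from the cyclic proof an infinite path of \emph{satisfiable} nodes, then invoke the global soundness condition of Definition~\ref{cyclic} to obtain an infinitely progressing trace along a tail of this path, and finally show that such a trace forces a strictly descending sequence of non-negative integers, which is impossible.

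\emph{Building an infinite satisfiable path.} Starting from $\se_0$, which is satisfiable by assumption, I would repeatedly descend: at each interior node Lemma~\ref{lem.sound.reduce}, case~(C1), guarantees that some child generated by $\complete$ is satisfiable, and since the carrier-term case split is a disjunction of the possible arrangements, the current solution itself picks out such a satisfiable child. In a cyclic proof every leaf is either unsatisfiable or a bud; the path cannot end at an unsatisfiable leaf since satisfiability is preserved along it, so it must reach a bud $\ses_b$. At a back-link $\ctree{\ses_c}{\ses_b}{\sub_{cyc}}$ I would use that, by construction, $\sub_{cyc}$ acts as a permutation on $\SVar$ (each $X'$ being a renaming of a subterm $X$) and on $\alphabet$; hence $\ses_c \equiv \ses_b\sub_{cyc}$ is equi-satisfiable with $\ses_b$, so the companion is again satisfiable and the construction continues forever. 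This yields an infinite path $(\se_i)_{i\ge 0}$ together with compatible solutions $\rho_i$, each projected from $\rho_0$ through the chosen substitutions (and renamed back at each cycle) using the lifting directions of Lemma~\ref{lem.sound.reduce}, case~(C2), and Lemma~\ref{lem.sound.match}.

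\emph{Descent along the trace.} By Definition~\ref{cyclic}, some tail $p=(\se_i)_{i\ge n}$ carries an infinitely progressing trace $(\alpha_i)_{i\ge n}$. For each $i$ I would set $\mu_i$ to be the length, under the compatible solution $\rho_i$, of the word assigned to the (residual of the) traced variable. The three clauses of the trace definition then control $\mu$: in clause~1 the traced variable is unchanged so $\mu_{i+1}=\mu_i$; in clauses~2 and~3 the traced variable $x$ is decomposed, via $\se_{i+1}=\se_i[cx'/x]$ or $\se_{i+1}=\se_i[yx'/x]$, so that $\rho_i(x)$ equals a non-empty prefix concatenated with the residual value $\rho_{i+1}(x')$, and since these are the \emph{progressing} points we get $\mu_{i+1}<\mu_i$. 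Thus $(\mu_i)_{i\ge n}$ is non-increasing and strictly decreases at the infinitely many progressing indices, yielding an infinite strictly descending sequence in $\mathbb{N}$ --- contradicting well-foundedness, i.e.\ the finiteness of the solution word. Hence $\se$ admits no solution and is unsatisfiable.

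\emph{Main obstacle.} The delicate step is making the single measure $\mu_i$ well defined and genuinely descending along the entire infinite path. This requires reconciling the solutions $\rho_i$ with the \emph{overlapping} case split of $\complete$ (Cases~2b and~2d already subsume $x_1{=}x_2$, so one solution may witness several branch conditions and a canonical choice must be fixed) and with the renaming applied at each back-link. In particular one must verify that each progressing point truly gives $\mu_{i+1}<\mu_i$: this is immediate for the letter split of clause~2, but for the variable split of clause~3 it reduces to showing that the consumed prefix $\rho_i(y)$ is non-empty at progressing indices --- precisely the point where the interaction between $\complete$ and $\matchse$ (Lemma~\ref{lem.sound.match}) must be used to exclude vacuous splits.
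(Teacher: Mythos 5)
Your proposal follows essentially the same route as the paper's proof: assume \form{\se} is satisfiable, use Lemma~\ref{lem.sound.match} and Lemma~\ref{lem.sound.reduce} to build an infinite path of satisfiable nodes through the pre-proof, invoke the global soundness condition of Definition~\ref{cyclic} to obtain an infinitely progressing trace along a tail, and contradict the finiteness of solution words by infinite descent. Your write-up is in fact more explicit than the paper's, which compresses the descent into a single sentence, and the obstacle you flag for clause~3 (a progressing variable split $[yx'/x]$ where the consumed prefix $\rho(y)$ may be empty, so the measure need not strictly decrease) is passed over silently in the paper's proof as well.
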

\repconf{The proof is presented in App. \ref{proof.lem.unsat}.}{
}

The correctness of Proposition \ref{thm.tree}
 immediately follows the following Lemma \ref{lem.sound.match}, Lemma
 \ref{lem.sound.reduce} and Lemma \ref{lem.unsat}.

\subsection{Proof of Lemma \ref{lem.sound.match}} \label{proof.lem.sound.match}
\begin{proof}
We prove this lemma through the following three cases.
\begin{enumerate}
\item Case 1: \form{\se{\equiv}c \cdot tr_1{=}c \cdot tr_2}
where \form{c} is a letter, then \form{\se'{\equiv}tr_1{=}tr_2}.
The proof is as follows.
\[
\begin{array}{lll}
 & \se \text{ is satisfiable} & \\
 \Leftrightarrow &  \text{there exists
an assignment } \form{\sstack \in {\SStore} } & \\
 &  \text{  such that } \form{\sstack {\force} c \cdot tr_1 {=}c \cdot tr_2} & \\
\Leftrightarrow & \form{\sstack(c \cdot tr_1) = \sstack(c \cdot tr_2)} & \text{meaning of } \sstack \text{ relation} \\
\Leftrightarrow & \sstack(c) \cdot \sstack(tr_1) = \sstack(c) \cdot \sstack(tr_2) & \text{meaning of } \sstack \text{ on concatenation} \\
\Leftrightarrow & \sstack(tr_1) = \sstack(tr_2) & \text{meaning of concatenation}  \\
\Leftrightarrow & \text{there exists
an assignment } \form{\sstack \in {\SStore} } & \\
  & \text{  such that } \sstack {\force} tr_1 {=} tr_2 & \text{meaning of } \sstack\\
\Leftrightarrow &  \form{\se'} \text{ is satisfiable} & 
\end{array}
\]
Furthermore, if \form{S} and \form{S'} are solution words
of \form{\se} and \form{\se'}, respectively, we can imply that
\form{S {=} c \cdot S'}. Hence, \form{\se'} is a suffix of
\form{\se}.
\item Case 2:  \form{\se{\equiv}utr_1{=}utr_2}
where \form{u} is a string variable,
then \form{\se'{\equiv}tr_1{=}tr_2}. We consider two sub-cases.

\noindent Sub-case 2.1: \form{u \in (\FV(tr_1) \cup \FV(tr_2))}.
The proof for this sub-case is similar to the proof in Case 1.

\noindent Sub-case 2.2: \form{u \not\in (\FV(tr_1) \cup \FV(tr_2))}.
The proof for Case a) of this sub-case is similar to the
proof in Case 1. In the following, we show the proof for Case b).
\[
\begin{array}{lll}
 & \se' \text{ is satisfiable} & \\
\Leftrightarrow & \text{there exists
an assignment } \form{\sstack \in {\SStore} } 
\text{  such that } \form{\sstack {\force} tr_1 {=} tr_2} & \\
\Leftrightarrow & \form{\sstack(tr_1) = \sstack(tr_2)} & \text{meaning of } \sstack \\
\end{array}
\]
We create a new assignment \form{\sstack'}
such that i) for all \form{v \in (\FV(tr_1) \cup \FV(tr_2))},
 \form{\sstack'(v)=\sstack(v)}
and ii) \form{\sstack'(u)=\semp}.
From i), we have:
\[
\begin{array}{lll}
& \form{\sstack'(tr_1) = \sstack'(tr_2)} & \\
\Rightarrow & \form{\sstack'(u) \cdot \sstack'(tr_1) = \sstack'(u) \cdot \sstack'(tr_2)} & \text{from ii)} \\
\Leftrightarrow & \form{\sstack'(u \cdot tr_1) = \sstack'(u \cdot tr_2)} & \text{meaning of } \sstack'  \text{ on concatenation}\\
\Leftrightarrow & \text{there exists
an relation } \form{\sstack' \in {\SStore} } & \\
 & \text{  such that } \sstack' {\force} u \cdot tr_1 {=} u \cdot tr_2 & \text{meaning of } \sstack'\\
\Leftrightarrow &  \form{\se} \text{ is satisfiable} & 
\end{array}
\]
Furthermore, if \form{S'} and \form{S} are solution words
of \form{\se'} and \form{\se}, respectively, we can imply that
\form{S {=} S'}. Hence, \form{\se'} is a suffix of
\form{\se}.
\item Case 3:  \form{\se{\equiv}u_1tr_1{=}u_2tr_2} and \form{\se'{\equiv}\se}
where \form{u_1}, \form{u_2} are either letters or string variables
and \form{u_1}, \form{u_2} are different.
The proof for this case is straightforward.
\end{enumerate}
\qed
\end{proof}

\subsection{Proof of Lemma \ref{lem.match.size}} \label{proof.lem.match.size}
\begin{proof}
Based on the definition of {\matchse},
 we consider two following cases.
\begin{itemize}
\item M1) \form{\se \equiv u tr_1 = u tr_2} (where \form{u} is a string variable
or a letter), then
 \form{\se' \equiv tr_1 = tr_2}. It is easy to show that
\form{N'=N-2}.
\item M2) \form{\se \equiv u_1 tr_1 = u_2 tr_2}
(where \form{u_1} and \form{u_2} are string variables
or letters), then
 \form{\se' \equiv \se}. Hence, \form{N'=N}.
\end{itemize}
\hfill \qed
\end{proof}

\subsection{Proof of Lemma \ref{lem.sound.reduce}} \label{proof.lem.sound.reduce}
\begin{proof}
Based on the definition of {\complete} procedure,
we prove this lemma (both C1) and C2)) through following two cases.
\begin{enumerate}
\item \form{\se{\equiv}x_1tr_1=c_2tr_2}, then \form{L_i=\{
(\form{\se_{i_1}},\form{\sub_1}); (\form{\se_{i_2}}, \form{\sub_2})\}} where  \form{\rho_1{=}[\semp/x_1]},
 \form{\se_{i_1}{\equiv}(x_1tr_1{=}c_2tr_2)\rho_1},
\form{\rho_2{=}[c_2x_1'/x_1]} ( \form{x_1'} is a fresh variable)
and \form{\se_{i_2}{\equiv}(x_1tr_1{=}c_2tr_2)\rho_2}.

We start with \form{\se} is satisfiable.
\[
\begin{array}{lll}
 & \se \text{ is satisfiable} & \\
\Leftrightarrow & \text{there exists
an assignment } \form{\sstack \in {\SStore} } & \hfill \\
 & \text{  such that } \form{\sstack {\force} x_1 \cdot tr_1 {=}c_2 \cdot tr_2} & \\
\Leftrightarrow & \form{\sstack(x_1 \cdot tr_1) = \sstack(c_2 \cdot tr_2)} & \hfill \text{meaning of } \sstack \text{ relation} \\
\Leftrightarrow & \sstack(x_1) \cdot \sstack(tr_1) = \sstack(c_2) \cdot \sstack(tr_2) & \hfill \text{meaning of } \sstack \text{ on concatenation} \\
\Leftrightarrow & \sstack(x_1) \cdot \sstack(tr_1) = c_2 \cdot \sstack(tr_2) & \hfill \text{meaning of } \sstack \text{ on letters} ~~~~\textcolor{blue}{(a_1)}\\
\end{array}
\]
Now, we do case split on \form{\sstack(x_1)}:
 \form{\sstack(x_1){=}\semp} or \form{\sstack(x_1){=}w_1} and
\form{w_1{\not=}\semp}.

\noindent \underline{Sub-case 1.1}: \form{\sstack(x_1){=}\semp}. We create
new assignment  \form{\sstack'} such that:
\form{\sstack'(x_1)} is not defined and
\form{\sstack'(v) {=} \sstack(v)} \form{\forall v \in (\FV(tr_1) \cup \FV(tr_1) \setminus \{ x_1 \})}.
It is easy to show that
\form{\sstack(tr_1) \equiv  \sstack'(tr_1\sub_1)}
and
\form{\sstack(tr_2) \equiv  \sstack'(tr_2\sub_1)} \textcolor{blue}{($a_2$)}.
From ($a_1$) and ($a_2$), we obtain:
\[
\begin{array}{lll}
 &  \form{\sstack'(tr_1\sub_1) = c_2 \cdot \sstack'(tr_2\sub_1)} & \\
\Leftrightarrow & \form{\sstack'(tr_1\sub_1) =  \sstack'(c_2 \cdot tr_2\sub_1)} & \text{meaning of } \sstack' \text{ on letters}  \\
\Leftrightarrow & \form{\sstack'(tr_1\sub_1) =  \sstack'((c_2 \cdot tr_2)\sub_1)} & \text{substitution does not affect constants}  \\
\Leftrightarrow & \form{\sstack'((x_1 \cdot tr_1)\sub_1) =  \sstack'((c_2 \cdot tr_2)\sub_1)} &   \\
\Leftrightarrow & \form{\sstack'((x_1 \cdot tr_1)\sub_1) =  \sstack'((c_2 \cdot tr_2)\sub_1)} &  \\
\Leftrightarrow & \form{\sstack' {\force}  (x_1 \cdot tr_1)\sub_1 {=} (c_2 \cdot tr_2)\sub_1} & \\
\Leftrightarrow & \form{\sstack' {\force}  (x_1 \cdot tr_1 {=} c_2 \cdot tr_2)\sub_1} & \\
 & \se_{i_1} \text{ is satisfiable} & \\
\end{array}
\]
We can conclude that there exists 
\form{(\form{\se_{i_1}},\form{\sub_1}) \in L_i}
such that \form{\se_{i_1}} is satisfiable.

\noindent \underline{Sub-case 1.2}: \form{\sstack(x_1){=}w} and
\form{w{\not=}\semp}. Substituting into ($a_1$) to obtain:
\form{w_1 \cdot \sstack(tr_1) = c_2 \cdot \sstack(tr_2)}.
This implies that \form{w_1} must start with letter \form{c_2}.
As so, we assume that \form{w_1{=}c_2{\cdot}w_1'}. Hence,
\form{c_2 \cdot w_1' \cdot \sstack(tr_1) = c_2 \cdot \sstack(tr_2)}
\textcolor{blue}{($a_3$)}.

 We create
new assignment  \form{\sstack'} such that:
\form{\sstack'(x_1)} is not defined,
\form{\sstack'(v) {=} \sstack(v)} \form{\forall v \in (\FV(tr_1) \cup \FV(tr_1) \setminus \{ x_1 \})}, and 
\form{\sstack'(x_1') {=} w_1'}.
It is easy to show that
\form{\sstack(tr_1) \equiv  \sstack'(tr_1\sub_2)}
and
\form{\sstack(tr_2) \equiv  \sstack'(tr_2\sub_2)} \textcolor{blue}{($a_4$)}.
From ($a_3$) and ($a_4$), we have:
\[
\begin{array}{lll}
 & c_2 \cdot w_1' \cdot \sstack'(tr_1\sub_2) = c_2 \cdot \sstack'(tr_2\sub_2) & \\
\Leftrightarrow & c_2 \cdot \sstack'(x_1') \cdot \sstack'(tr_1\sub_2) = c_2 \cdot \sstack'(tr_2\sub_2) & \\
\Leftrightarrow &  \sstack'( c_2 \cdot x_1' \cdot tr_1\sub_2) = \sstack'(c_2 \cdot  tr_2\sub_2) & \\
\Leftrightarrow &  \sstack'( (c_2 \cdot x_1' \cdot tr_1)\sub_2) = \sstack'((c_2 \cdot  tr_2)\sub_2) & \text{substitution does not affect constants and } x_1'\\
\Leftrightarrow & \form{\sstack' {\force}  (c_2 \cdot x_1' \cdot tr_1)\sub_2 = (c_2 \cdot  tr_2)\sub_2 } & \\
 & \se_{i_2} \text{ is satisfiable} & \\
\end{array}
\]
We can conclude that there exists 
\form{(\form{\se_{i_2}},\form{\sub_2}) \in L_i}
such that \form{\se_{i_2}} is satisfiable.





\item \form{\se{\equiv}x_1tr_1=x_2tr_2}, then \form{L_i=\{ (\form{\se_{i_1}}, \form{\sub_1});
  (\form{\se_{i_2}}, \form{\sub_2});  (\form{\se_{i_3}}, \form{\sub_3});
  (\form{\se_{i_4}}, \form{\sub_4})\}}.
The proof is
similar to the case above.

\end{enumerate}
\qed
\end{proof}

\subsection{Proof of Lemma \ref{lem.unsat}} \label{proof.lem.unsat}
\begin{proof}
We prove this lemma by contradiction.
Assume there is a cyclic proof $\mathcal {PP}$ of \form{\se}
and \form{\se} is satisfiable.
As the lengths of the solution words are finite,
the lengths of paths starting from the root
to satisfiable leaf nodes must be finite.

By following Lemma \ref{lem.sound.match} and Lemma
 \ref{lem.sound.reduce},
we would be able to
construct an infinite path
 $(\form{\se_{i}})_{i{\geq}0}$ in  $\mathcal PP$
such that \form{\se_{0} \equiv \se}
and \form{\se_{i}} is satisfiable for all \form{i\geq0}.
 Since $\mathcal {PP}$ is a cyclic proof,
there exists an \form{n{\geq}0}
and a tail of the path, $p{=}(\form{\se_{i}})_{i{\geq}n}$, 
such that there is an
infinitely progressing trace following $p$ (Definition \ref{cyclic}).
This contradicts the fact of finite
 path for solutions of a word equation above.
 \qed.
\end{proof}

\section{Correctness of \form{\widentree{\utree{n}, \astart}} - Proposition \ref{lemma.edtl.re}}\label{app.lemma.edtl.re}
We show correctness of extended tree
representing for all solutions
of the conjunction of a word equation, say \form{\se},
 and regular expressions, say \form{\astart}.
The computation of \form{m} and \form{M}
 is based on the proof
presented by Schulz
 in \cite{Schulz:1990:MAW} to find the {\em minimal}
solutions of the constraint \form{\se \wedge \astart}.
First, we transform the constraint on regular expressions
into constraints over DFA.
It is well known that there exists a DFA that
accepts the same language with
a conjunction of regular expressions \cite{Hopcroft:2006:IAT}.


Suppose that we are given a DFA:
 \form{A=\dfa{Q}{\alphabet}{\transition}{q_o}{Q_F}}.
For any pair \form{(q_i,q_j)} in \form{Q},
\form{\classlang{A^{q_i}_{q_j}}} denotes the language
which is accepted by the automaton:
\form{A^{q_i}_{q_j}=\dfa{Q}{\alphabet}{\transition}{q_i}{\{q_j\}}}.
Further, we define \form{\classlang{A,\emptyset}=\alphabet^*}
and \form{\classlang{A,\Gamma}=\bigcap_{(q_i,q_j) \in Q \times Q}\classlang{A^{q_i}_{q_j}}} where \form{\Gamma \neq \emptyset}
and \form{\Gamma \subseteq Q \times Q}.
 An \form{A-constraint} is
 a finite set \form{\Gamma} of pairs
\form{(q_i,q_j) \in  Q \times Q}.
Given a constraint \form{\pure \equiv \se \wedge X_1 {\in} \regex_1 {\wedge} ...{\wedge}X_n {\in} \regex_n } over \form{\alphabet}
where \form{X_1,..,X_n \in \FV(\se)},
It is obvious that
we can find a DFA \form{A} and \form{A-constraints}
\form{\Gamma_1, ...,\Gamma_n} such that
a word \form{w_i \in \alphabet^*} is an assignment for \form{X_i}
of a solution of
\form{\pure} if \form{X_i \in \classlang{A,\Gamma_i}}
(\form{1\leq i \leq n}).
Let \form{m} be the number states of \form{A}
and \form{M=m!}.
\begin{defn}
The natural number \form{t} and \form{t'}
are called \form{A}-equivalent (we write \form{t \equiv_{A} t'})
 if the following two conditions hold:
\begin{enumerate}
\item \form{t = t' \mod M},
\item \form{t > m} if and only if \form{t' > m}.
\end{enumerate}
\end{defn}
\begin{lemma}[\cite{Schulz:1990:MAW}]\label{lemma.Schulz}
Let \form{v\equiv u_1w^{t_1} u_2w^{t_2}... u_kw^{t_k}u_{k{+}1}}
and \form{v'\equiv u_1w^{t'_1} u_2w^{t'_2}... u_kw^{t'_k}u_{k{+}1}} (\form{k\geq1}) be two words over the alphabet \form{\alphabet}
and \form{\Gamma} is any \form{A}-constraint.
If \form{t_i \equiv_{A} t_i'} (\form{1 \leq i \leq k}),
then \form{v \in  \classlang{A,\Gamma}}
if and only if \form{v' \in  \classlang{A,\Gamma}}.
\end{lemma}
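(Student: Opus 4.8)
The plan is to reduce membership of a word in $\classlang{A,\Gamma}$ to the value of a single function on the finite state set $Q$, and then to exploit the eventual periodicity of functional iteration, arranging that its period divides $M=m!$ and its pre-period lies below $m$ --- exactly the two quantities controlled by $\equiv_A$. First I would work with the extended transition function $\hat\delta_s\colon Q\to Q$ sending $q$ to the state reached by reading $s$ from $q$; it is a monoid homomorphism from $(\alphabet^*,\cdot)$ into the transformation monoid of $Q$, so $\hat\delta_{s_1s_2}=\hat\delta_{s_2}\circ\hat\delta_{s_1}$ and $\hat\delta_{w^t}=(\hat\delta_w)^t$. The point is that $s\in\classlang{A^{q_i}_{q_j}}$ iff $\hat\delta_s(q_i)=q_j$, so membership in $\classlang{A,\Gamma}=\bigcap_{(q_i,q_j)\in\Gamma}\classlang{A^{q_i}_{q_j}}$ depends on $s$ only through the one function $\hat\delta_s$. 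Hence it suffices to prove $\hat\delta_v=\hat\delta_{v'}$, and since $v,v'$ share the blocks $u_1,\dots,u_{k+1}$ and differ only in the exponents of $w$, the homomorphism property reduces the goal to showing $(\hat\delta_w)^{t_i}=(\hat\delta_w)^{t_i'}$ for each $i$.

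The core lemma is thus purely about one map $f:=\hat\delta_w$ on the $m$-element set $Q$: $t\equiv_A t'$ implies $f^{t}=f^{t'}$. Because there are only finitely many transformations of $Q$, the sequence $f^0,f^1,f^2,\dots$ is eventually periodic. Tracking the orbit of a single state and using pigeonhole on $f^0(q),\dots,f^m(q)$ shows each orbit stabilises into a cycle by index $m-1$, so $f$ has pre-period $\pi<m$; moreover the eventual period $P$ is the least common multiple of the (vertex-disjoint) cycle lengths in the functional graph of $f$, each of which is at most $m$, so $P$ divides $\operatorname{lcm}(1,\dots,m)$ and hence divides $m!=M$. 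I would then split on the clauses of $\equiv_A$. If $t,t'>m$, both indices lie past $\pi$, and $t\equiv t'\pmod M$ together with $P\mid M$ yields $t\equiv t'\pmod P$, so $f^{t}=f^{t'}$. Otherwise the second clause forces $t,t'\le m$, and here $t\equiv t'\pmod{m!}$ with $0\le t,t'\le m$ forces $t=t'$ outright (using $m!>m$), whence $f^{t}=f^{t'}$ trivially. Combining the two cases and reassembling through the homomorphism gives $\hat\delta_v=\hat\delta_{v'}$, and therefore the claimed equivalence.

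The main obstacle I expect is making the two thresholds line up exactly: one must check that the pre-period bound ($\pi<m$) and the period divisor ($P\mid m!$) are precisely the guarantees that $\equiv_A$ supplies, i.e.\ that agreement on the predicate ``$t>m$'' keeps both indices on the same side of the pre-periodic part while congruence modulo $m!$ handles both the periodic regime (via $P\mid M$) and, in the sub-$m$ range, forces literal equality (via $m!>m$). This is where the choice $M=m!$ is genuinely used twice, and it requires care at the boundary and for very small $m$. A secondary subtlety is the $\semp$-labelled transitions admitted by the definition of $A$: to keep $\hat\delta_s$ single-valued I would either assume $A$ is $\semp$-free, as its determinism suggests, or replace the transformation monoid by the relational transition monoid on $Q$, in which case the same pigeonhole and divisibility arguments carry over verbatim with relation composition.
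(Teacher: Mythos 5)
The paper never proves this lemma---it is imported wholesale from Schulz \cite{Schulz:1990:MAW}---so your attempt can only be judged against the standard argument, and your architecture is exactly that standard argument: membership in \form{\classlang{A,\Gamma}} depends on a word \form{s} only through the transformation \form{\hat\delta_s}, the shared block structure of \form{v,v'} reduces the claim to \form{(\hat\delta_w)^{t_i}=(\hat\delta_w)^{t'_i}}, and iteration of a map on an \form{m}-element set has pre-period below \form{m} and eventual period dividing \form{m!}. Your handling of the regime \form{t_i,t'_i>m} is correct as written.

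There is, however, a genuine gap in the low range, which you flagged but did not close---and it cannot be closed as stated. Your step ``\form{t\equiv t' \pmod{m!}} with \form{0\le t,t'\le m} forces \form{t=t'}, using \form{m!>m}'' fails exactly when \form{m\in\{1,2\}}, since there \form{m!\le m}. For \form{m=1} the conclusion survives trivially, but for \form{m=2} the lemma \emph{as stated in this paper} is false, so no repair of that step exists: take \form{Q=\{q_1,q_2\}}, \form{\alphabet=\{a\}}, transitions \form{(q_1,a,q_1)} and \form{(q_2,a,q_1)}, \form{\Gamma=\{(q_2,q_2)\}}, \form{k=1}, \form{u_1=u_2=\semp}, \form{w=a}, \form{t_1=0}, \form{t'_1=2}. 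Then \form{M=2}, and \form{0\equiv_A 2} (congruent mod \form{2}, and neither exceeds \form{m=2}), yet \form{v=\semp\in\classlang{A,\Gamma}} while \form{v'=aa} sends \form{q_2} to \form{q_1}, so \form{v'\notin\classlang{A,\Gamma}}. A correct proof must therefore either assume \form{m\ge 3} (where your divisibility argument does force \form{t=t'} in the sub-threshold range) or work with a strengthened equivalence in which exponents not exceeding the threshold are required to be equal---which is how Schulz's original condition must be read; the defect is partly in the paper's rendering of \form{\equiv_A}, but your proof, taken as a proof of the stated lemma, proves a false statement at \form{m=2}.

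A second, smaller flaw: your fallback for the \form{\semp}-transitions---replacing the transformation monoid by the relational transition monoid, ``in which case the same pigeonhole and divisibility arguments carry over verbatim''---is not sound. The pre-period bound \form{\pi<m} uses functionality essentially: for an arbitrary relation (Boolean matrix) \form{R} on \form{m} points the index of eventual periodicity can be as large as \form{(m-1)^2+1} (Wielandt), which exceeds \form{m} for \form{m\ge 3}; with \form{R} primitive of exponent \form{(m-1)^2+1} and \form{t=m+1}, \form{t'=m+1+m!} one gets \form{t\equiv_A t'} but \form{R^{t}\neq R^{t'}}. Partial functions are fine (an orbit either enters a cycle within \form{m-1} steps or becomes undefined within \form{m} steps, and stays undefined), so genuine determinism---not totality---is what the argument needs; the \form{\semp}-relaxed ``DFA'' of the paper should be determinized before your monoid argument is applied, not accommodated relationally.
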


\subsection{Proof of Proposition \ref{lemma.edtl.re}}
\label{proof.lemma.edtl.re}
\begin{proof}
An assignment for a variable of a solution
obtained from labels including
any cycle in the resulting tree 
is a word \form{v\equiv u_1w^{t_1} u_2w^{t_2}... u_kw^{t_k}u_{k{+}1}}
where \form{t_i > m} and \form{t_i \mod M = 0}.
Following Lemma \ref{lemma.Schulz},
a word \form{v} obtained  from labels \form{\sub} including
 cycles in the resulting tree
is of a solution if and only if
the word \form{v} obtained  from \form{\sub \setminus \sub_c}
(where \form{\sub_c} are labels obtained from cycles)
is of a solution.
\qed
\end{proof}

\section{\form{\edtl} Languages}\label{sec.dec.cfg.parikh}

 \subsection{L Systems and Finite-Index \form{\edtl} Language}\label{sec.edt0l}
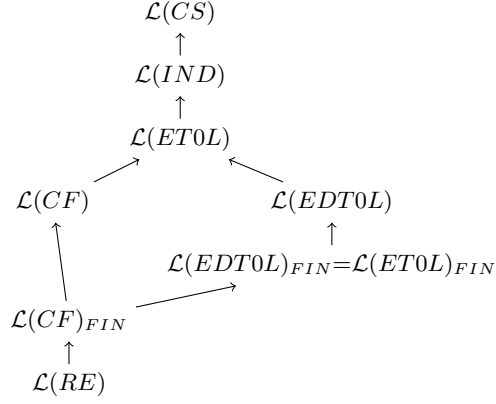
\begin{figure}{t}
\begin{center}
\begin{tikzpicture}[node distance=18mm,level 1/.style={sibling distance=22mm},
      level 2/.style={sibling distance=25mm},
                        level distance=22pt, draw]
  \tikzstyle{every state}=[draw,text=black]

\node (A)                    {\textcolor{black}{$\classlang{CS}$}};
  \node         (B) [below =3mm of A] {$\classlang{IND}$};
  \node         (C) [below =3mm of B] {$\classlang{\etl}$};
  \node         (D) [below left=3mm and 3mm of C] {$\cfls$};
  \node         (E) [below right=3mm and 3mm of C] {$\edtls$};
  \node         (F) [below =3mm of E] {$\edtlfs {=} \etlfs$};
  \node         (G) [below left=2mm and 3mm of F] {$\cflfs$};
  \node         (H) [below =3mm  of G] {$\classlang{RE}$};

  \path (B) edge[->]     node [pos=0.3,left] {} (A)
        (C) edge[->]     node [pos=0.3,left] {} (B)
        (D) edge [->]    node [pos=0.3,left] {} (C)
        (E)   edge [->]    node [pos=0.3,right] {} (C)
        (F)   edge [->]    node [pos=0.3,right] {} (E)
        (G)   edge [->]    node [pos=0.3,right] {} (F)
        (G)   edge [->]    node [pos=0.3,right] {} (D)
        (H)   edge [->]    node [pos=0.3,right] {} (G)
;
\end{tikzpicture}
\end{center}\savespace
\caption{
  A solid line
 denotes strict inclusion in the direction indicated.}\label{lang.incl}
 \end{figure}
\form{\edtl} language is a subclass of indexed languages (denoted as \form{IND}) in the sense
of Aho \cite{Aho:JACM:1968}.
In \cite{Aho:JACM:1968}, Aho
shows that the class of indexed languages
 includes all context-free languages
and some context-sensitive languages, but yet is a proper
subset of the class of context-sensitive languages.
Fig. \ref{lang.incl} shows the containment relationships among classes of indexed
languages where 
$\classlang{CS}$
denotes the class of all context-sensitive languages,
and $\classlang{RE}$ the class of all regular languages.
In the name  \form{\edtl}, each capital letter has
 a standard meaning
in connection with $L$ systems \cite{L:Book}.
Thus, \form{L} refers to parallel rewriting. The character $0$
means that information between individual letters is zero-sided,
and $D$ (deterministic) that, for each configuration
(a letter in a context), there is only one rule.
The letter $T$ (tables) means that the rules are divided into
subsets. In each derivation step, rules from the same subset
have to be used; and $D$ in this context means that each
subset is deterministic.
Finally, $E$ (extended) means $L$ is intersected with
 \form{\alphabet^*}.

\subsection{Proof of Corollary \ref{thm.edtf}}\label{proof.thm.edtf}
\begin{proof}
Given a trimmed reduction tree  \form{\utree{n}}, function \form{\extractedtl}
 constructs for it
a {\etl} grammar \form{G=\lgrammar{\var}{\alphabet}{\productions}{S}} as follows.
\form{\alphabet} is the alphabet.
\form{S} is a fresh variable which does not appear in the tree.
\form{\var} is the union of the set of all variables
appearing in the tree and the set \form{\{S\}}.
For each path \form{(\tnode_r,\tnode_{l_i})} in the trimmed
 \form{\utree{n}} where
\form{\tnode_r} is the root
and \form{\tnode_{l_i}} is either
a satisfiable leaf node or a bud of a cycle, we create
a new table \form{\production_i} as:
\[
\production_i = \{S \rightarrow  s_{l}\} \cup \bigcup \{X \rightarrow s \mid (\ses, [s/X], \ses')  \text{ in }  (\tnode_r,\tnode_{l_i}) \}
\]
Assume that we create \form{m} such tables: \form{\production_1},...,\form{\production_m}. Then, \form{\productions = \{\production_1,...,\production_m\}}.

Moreover, as each table \form{\production_i} for all \form{i \in \{1,...,m\}} corresponds to
 a path of the tree, the rule in \form{\production_i} is deterministic.
Hence \form{G} is a \form{\edtl} system.

Finally, let \form{k} be the maximum of the lengths
of all nodes in the trimmed \form{\utree{n}}.
Then, for every node \form{\se_i(N_i)} in the trimmed \form{\utree{n}},
\form{N_i \leq k}. As so, the number of variables appearing
in every node in the tree is less than or equal to \form{k}.
Hence the language generated by \form{G}
is finite index.
\qed
\end{proof}


\section{Proof of Proposition \ref{lemma.allsat.quad}}

\begin{lemma}\label{lem.match.quad}
Let \form{\se} be a quadratic word equation
and \form{\code{\se' {=}\matchse}(\se)}. Then,
  \form{\se'} is also a quadratic word equation.
\end{lemma}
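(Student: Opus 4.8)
The plan is to argue by induction on the notational length $N$ of $\se$, exploiting the single structural feature of $\matchse$ that makes the statement almost immediate: the function never introduces a fresh symbol nor a new occurrence of any variable. Inspecting its definition, $\matchse$ either returns the equation untouched (the \emph{otherwise} branch) or deletes one matching symbol from the head of each side and recurses (the \emph{matching} branch). Hence the only effect of $\matchse$ on the symbol sequences of the two sides is prefix deletion, so the multiset of variable occurrences of $\se'$ is \emph{contained} in that of $\se$. No variable can therefore acquire more occurrences in $\se'$ than it already had in $\se$, and quadraticity is a monotone property under occurrence deletion.

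Concretely, I would case-split according to the definition of $\matchse(u_1 \cdot tr_1 = u_2 \cdot tr_2)$. In the otherwise branch, where $u_1$ and $u_2$ differ, we have $\se' \equiv \se$, so $\se'$ is quadratic by hypothesis and there is nothing to prove; this also covers the base case where one side is empty and no head matching is possible. In the matching branch, where $u_1$ and $u_2$ are identical, $\se' = \matchse(tr_1 = tr_2)$. Here I would first observe that $tr_1 = tr_2$ is obtained from $\se$ by deleting the common head symbol $u_1 \,(= u_2)$ from both sides: if that symbol is a letter, no variable occurrence is touched, and if it is a variable $x$, then exactly the two head occurrences of $x$ (one on each side) are removed. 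In either case every variable occurs in $tr_1 = tr_2$ at most as often as it did in $\se$, hence at most twice, so $tr_1 = tr_2$ is itself quadratic. Since $tr_1 = tr_2$ has notational length $N - 2$ (Lemma \ref{lem.match.size} supplies the non-strict bound, and the deletion of the two head symbols accounts for the strict decrease needed to invoke the hypothesis on a smaller instance), the induction hypothesis applies and gives that $\matchse(tr_1 = tr_2) = \se'$ is quadratic.

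The only point requiring genuine care is the inductive case in which the matched head is a variable $x$: one must check that stripping $x$ from both sides cannot create a quadraticity violation elsewhere in the equation. This never happens precisely because prefix deletion is monotone on occurrence counts — it can only lower them — so I expect this to be a short verification rather than the crux of the argument. I would therefore regard the main (and essentially sole) obstacle as stating the occurrence-count bookkeeping cleanly enough that the inductive step reads off immediately from the definition; once the recursive argument is seen to be a syntactic sub-equation of $\se$ (each of its two sides a suffix of the corresponding side of $\se$), the result follows with no further calculation.
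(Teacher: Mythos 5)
Your proof is correct and follows essentially the same route as the paper's: a case split on the definition of $\matchse$, with the key observation that deleting the identical head symbols (a letter, or the two head occurrences of a variable) can only decrease occurrence counts, so quadraticity is preserved. Your explicit induction on the notational length to handle the recursive call $\matchse(tr_1 = tr_2)$ is in fact slightly more careful than the paper's proof, which treats only a single matching step ($\se' \equiv tr_1 = tr_2$) and leaves the iteration implicit.
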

\repconf{The proof is presented in App. \ref{proof.lem.match.quad}.}{}

\begin{lemma}\label{lem.reduce.size}
Let \form{\se(N)} be a quadratic word
equation and \form{\code{L{=}\transform}(\se)}. Then,
for each \form{(\se'(N'), \sub) \in L}, the following two properties are true:
\begin{itemize}
\item P1. \form{\se'} is unsatisfiable or a quadratic equation.
\item P2.  \form{N'{\leq}N}.
\end{itemize}
\end{lemma}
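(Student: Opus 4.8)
The plan is to decompose $\transform{=}\red$ into its two constituent operations and track how each affects (i) the occurrence count of every string variable and (ii) the notational length $N$. Recall that $\red$ is implemented as $L {=} \bigcup\{\matchse(\se_{i_j}) \mid \se_{i_j} {\in} \comp(\se)\}$, so every element of $L$ is obtained by applying one of the $\comp$-substitutions to $\se$ and then running $\matchse$. I would therefore first establish both claims for a single $\comp$-substitution followed by the guaranteed first matching step, and then invoke Lemma \ref{lem.match.quad} (matching preserves the quadratic property) and Lemma \ref{lem.match.size} ($\matchse$ does not increase $N$) to absorb the remaining exhaustive matching. Since $\se$ is the input to $\comp$, its two carrier heads are covered by Case 1 (letter and variable) and Case 2 (two distinct variables); the degenerate situation of two distinct leading letters is already flagged unsatisfiable, which is exactly the first disjunct of P1.

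For P2 I would argue by symbol counting. Writing $k$ for the number of occurrences of the head variable $x_1$ in $\se$ (so $k \le 2$ by quadraticity), the substitution $[\semp/x_1]$ deletes $k$ symbols, whereas $[c_2 x_1'/x_1]$ in Case 1 and $[x_2 x_1'/x_1]$ in Case 2 each replace a one-symbol variable by a two-symbol term, raising $N$ by exactly $k$. The crucial point is that in these latter two cases the substitution makes both carrier heads identical (both become $c_2$, resp.\ both become $x_2$), so the first matching step erases two symbols; the net change after one match is $k-2 \le 0$. The remaining matches only shorten the equation by Lemma \ref{lem.match.size}, giving $N' \le N$ in every branch, and the symmetric substitutions $\sub_2,\sub_4$ are handled identically.

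For P1 the delicate point, which I expect to be the main obstacle, is the substitution $\sub_3 = [x_2 x_1'/x_1]$ (and symmetrically $\sub_4$). Every substitution keeps the fresh variable's occurrences equal to $k \le 2$ and leaves untouched variables quadratic, so the only way quadraticity can break is that $\sub_3$ grafts a copy of $x_2$ onto each of the (up to two) occurrences of $x_1$, temporarily pushing the count of $x_2$ up to $k_2 + k_1 \le 4$. The key observation is that after $\sub_3$ both carrier heads equal $x_2$, so the mandatory first matching step of $\matchse$ deletes one occurrence of $x_2$ from each side; the count drops to $k_2 + k_1 - 2 \le 2$, and subsequent matching can only lower it further. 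Thus the result is again quadratic, unless matching exposes two distinct leading letters, in which case $\se'$ is trivially unsatisfiable. The branch $\sub_1 = [\semp/x_1]$ only removes occurrences and Case 1 is strictly easier, so combining these observations with Lemma \ref{lem.match.quad} closes P1.
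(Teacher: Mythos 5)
Your proposal is correct, and its skeleton is the same as the paper's: a case split over the two scenarios of \form{\comp}, symbol counting ($N$ rises by $k\le 2$ under $[c_2x_1'/x_1]$ or $[x_2x_1'/x_1]$, then drops by $2$ at the forced first head match, exactly the paper's $N_2\le N{+}2$, $N_2'\le N_2{-}2$ computation), and absorption of the remaining matching steps via Lemma \ref{lem.match.quad} and Lemma \ref{lem.match.size}. Where you genuinely improve on the paper is Case 2 (two distinct variable heads): the paper dismisses it with ``the proof is similar to the case above,'' but the analogy is not literal. In Case 1(b) the grafted symbol $c_2$ is a letter, so $\se\sub_2$ is immediately quadratic and Lemma \ref{lem.match.quad} applies verbatim; after $\sub_3=[x_2x_1'/x_1]$, by contrast, the intermediate equation can contain up to $k_1{+}k_2\le 4$ occurrences of $x_2$ and is \emph{not} quadratic, so the Case 1(b) argument for P1 does not transfer as written. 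Your observation that the substitution forces both carrier heads to equal $x_2$, so the mandatory first match deletes one occurrence per side and restores the count to $k_1{+}k_2{-}2\le 2$ (after which Lemma \ref{lem.match.quad} takes over), is precisely the step the paper leaves implicit --- and the same forced match simultaneously yields the $N{+}k{-}2\le N$ bound for P2. In short: identical strategy, but your write-up closes a real gap in the paper's ``similar'' shortcut.
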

\repconf{The proof is presented in App. \ref{proof.lem.reduce.size}.}{}

The following lemma immediately follows
Lemma \ref{lem.match.size} and Lemma \ref{lem.reduce.size}.
\begin{lemma}\label{lem.tree.size}
Let \se(N) be a quadratic word equation. And
{\allSat} generates for it a reduction
tree
\utree{n} in finite time. For every node
\se'(N') in \utree{n}, \form{\se'} is unsatisfiable or a quadratic
word equation and \form{N'\leq N}.
\end{lemma}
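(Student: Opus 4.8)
The plan is to prove this by structural induction on the finite reduction tree \form{\utree{n}}, using the single-step guarantee of Lemma~\ref{lem.reduce.size} as the inductive engine. Because {\allSat} is assumed to terminate, \form{\utree{n}} has finitely many nodes, so induction on the distance of a node from the root \form{\se(N)} is well founded and exhausts every node of \form{V}. The invariant I would carry is exactly the statement to be proved: each node \form{\se'(N')} is unsatisfiable or quadratic, and \form{N' \leq N}.

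For the base case, the root is \form{\se(N)} itself, which is quadratic by hypothesis and satisfies \form{N \leq N} trivially. For the inductive step, let \form{\se'(N')} be any non-root node and let \form{\se_p(N_p)} be its unique parent in the tree, so the edge \form{(\se_p, \sub, \se')} was created by a call to \code{reduce}, i.e.\ \form{(\se'(N'), \sub) \in \transform(\se_p)}. First I would observe that \form{\se_p} must be a quadratic equation rather than an unsatisfiable one: an unsatisfiable node is irreducible, hence closed, hence a leaf, and so never acquires children. The induction hypothesis applied to \form{\se_p} therefore yields both that \form{\se_p} is quadratic and that \form{N_p \leq N}. Applying Lemma~\ref{lem.reduce.size} to the quadratic equation \form{\se_p(N_p)} then gives, for the pair \form{(\se'(N'), \sub)}, that \form{\se'} is unsatisfiable or quadratic (P1) and that \form{N' \leq N_p} (P2); composing with \form{N_p \leq N} closes the step. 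Back-links create no new nodes (they only update \form{\backfun}), so they leave the invariant intact, and the claim holds throughout \form{\utree{n}}.

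The one point that needs care is whether Lemma~\ref{lem.reduce.size} genuinely governs the parent-to-child transition, since \code{reduce} is the composite \form{L_i = \bigcup \{\code{matchs}(\ses_{i_j}) \mid \ses_{i_j} \in \comp(\se_i)\}}: the \comp phase splits \form{\se_i} into intermediate equations, and the subsequent exhaustive pass \code{matchs} rewrites each by repeated \matchse. I expect the main obstacle to lie in this substep rather than in the outer induction, namely in verifying that the matching pass neither breaks the quadratic invariant nor enlarges the notational length. Lemma~\ref{lem.match.quad} supplies the first for a single \matchse application and Lemma~\ref{lem.match.size} the second; since each productive \matchse step strictly shrinks the equation, the pass is finite, and both properties therefore survive to its end. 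Once this is folded into Lemma~\ref{lem.reduce.size}, the tree-level statement follows immediately by the induction above.

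Finally, I would note that this lemma is exactly the invariant needed downstream: the uniform bound \form{N' \leq N} caps the number of distinct quadratic equations (up to renaming of fresh variables) reachable by reduction, which is what ultimately forces termination of {\allSat} and underpins the factorial-time bound of Proposition~\ref{lemma.allsat.quad}.
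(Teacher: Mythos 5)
Your proof is correct and takes essentially the same route as the paper, which derives the lemma as an immediate consequence of Lemma~\ref{lem.match.size} and Lemma~\ref{lem.reduce.size} --- i.e.\ precisely the depth induction you spell out, with the exhaustive matching pass already folded into the paper's proof of Lemma~\ref{lem.reduce.size} via Lemma~\ref{lem.match.quad}, exactly as you anticipated. Your explicit handling of the induction, of the \form{\matchse} pass, and of the fact that unsatisfiable nodes are leaves simply makes precise what the paper leaves implicit.
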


\subsection{Proof of Lemma \ref{lem.match.quad}}
\label{proof.lem.match.quad}
\begin{proof}
Based on the definition of {\matchse},
 we consider two following cases.
\begin{itemize}
\item \form{\se \equiv u tr_1 = u tr_2} (where \form{u} is a string variable
or a letter), then
 \form{\se' \equiv tr_1 = tr_2}. As \form{\se} is a quadratic word
equation, every variable in \form{\FV(tr_1) \cup \FV(tr_2)} occurs
at most twice in  \form{tr_1 = tr_2}. Hence, \form{\se'} is a quadratic
word equation.
\item M2) \form{\se \equiv u_1 tr_1 = u_2 tr_2}
(where \form{u_1} and \form{u_2} are string variables
or letters), then
 \form{\se' \equiv \se}. Trivially.
\end{itemize}
\qed
\end{proof}

\subsection{Proof of Lemma \ref{lem.reduce.size}}
\label{proof.lem.reduce.size}
\begin{proof}
We consider two following cases
based on the definition of procedure {\complete}.
\begin{enumerate}
\item \form{\se(N){\equiv}x_1tr_1=c_2tr_2}, then \form{L_i=\complete(\se)\equiv \{
(\form{\se_{i_1}(N_1)},\form{\sub_1}); (\form{\se_{i_2}(N_2)}, \form{\sub_2})\}} where
 \form{\se_{i_1}{\equiv}(x_1tr_1{=}c_2tr_2)\rho_1},
\form{\rho_1{=}[\semp/x_1]},
 \form{\se_{i_2}{\equiv}(x_1tr_1{=}c_2tr_2)\rho_2}
 (\form{x_1'} is a fresh variable)
and \form{\rho_2{=}[c_2x_1'/x_1]}.
\begin{enumerate}
\item \form{\se_{i_1}{\equiv}(x_1tr_1{=}c_2tr_2)[\semp/x_1]
{\equiv}(tr_1[\semp/x_1]{=}c_2tr_2[\semp/x_1])}. Let \form{\se_1(N_1'){\equiv}\matchse(\se_{i_1})}.
\begin{itemize}
\item P1. As every variable in \form{\FV(\se)\setminus\{x_1\}} occurs at most twice,
 \form{\se_{i_1}} is a quadratic word equation. And following Lemma \ref{lem.match.quad},
\form{\se_1} is also a quadratic word equation.
\item P2. As in \form{\se_{i_1}} \form{x} is substituted by \form{\semp},
\form{N_1\leq N-1}. And following Lemma \ref{lem.match.size},  \form{N_1'\leq N_1}.
In consequence, \form{N_1'\leq N-1}.
\end{itemize}
\item \form{\se_{i_2}{\equiv} (x_1tr_1{=}c_2tr_2)[c_2x_1'/x_1]}.
Let \form{\se_2(N_2'){\equiv}\matchse(\se_{i_2})}.
\begin{itemize}
\item P1. As \form{\se} is a quadratic equation, \form{x_1}
as well as every variable in \form{\FV(tr_1=tr_2) \setminus\{x_1\}}
occurs  at most twice in \form{\se}. Hence, \form{x_1'}
as well as  \form{\FV(tr_1=tr_2) \setminus\{x_1\}}
occurs  at most twice in \form{\se_{i_2}}. In consequence,
 \form{\se_{i_2}} is a quadratic equation.
And following Lemma \ref{lem.match.quad},
\form{\se_2} is also a quadratic word equation.

\item P2. As \form{\se} is a quadratic equation, \form{x_1}
occurs  at most twice in \form{\se}. Hence,
\form{N_2 \leq N+2}.
 Then,
\[
 \form{\se_2(N_2'){\equiv}
\matchse(\textcolor{blue}{c_2}x_1'tr_1{=}\textcolor{blue}{c_2}tr_2)[c_2x_1'/x_1])
{\equiv}\matchse(x_1'tr_1{=}tr_2)[c_2x_1'/x_1])}
\]
Thus, \form{N_2'\leq N_2-2}. In consequence, \form{N_2'\leq N}.
\end{itemize}
\end{enumerate}
\item \form{\se{\equiv}x_1tr_1=x_2tr_2}, then \form{L_i= \complete(\se)\equiv\{ (\form{\se_{i_1}}, \form{\sub_1});
  (\form{\se_{i_2}}, \form{\sub_2});  (\form{\se_{i_3}}, \form{\sub_3});
  (\form{\se_{i_4}}, \form{\sub_4})\}}.
The proof is
similar to the case above.
\end{enumerate}
\qed
\end{proof}

\subsection{Proof of Proposition \ref{lemma.allsat.quad}}
\label{proof.lemma.allsat.quad}
\begin{proof}
We show the following property, called PathLength property:
Let \form{\se(N)} 
 be a quadratic word equation, then
the length of every path in a reduction
tree whose root is \form{\se(N)}
is $\mathcal{O}(N^2(N!))$.

We prove the PathLength property by structural induction on \form{N}.

\noindent {\bf Base Case}: N=1. Trivially.

\noindent {\bf Induction Case}. Assume that PathLength property
holds 
for all \form{N \leq k}
We now prove that 
PathLength property
holds for \form{N =k+1}.
Consider a path with the sequence of nodes:
\form{\se(k+1)}, \form{\se_1(N_1)}, ...,\form{\se_l(N_l)}
where \form{\se_l(N_l)} is a leaf node.
According to Lemma \ref{lem.tree.size}, we consider two following subcases.
\begin{enumerate}
\item All these nodes have the same length with \form{\se}: \form{k+1 = N_1=...=N_M}.
There are $\mathcal{O}((k+1)!)$ possibilities to arrange a sequence of \form{N} symbols of the respective either string variables or characters.
And there is $\mathcal{O}(k)$ possibilities to put symbol \form{=}
into one sequence of \form{k+1} symbols above.
Thus, we have $\mathcal{O}(k(k+1)!)$ possibilities. In other words,
the length of this path is $\mathcal{O}((k+1)(k+1)!)$.
As \form{(k+1)^2=  k^2 + 2k + 1} and \form{k+1 < k^2 + 2k + 1}
for all \form{k{\geq}1},
PathLength property holds.
\item There exists a node \form{\se_M(N_M)} (\form{1\leq M < l}) such that
(i) all nodes \form{\se_j(N_j)} where \form{j \in \{1....M\}} have the same length with \form{\se}: \form{k+1 = N_1=...=N_M} and (ii) node \form{\se_{M+1}(N_{M+1})}
has the smaller length than \form{\se}.
By induction, the length of the path (\form{\se_{M+1}}, \form{\se_l}) is $\mathcal{O}(k^2(k)!)$.
Similar to the previous sub-case,
the length of the path (\form{\se}, \form{\se_M})
 is $\mathcal{O}(k(k+1)!)$.
Hence, the length of the path (\form{\se}, \form{\se_l}) is:
$\mathcal{O}(k^2(k)!) + \mathcal{O}(k(k+1)!) = \mathcal{O}((k^2/(k+1) + k) (k+1)!)$
As \form{(k+1)^2=  k^2 + 2k + 1} and \form{k^2/(k+1) + k < k^2 + 2k + 1} for all \form{k{\geq}1},
PathLength property holds.
\end{enumerate}
%
\qed
\end{proof}

\section{Correctness of \form{\strflatsem} Decidable Fragment}



\subsection{Straight-Line Formula}\label{proof.prop.acyclic}
\begin{lemma}\label{prop.acyclic}
Let \se(N) be an acyclic word equation.
Then, {\allSat} takes \se(N) as input
and runs in $\mathcal{O}(N)$ time
 to construct
a cyclic reduction tree
\utree{n}. Furthermore, \utree{n} does not
contain any cycle.
\end{lemma}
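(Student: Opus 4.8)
The plan is to reduce both assertions — the $\mathcal{O}(N)$ running time and the absence of cycles — to a single invariant: \emph{every reduction step applied to an acyclic equation strictly decreases its notational length while preserving acyclicity}. Since an acyclic equation has each variable occurring at most once, this is a sharpening of the quadratic analysis already in the excerpt.

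First I would adapt Lemma \ref{lem.match.quad} and Lemma \ref{lem.reduce.size} to the acyclic setting. I would check case by case that \matchse and \comp preserve acyclicity: the substitutions $[\semp/x_1]$ (cases 1a, 2a, 2c) delete the unique occurrence of a variable, while $[c_2x_1'/x_1]$, $[x_2x_1'/x_1]$, $[x_1x_2'/x_2]$ (cases 1b, 2b, 2d) replace the unique occurrence of a variable by a fresh subterm, and in the two-variable cases the head variable duplicated by the substitution is immediately cancelled by \matchse because it occurs nowhere else. Crucially, since the substituted variable occurs only once there is no blow-up elsewhere, so the ensuing match always consumes at least one symbol; concretely I would prove $N' \leq N-1$ for every child, strengthening the $N' \leq N$ of Lemma \ref{lem.reduce.size} to a strict inequality.

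The two conclusions then follow quickly. For the complexity bound, the length $N$ is a nonnegative integer dropping by at least one per step, so every root-to-leaf path has length at most $N$; hence {\allSat} terminates and its construction proceeds in $\mathcal{O}(N)$ depth, which is the $\mathcal{O}(N)$ time of the statement (the same path-length measure used in Proposition \ref{lemma.allsat.quad}). For the absence of cycles, I would argue by contradiction: a back-link $\ctree{\ses_c}{\ses_b}{\sub_{cyc}}$ requires $\ses_c \equiv \ses_b\sub_{cyc}$, where $\sub_{cyc}$ is a permutation on $\SVar$ and $\alphabet$, i.e.\ a bijective variable/letter renaming. Such a renaming leaves the symbol count unchanged, so $|\ses_c| = |\ses_b\sub_{cyc}| = |\ses_b|$. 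But $\ses_b$ is a proper descendant of its companion $\ses_c$, so the strict-decrease invariant forces $|\ses_b| < |\ses_c|$, a contradiction. Hence no leaf can be linked back and $\utree{n}$ contains no cycle.

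The hard part will be the bookkeeping in the strict-decrease step: I must verify that in the two-variable cases (2b, 2d) the head variable duplicated by the substitution is genuinely eliminated by the following match — which is precisely where acyclicity, rather than the weaker quadratic property, is used, since the cancelled variable has no second occurrence hidden inside $tr_1$ or $tr_2$ — and that the edge cases producing trivially true or trivially false leaves also respect the bound $N' \leq N-1$. Once that accounting is settled, termination, the linear depth, and the impossibility of a back-link are all immediate.
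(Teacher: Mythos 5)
Your proposal is correct and follows essentially the same route as the paper's proof: a strict decrease $N' \leq N-1$ per reduction step (which holds for acyclic but not general quadratic equations), giving path length $\mathcal{O}(N)$ and making any back-link $\ses_c \equiv \ses_b\sub_{cyc}$ impossible since the renaming $\sub_{cyc}$ preserves symbol count. You merely supply details the paper leaves as ``easy to show'' — preservation of acyclicity and the cancellation of the duplicated head variable in cases 2b/2d — and in doing so you correctly pinpoint where acyclicity is needed, which the paper's own proof obscures by (mistakenly) speaking of quadratic equations.
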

\begin{proof}
Let \form{\se(N)} be a quadratic word
equation and \form{\code{L{=}\complete}(\se)}.
It is easy to show that
for all \form{\se'(N') \in L}, \form{N'\leq N{-}1}.
Let \form{\se(N)} be a quadratic word
equation and \form{\code{L{=}\transform}(\se)}. Then,
together with Lemma \ref{lem.match.size},
we have: for all \form{\se'(N') \in L}, \form{N'\leq N{-}1}.
As each step, {\allSat} reduces the length
 of word equation by at least one,
the length of each path in the reduction tree is at most $\mathcal{O}(N)$ .

Furthermore, as the lengths of children are always less than
their parents, function {\code{\lb}} never successfully links
a child back to a interior node. Thus, the tree has no cycle.
\qed
\end{proof}

\subsection{Base Case of Proposition \ref{lemma.cfg.phased.twisted.prop}}
\label{proof.lemma.cfg.twisted.prop}
\begin{lemma}\label{lemma.cfg.twisted.prop}
Suppose that
 \form{\ses} is a solvable regular word equation. Then,
 {\allSat} takes \form{\se(N)} as input,
and produces a cyclic reduction tree  \form{\utree{n}}
  in a finite time. Furthermore, for
any cycle \form{\ctree{\ses_c}{\ses_b}{\sub_{cyc}}}
 of \form{\utree{n}}, both three following properties hold:
\begin{itemize}
\item The labels along the path (\form{\ses_c}, \form{\ses_b})
 (assume that this
 path has \form{k} edges) is of the form:
\form{[c_1X_1/X]}, \form{[c_2X_2/X_1]},..., \form{[c_{k+1}X_{k+1}/X_{k}]}
where \form{X}, \form{X_i} (\form{i \in \{1,...,k+1 \}}) are string
variables
and \form{c_i} (\form{i \in \{1,...,k+1 \}}) is 
a letter.
\item The substitution \form{\sub_{cyc}= [X/X_{k+1}]}.
\item The length of each path in the tree is $\mathcal{O}(N)$.
\end {itemize}
\end{lemma}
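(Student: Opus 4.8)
By definition a regular word equation is either acyclic or of the cyclic form $\se \equiv X w_1 = w_2 X$ with $X \in \SVar$ and $w_1,w_2 \in \alphabet^*$, so I would split on these two cases. The acyclic case is already settled by Lemma~\ref{prop.acyclic}: {\allSat} terminates in $\mathcal{O}(N)$ time and produces a tree with no cycle, so the length bound holds and the two statements about cycles are vacuously true. Thus the real content lies in the cyclic form $\se \equiv X w_1 = w_2 X$, where I would establish a single \emph{rotation invariant} that is preserved by every application of \code{reduce}.

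\textbf{The rotation step.} I would show by induction that every interior node reached from the root along a reduction path has the shape $X_i\, w_1 = v_i\, X_i$, where $v_i \in \alphabet^*$ is a left rotation of $w_2$ (hence $|v_i| = |w_2|$) and $X_i$ is a string variable. At such a node the head of the left side is the variable $X_i$ and the head of the right side is the first letter $c_i$ of $v_i$ (nonempty, since solvability forces $|w_1| = |w_2| \ge 1$ in the non-degenerate case). Hence only Case~1 of \comp\ fires, producing two children: the branch $[\semp/X_i]$, which after \matchse\ collapses to the constant equation $w_1 = v_i$ and is a leaf; and the branch $[c_i X_{i+1}/X_i]$. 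Writing $v_i = c_i u_i$, substitution gives $c_i X_{i+1} w_1 = (c_i u_i) c_i X_{i+1}$, and since the new left head $X_{i+1}$ is a fresh variable, \matchse\ subtracts exactly the single leading $c_i$ (by Lemma~\ref{lem.match.size} the size does not grow), leaving
\[
X_{i+1}\, w_1 \;=\; (u_i c_i)\, X_{i+1}.
\]
Thus $v_{i+1} = u_i c_i$ is the one-step left rotation of $v_i$: the shape and length are preserved, and $X_{i+1}$ is recorded as a subterm of $X_i$.

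\textbf{Closing the cycle.} Left rotation of $v_0 = w_2$ is periodic with period $p$ dividing $|w_2| \le N$. After $p$ extension steps the node is $X_p\, w_1 = w_2\, X_p$, identical to the root $\se$ up to the renaming of $X_p$ for $X$; therefore \lb\ succeeds, creating the cycle $\ctree{\se}{\se_p}{\sub_{cyc}}$ with $\sub_{cyc} = [X/X_p]$, and the subterm side-condition of \lb\ is met because each $X_{i+1}$ is a subterm of $X_i$. Reading off the edges of the companion-to-bud path yields exactly the claimed labels $[c_1 X_1/X], [c_2 X_2/X_1], \ldots, [c_p X_p/X_{p-1}]$, which is the first conclusion, and $\sub_{cyc} = [X/X_p]$ is the second. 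Every off-cycle child is a $[\semp]$ constant leaf, so there is a single simple (non-nested) cycle and the tree is flat; its depth is at most $p + 1 \le N + 1 = \mathcal{O}(N)$, giving both termination of {\allSat} and the $\mathcal{O}(N)$ path bound.

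\textbf{Main obstacle.} The whole argument rests on the rotation invariant, so the step that needs care is the one-step computation above: verifying that \matchse\ removes \emph{exactly} one letter per extension (so that each tree edge corresponds to a single rotation and lengths are preserved), and that after the period the resulting equation coincides with the root up to renaming so that \lb\ applies with its subterm condition satisfied. I would invoke the conjugacy characterisation of solvable $X w_1 = w_2 X$ (namely $w_2 = AB$, $w_1 = BA$) only to guarantee $|w_1| = |w_2|$, keeping each $v_i$ nonempty so that Case~1 always applies, and to certify that the rotation $v_{|A|} = w_1$ produces a genuinely satisfiable $[\semp]$ leaf. Once the invariant is in place, finiteness, flatness, the label form, and the linear depth bound all follow mechanically.
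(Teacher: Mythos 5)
Your proof is correct for the lemma as literally stated, and its engine coincides with the paper's: substitute \form{[c_iX_{i+1}/X_i]}, let \matchse{} cancel exactly one leading letter, observe that the constant word on the right-hand side is left-rotated by one position per edge, and close the cycle with \lb{} once the rotation returns, which yields precisely the claimed labels, \form{\sub_{cyc}=[X/X_{k+1}]}, and the \form{\mathcal{O}(N)} depth. The difference is scope and organization. You exploit the paper's definition of regular (acyclic, or \form{Xw_1{=}w_2X} with \form{w_1,w_2\in\alphabet^*} \emph{constant}) so that only Case~1 of \comp{} can ever fire, and everything collapses to a single rotation invariant with a periodicity argument. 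The paper instead proves the lemma by structural induction on the number of dual (twice-occurring) variables, treating the broader shapes \form{Xs_1{=}s_2Xs_3} (case \form{n{=}1}) and the twisted form \form{Xs_1Y{=}Ys_2X} (case \form{n{=}2}), where \form{s_1,s_2,s_3} are string terms that may contain further once-occurring variables; this forces it to also handle the four-child Case~2 of \comp{} (when \form{s_2} begins with a variable), discharging the off-cycle branches via Lemma~\ref{prop.acyclic} and the induction hypothesis. That extra generality is not gratuitous: in the induction for Proposition~\ref{lemma.cfg.phased.twisted.prop}, reducing the first phase of a phased-regular equation leaves the remaining phases as trailing terms, so the nodes actually encountered have the shape \form{X_iw_1\cdot s = v_iX_i\cdot t} rather than your constants-only shape. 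Your invariant class is not closed under that use, so before your lemma could serve as the base case of the phased induction it would need the trailing-context generalization (routine, since the rotation step is insensitive to a common suffix, but it must be stated). Two minor observations: your appeal to the conjugacy characterization (\form{w_2{=}AB}, \form{w_1{=}BA}) is used only to certify a satisfiable \form{[\semp]}-leaf; the paper's proof never invokes solvability, and indeed none of the three conclusions needs it --- the cycle and the linear bound form equally well for unsatisfiable inputs, which is exactly how the cyclic unsatisfiability proofs of Sect.~\ref{sec.impl} arise. Finally, your remark that \lb{} may already close at the minimal rotation period \form{p} dividing \form{\sleng{w_2}}, rather than after exactly \form{\sleng{w_2}} steps as in the paper's bookkeeping, is a harmless sharpening.
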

\begin{proof}
We prove this lemma by structural induction
on \form{n}, the number of duals in the input.
We note that in our following proof,
\form{s_1}, \form{s_2}, \form{s_3}
and \form{s_4} are string terms
and
 \form{X}, \form{Y} are string variables.

\noindent {\bf Case n= 0.} The truth of this case
is shown by
Lemma \ref{prop.acyclic}.

\noindent {\bf Case n= 1.} Wlog, assume that
\form{\se\equiv Xs_1= s_2Xs_3} and \form{X} does not
occur in \form{s_1}, \form{s_2}, \form{s_3}. We consider two following cases.
\begin{enumerate}
\item \form{s_2{\equiv}as_2^1 } where a is a letter.
Then, \form{\se\equiv Xs_1=as_2^1Xs_3} where  all variables in \form{s_1},
\form{s_2^1} and \form{s_3} occur at most once.
Following the definition of function \code{\complete},
\form{\se} has two children \form{\se_1} and \form{\se_2} as follows.
  \begin{enumerate}
    \item \form{\se_1\equiv \se[\semp/X]}. As all variables in 
       \form{\se_1} occur at most one, there is no cycle in the subtree
       with the root is \form{\se_1} (Lemma \ref{prop.acyclic}).
     \item \form{\se_2\equiv \se[aX_1/X]\equiv X_1s_1=s_2^1aX_1s_3}. 
  \end{enumerate}
\item \form{s_2{\equiv}Ys_2^1 } where \form{Y} is a string variable.
 \form{\se\equiv Xs_1=Ys_2^1Xs_3} where \form{Y} and all variables in \form{s_1},
\form{s_2^1} and \form{s_3} occur at most once.
Following the definition of function \code{\complete},
\form{\se} has four children \form{\se_1}, \form{\se_2},
\form{\se_3} and \form{\se_4} as follows.
  \begin{enumerate}
    \item \form{\se_1\equiv \se[\semp/X]}. Similar to Case 1a).
    \item \form{\se_2\equiv \se[YX_1/X]\equiv X_1s_1=s_2^1YX_1s_3}.
    \item \form{\se_3\equiv \se[\semp/Y]}. By induction.
    \item \form{\se_4\equiv \se[XY_1/Y]\equiv s_1=Y_1s_2^1Xs_3}. Now, all variables
    in \form{\se_4} occurs at most once.  Similar to Case 1a).
   \end{enumerate}
\end{enumerate}
If Case 1b) or Case 2b) are kept applying,
after \form{k=\sleng{s_2}} times the node generated
is \form{\se_{k} \equiv X_{k+1}s_1=s_2X_{k+1}s_3}.
Then, function {\lb} links \form{\se_{k}} back to \form{\se} to form
a cyclic proof. It is easy to check that the Lemma holds for this scenario.

\noindent {\bf Case n=2.}
 \form{\se\equiv Xs_1Y=Ys_2X} where
\form{X} and \form{Y} do not
occur in \form{s_1}, \form{s_2}
and
each variable in \form{s_1}, \form{s_2} occurs at most once.
  We consider
following two cases.
Following the definition of function \code{\complete},
\form{\se} has four children \form{\se_1}, \form{\se_2},
\form{\se_3} and \form{\se_4} as follows.
\begin{enumerate}
  \item \form{\se_1\equiv \se[\semp/X]}. By Case n=1.
  \item \form{\se_2\equiv \se[YX_1/X]\equiv X_1s_1Y=s_2YX_1}.
    Consider two following subcases.
    \begin{enumerate}
      \item \form{s_2\equiv as_2^1} and \form{\se_2\equiv  X_1s_1Y=as_2^1YX_1} where \form{a} is a letter. Then, following the definition of function \code{\complete},
        \form{\se_2} has two children:
          \begin{enumerate}
           \item \form{\se_{21}\equiv(X_1s_1Y=as_2^1YX_1)[\semp/X_1]\equiv s_1Y=as_2^1Y}. Case n=1.
           \item \form{\se_{22}\equiv(X_1s_1Y=as_2^1YX_1)[aX_2/X_1]\equiv X_2s_1Y=s_2^1YaX_2}.
          \end{enumerate}
      \item \form{s_2\equiv Zs_2^1}
         and \form{\se_2\equiv  X_1s_1Y=Zs_2^1YX_1}
         where \form{Z} is a string variable. Then, following the definition of function \code{\complete},
        \form{\se_2} has four children:
           \begin{enumerate}
           \item \form{\se_{21}\equiv(X_1s_1Y=Zs_2^1YX_1)[\semp/X_1]\equiv s_1Y=Zs_2^1Y}. Case n=1.
           \item \form{\se_{22}\equiv(X_1s_1Y=Zs_2^1YX_1)[ZX_2/X_1]\equiv X_2s_1Y=s_2^1YZX_2}.
           \item \form{\se_{23}\equiv(X_1s_1Y=Zs_2^1YX_1)[\semp/Z]\equiv X_1s_1Y=s_2^1YX_1}.
           \item \form{\se_{24}\equiv(X_1s_1Y=Zs_2^1YX_1)[X_1Z_1/Z]\equiv s_1Y= Z_1s_2^1YX_1}. Case n=1.
          \end{enumerate}
     \end{enumerate}
  \item \form{\se_3\equiv \se[\semp/Y]}. By Case n=1.
  \item \form{\se_4\equiv \se[XY_1/Y]\equiv s_1XY_1  =Y_1s_2X}. Similar to Case 2.
\end{enumerate}
  If Case 2.b.ii) or Case 2.b.iv) are kept applying,
after \form{k=\sleng{s_2}} times the node generated
is \form{\se_{k} \equiv X_{k+1}s_1Y=Ys_2X_{k+1}}.
Then, function {\lb} links \form{\se_{k}} back to \form{\se} to form
a cyclic proof. It is easy to check that the Lemma holds for this scenario.
It is similar to Case 4. We remark that Case 2 and Case 4 are never
applied in an interleaving sequence.

\hfill \qed
\end{proof}

\subsection{Proof of Proposition \ref{lemma.cfg.phased.twisted.prop}}
\begin{proof}
Wlog, we assume that  \form{\ses} is
 a phased-regular word equation. 
 We prove this Theorem
by structural induction on \form{n}
where the  Lemma 
 \ref{lemma.cfg.twisted.prop}
is used for
the base case.

\hfill \qed
\end{proof}

\end{document}